\newcounter{mnotecount}[section]
\newcommand{\dt}[1]{\frac{d^{#1}\:}{dt^{#1}}\Bigl|_{t=0} }
\newcommand{\AND}{{\quad\text{and}\quad}}
\newcommand{\where}{\quad\text{where}\quad}
\newcommand{\Half}{\ensuremath{\textstyle\frac{1}{2}}}
\newcommand{\norm}[1]{\|#1\|}
\newcommand{\Abb}{\mathbb{A}}
\newcommand{\Ebb}{\mathbb{E}}
\newcommand{\Pbb}{\mathbb{P}}
\newcommand{\Rbb}{\mathbb{R}}
\newcommand{\Mbb}{\mathbb{M}}
\newcommand{\Zbb}{\mathbb{Z}}
\newcommand{\Ec}{\mathcal{E}}
\newcommand{\Bc}{\mathcal{B}}
\newcommand{\Cc}{\mathcal{C}}
\newcommand{\Dc}{\mathcal{D}}
\newcommand{\Fc}{\mathcal{F}}
\newcommand{\Gc}{\mathcal{G}}
\newcommand{\Lc}{\mathcal{L}}
\newcommand{\Mcal}{\mathcal{M}}
\newcommand{\Nc}{\mathcal{N}}
\newcommand{\Uc}{\mathcal{U}}
\newcommand{\Wc}{\mathcal{W}}
\newcommand{\Oc}{\mathcal{O}}
\newcommand{\Oct}{\widetilde{\mathcal{O}}}
\newcommand{\Yc}{\mathcal{Y}}
\newcommand{\Zc}{\mathcal{Z}}
\newcommand{\Ft}{\tilde{F}}
\newcommand{\Gt}{\tilde{G}}
\newcommand{\Lb}{\bar{L}}
\newcommand{\taub}{\bar{\tau}}
\newcommand{\Ub}{\bar{U}}
\newcommand{\Vb}{\bar{V}}
\newcommand{\Gv}{\mathbf{G}}
\newcommand{\ep}{\lambda}
\newcommand{\phit}{\tilde{\phi}}
\newcommand{\phiv}{{\boldsymbol \phi}}
\newcommand{\Phiv}{{\boldsymbol \Phi}}
\newcommand{\psiv}{{\boldsymbol \psi}}
\newcommand{\EdB}{E_{\partial}}
\newcommand{\AdB}{A_{\partial}}
\newcommand{\id}{\mathbf{id}}
\newcommand{\del}{\partial}
\renewcommand{\Re}{{\mathbb R}}         
\newcommand{\la}{\langle}               
\newcommand{\ra}{\rangle}               
\newcommand{\half}{\frac{1}{2}}         
\newcommand{\Lie}{\mathcal L}           
\newcommand{\tr}{\text{\rm tr}}         
\newcommand{\Bo}{\mathcal B}
\newcommand{\Sp}{\mathcal S}
\newcommand{\LL}{\mathcal L}
\newcommand{\EE}{\mathcal E}
\newcommand{\normal}{\nu}
\renewcommand{\div}{\text{div}}
\newcommand{\Om}{\Omega}
\newcommand{\dO}{\partial \Omega}
\newcommand{\ddnu}{\frac{\partial}{\partial \normal}}
\newcommand{\ddnuy}{\frac{\partial}{\partial \normal_Y}}
\renewcommand{\SS}{\mathcal S}
\newcommand{\DD}{\mathcal D}
\newcommand{\VV}{\mathcal V}
\newcommand{\tg}{\widetilde{g}}
\newcommand{\ta}{\widetilde{a}}
\newcommand{\tf}{\widetilde{f}}
\newcommand{\tw}{\widetilde{w}}
\renewcommand{\th}{\widetilde{h}}
\newcommand{\hf}{\widehat{f}}
\newcommand{\hw}{\widehat{w}}
\theoremstyle{plain}
\newtheorem{thm}{Theorem}[section]
\newtheorem{lem}[thm]{Lemma}
\newtheorem{prop}[thm]{Proposition}
\newtheorem{cor}[thm]{Corollary}
\theoremstyle{definition}
\newtheorem{Def}[thm]{Definition}
\theoremstyle{remark}
\newtheorem{rem}[thm]{Remark}
\newtheorem{ex}[thm]{Example}
\title[Dynamical elastic bodies in Newtonian gravity]{Dynamical elastic
  bodies in Newtonian gravity}
\author[L. Andersson]{Lars Andersson}
\email{lars.andersson@aei.mpg.de}
\address{Albert Einstein Institute, Am M\"uhlenberg 1, D-14476 Potsdam,
  Germany}
\author[T.A. Oliynyk]{Todd A. Oliynyk${}^\dagger$}
\thanks{${}^\dagger$ Partially supported by the ARC grant DP1094582
and an MRA grant.}
\address{School of Mathematical Sciences\\
Monash University, VIC 3800\\
Australia}
\email{todd.oliynyk@sci.monash.edu.au}
\author[B. Schmidt]{Bernd G. Schmidt} \email{bernd@aei.mpg.de}
\address{Albert Einstein Institute,
Am M\"uhlenberg 1, D-14476 Potsdam, Germany}
\begin{document}


\begin{abstract}
Well-posedness for the initial value problem for a self-gravi\-ta\-ting
elastic body with free boundary in Newtonian gravity is proved. In the
material frame, the Euler-Lagrange  equation
becomes, assuming suitable constitutive
properties for the elastic material, a fully non-linear elliptic-hyperbolic
system
with boundary conditions of Neumann type. For systems of this type, the
initial data must satisfy compatibility conditions in order to achieve
regular solutions. Given a relaxed reference configuration and a sufficiently
small Newton's constant, a neigborhood of initial data satisfying the
compatibility conditions is constructed.
\end{abstract}

\maketitle

\section{ Introduction} \label{sec:intro}
In Newtonian physics, the two-body problem is solvable for point particles
moving around their common center of gravity on Kepler ellipses. However, if
one considers
extended bodies, the situations changes
drastically. Assuming that a solution exists for extended bodies, one can
show that the centers of mass of the bodies move as point particles, but
existence  could not be established for a long time.

The first sucessful attack on the problem was made by Leon Lichtenstein
\cite{lichtenstein}
who considered self-gravitating
fluid bodies moving on circles about their center of gravity. In this case
the Euler equations for a self-gravitating system become time independent in a
coordinate system co-moving with the bodies. For the case of small
bodies or widely separated large bodies, Lichtensten showed the existence
of such solutions.

Well-posednedness
for the Cauchy problem for fluid bodies with free boundary was proved only
recently. Lindblad  \cite{lindblad:comp}
proved well-posedness for a non-relativistic
compressible liquid body (i.e. positive boundary
density) with free boundary. In this paper one can also find references to earlier work. A different proof
of the result of Lindblad that is valid for both the relativistic and non-relativistic cases was given by Trakhinin \cite{trakhinin:2009}.

The more singular case of fluids with vanishing boundary density is discussed
in \cite{CK3D,JM3D}.
Unfortunately, the problem of proving well-posedness for self-gravitating
compressible fluid bodies with free boundary is still open in general relativity and
even in Newtonian gravity. However, see \cite{lindblad:nordgren:2009} for
the case of self-gravitating incompressible fluids.
See also
\cite{rendall:1992, ChoquetBruhat:2006qn, kind:ehlers} for
results dealing with various restricted versions of the Cauchy problem for
fluid bodies in general relativity.

One can argue that the Cauchy problem should be simpler to handle
if the bodies consist of
elastic material, since such a body can, at least in the case of small
bodies,
be said to 'have a shape of its own'.
In this paper we solve the boundary initial value problem for
self-gravitating
deformations
of a relaxed body. In particular, consider a relaxed (i.e. with
vanishing stress)
elastic body in the absence of gravity.
Then we show existence of  a self-gravitating
solution for initial data close to those of the static relaxed body and for
a small gravitational
constant. Thus, the motion of the body will consist of small nonlinear
oscillations around
the equilibrium
solution.

The Cauchy problem for a Newtonian elastic body with free boundary is, under
suitable constitutive assumptions on the elastic material, a fully
non-linear elliptic-hyperbolic initial-boundary value problem with boundary
values of Neumann type. The case of purely hyperbolic problems of this type
is covered by a theorem of Koch \cite{koch}.
The method of proof of this theorem can be adapted to
include the non-local elliptic
terms in the which appear in the system of differential equations considered
in this paper via the Newtonian
gravitational field.

In order to achieve a regular evolution for the initial-boundary value
problem, the initial data must satisfy certain compatibility conditions. The
problem of constructing an open neighborhood of initial data satisfying
the compatibility conditions given one such data set is discussed in the work of Koch.
The work of Lindblad contains a related
discussion for the fluid case. For the case of a Newtonian elastic body,
we construct, for small values of Newton's constant,
initial data satisfying suitable compatibility conditions
near data for a relaxed elastic body in the absence of gravity.
It should be noted that due to
the presence of the Newtonian gravitational potential this problem is
non-local.

The situation considered in this paper has several
interesting generalizations. If we consider a
large static, self-gravitating body, a relaxed state may not
exist. Restricting to the spherically symmetric case, solutions to
the field equations can be found
by ODE techniques \cite{park:2000}. If we perturb the data slightly,
there should exist solutions with small oscillations around the equilibrium
configuration. Our approach allows one to consider also such more complicated
problems; the essential difficulty is the construction of the initial data
satisfying the compatibility conditions and not the time evolution. However,
the results in this paper do no immediately cover these more general
situations. In particular, a proof of a suitable version of the well-posedness
result for the Cauchy problem, in these situations, requires a
more general treatment of potential theory in Riemannian manifolds. This
problem will be considered in a separate paper.

Given a solution $u$
to the Cauchy problem for the Newtonian elastic body which exists for for
times $t \in [0,T]$, one may conclude from
the main result of this paper, see theorem \ref{lethm}, that by choosing
initial data sufficiently close to that of $u$, the time of existence
of the resulting solution can be arbitrarily close to $T$.
In particular, given a
a static solution, there exist nearby data such that the corresponding
solution to the Cauchy problem has a time of existence not less than any
given time $T$.
The proof of
existence of static self-gravitating bodies in Newtonian theory
\cite{BS:PRS2003} could be used together with
a generalization of theorem \ref{lethm} that allows for more general initial
data, as alluded to above, as well as a generalization of the construction
of initial data to show that
shows that nearby data define solutions which exist up to
some time $T$.

The treatment of two fluid balls in steady rotation due to Lichtenstein,
which was mentioned above, has been
generalized to the case of elastic bodies, see \cite{beig:schmidt:celest}.
Using these solutions in the
manner just described, it is possible to obtain
classes of solutions of the two body problem
in any prescribed finite time interval.
We leave these problems for later investigations.

Global existence for small data is an important question. For unbounded fluids in 3-dimensions,
it is now known that shocks form for a large class of initial data \cite{sideris:shocks,christodoulou:shocks},
and it is widely believed that for most equations of state shocks will always form for arbitrary small perturbation
of constant density state. On the other hand, there do exists
small data global existence results for
unbounded elastic bodies provided the material satisfies
certain additional conditions, cf.  \cite{sideris:nullcond:1996,domanski:ogden:2006}.
For elastic bodies of finite extent, it appears to be an open question
whether there exists solutions that are global in time or under what conditions
shocks form.

\subsection*{Overview of this paper}
The paper is organized as follows.
Section \ref{newelas} describes Newtonian elasticity, sets up the basic
equations and gives the conditions we impose on the material. We derive the
equations in the material frame
and in spacetime from a variational
principle.
Section \ref{compatsec}
deals with the problem of finding solutions to the
compatibility conditions needed for the proof of local existence.
Due to the non-local terms in the
equations, these condition imply conditions on the Cauchy data on the whole
initial surface.  To find initial data satisfying these conditions, we make
use of some results from potential theory. These are developed in section \ref{sec:potential}.
Further, the Poisson equation must
be studied in the material frame, see section \ref{newt}.
The results concerning the linearized elasticity operator which are needed
can be found in section \ref{leop}.
Section 4 generalizes Koch's theorem and proves our main theorem.
Appendix \ref{sec:prelim} contains some background material for
the function spaces used in this paper.

\sect{newelas}{Newtonian elasticity}

\subsection{Kinematics} \label{sec:kinematics}
The \emph{body} $\Bo$ is an open, connected, and bounded
set with a $C^\infty$ boundary in Euclidean space $\Re^3_\Bo$. We refer to
$\Re^3_\Bo$ as the \emph{extended body}.
We consider \emph{configurations}, i.e. maps
$f: \Re^3_{\Sp} \to \Bo$, and deformations
$\phi: \Bo \to
\Re^3_{\Sp}$ with
\begin{equation}\label{eq:fphirel}
f \circ \phi = \id_{\Bo} .
\end{equation}
Thus, the physical body is the domain in \emph{space} $\Re^3_\Sp$ given by
$f^{-1}(\Bo) = \phi(\Bo)$.

We shall make use of the extension $\phit$ of $\phi$ to a map $\Re^3_{\Bo}
\to \Re^3_{\Sp}$ and let $(X^A)_{A=1,2,3}$ and $(x^i)_{i=1,2,3}$ denote global Cartesian coordinates
on the body $\Re^3_{\Bo}$ and the configuration $\Re_{\Sp}$ spaces, respectively.
Since we shall consider the
Newtonian dynamics of a body, we let $f, \phi$ depend on time, denoted by
$t$. Equation
(\ref{eq:fphirel}) gives
\begin{equation}\label{eq:fphirel-time}
f^A (t, \phi(t,X)) = X^A \quad \text{ in } \Bo \AND \phi^i (t,f(t,x)) =
x^i \quad \text{ in } f^{-1} (\Bo) .
\end{equation}

Writing  $x^\mu = (t, x^i)$, we introduce $f^A{}_\mu = \partial_\mu f^A$ and $\phi^k{}_A = \partial_A
\phi^k$. In particular, $f^A{}_0 = \partial_t f^A$.
We have
\begin{equation}\label{eq:phif-inv}
\phi^k{}_A f^A{}_\ell = \delta^k{}_\ell \quad  \text{ and } \quad f^B{}_k \phi^k{}_A =
\delta^B{}_A
\end{equation}
where these expressions are defined.
This implies
\begin{equation}\label{dphidf}
\frac{\partial \phi^i{}_A}{\partial f^B{}_k} = - \phi^i{}_B \phi^k{}_A,
\quad \text{ and  }
\quad \frac{\partial f^B{}_k}{\partial \phi^i{}_A} = - f^B{}_i f^A{}_k .
\end{equation}

Let
\begin{equation}\label{eq:chidef}
\chi_{f^{-1}(\Bo)} = \left\{ \begin{array}{ll} 1 & \text{in } f^{-1}(\Bo) \\
0 & \text{in } \Re^3_{\Sp} \setminus f^{-1}(\Bo) \end{array} \right .
\end{equation}
be the indicator function of the support of the
physical body. Using the above identities, one may calculate the variation
  of $\chi_{f^{-1}(\Bo)}$ with respect to $f^A$,
\begin{equation}\label{eq:ddchi}
\frac{\partial \chi_{f^{-1}(\Bo)}}{\partial (f^A)} = \phi^i{}_A \partial_i
\chi_{f^{-1}(\Bo)}
\end{equation}

Let $H^{AB} = f^A{}_{,i} f^B{}_{,j} \delta^{ij}$ and define
$H_{AB}$ by $H_{AB} H^{BC} = \delta_A{}^C$. Then $H_{AB} = \phi^i{}_A
\phi^j{}_B \delta_{ij}$. We have
\begin{equation}\label{eq:dHdf}
\frac{\partial H^{AB}}{\partial (f^C{}_k)} = 2 \delta^{kn} \delta^{(A}{}_C
f^{B)}{}_n .
\end{equation}
Differentiating (\ref{eq:phif-inv}) gives the usual formulas for the
derivative of the inverse,
\begin{align}\label{eq:dmuphi}
\partial_\mu \phi^i{}_A &= - \phi^i{}_B \partial_\mu f^B{}_k \phi^k{}_A \,,\\
\partial_B f^A_i &= - f^A{}_j \partial_B \phi^j{}_C f^C{}_i \label{eq:dBf} \,.
\end{align}
We let
$v^\mu \partial_\mu = \partial_t + v^i \partial_i$
be defined by $v^\mu f^A{}_\mu = 0$. This determines  the vector field $v^i(x)$
on $\Re^3_{\Sp}$ uniquely in terms of $f$.
The velocity field
$v^\mu \partial_\mu$ describes the trajectories of
material particles. From the relation $v^\mu \partial_\mu f^A = 0$, we get
$$
v^i = - \phi^i{}_A f^A{}_0 .
$$
On the other hand, time differentiating (\ref{eq:fphirel-time}) gives
$$
v^i(t,x) = (\partial_t \phi^i)(t,f(t,x)) .
$$
Thus, we have
\begin{align*}
\partial_t^2 \phi^k(t,X) &= \partial_t (v^k(t,\phi(t,X)))  \\
&= (\partial_t v^k)(t,\phi(t,X)) + \partial_\ell v^k (t,\phi(t,X)) \partial_t
\phi^\ell(t,X) \\
&= (v^\mu \partial_\mu v^k)(t,\phi(t,X)) .
\end{align*}
Further,
\begin{equation}\label{eq:vmunufA}
v^\mu v^\nu \partial_\mu \partial_\nu f^A = - v^\mu \partial_\mu v^k f^A{}_k,
\end{equation}
which shows that
\begin{equation}\label{eq:HvmunufA}
H_{AB} v^\mu v^\nu \partial_\mu \partial_\nu f^B = - v^\mu \partial_\mu v^m
\delta_{mn} \phi^n{}_A
.
\end{equation}
The body  $\Bo$ carries a reference
volume element defined by a 3-form $V_{ABC}$ on $\Bo$.
The number density $n$ is defined, cf.
\cite[eq. (3.2)]{ABS}, by
$$
f^A{}_{i} f^B{}_{j} f^C{}_{k} V_{ABC}(f(x)) = n(x) \epsilon_{ijk}(x)
$$
where $\epsilon_{ijk}$ is the volume element of the Euclidean metric
$\delta_{ij}$ on $\Re^3_{\Sp}$. Since we are
considering the Newtonian case with Euclidean geometry, we have simply
$$
n = \det Df
.
$$
The mass density is
\begin{equation}\label{eq:massdensity}
\rho = n m
\end{equation}
where $m$ is the specific mass of the particles, i.e. $m V$ is the mass
density\footnote{\label{foot:1}Our techniques allow for $m$ to be a function on the body, $m=m(X)$, but
for simplicity, we will assume that $m$ is constant.}
for the material in its natural state,
see \cite{ABS}. See also \cite{BeSch03}
where the specific mass is denoted $\rho_0$. We have the relation
\begin{equation}\label{eq:dndf}
\frac{\partial n}{\partial f^A{}_i} = n \phi^i{}_A
.
\end{equation}

The following equations for $n$ follow from the above definitions, using
(\ref{eq:dndf}),
\begin{equation} \label{eq:continuity}
\partial_\mu (n v^\mu) = 0
,
\end{equation}
and
\begin{equation} \label{eq:nphi}
\partial_k (n \phi^k{}_A) = 0
.
\end{equation}
The elastic material is
described by the stored energy function
$$
\epsilon = \epsilon(f^A, H^{AB}).
$$
The various forms of the stress
tensor\footnote{The sign of the stress tensor here is the opposite of that of
$\sigma_{AB}$ defined in
\cite[eq. (3.5)]{ABS}, but agrees with the usage in
\cite[equation (4.2)]{BeSch03}.}
are defined from the
stored energy function via
\begin{align*}
\tau_{AB} &= + 2 \frac{\partial \epsilon}{\partial H^{AB}} \,, \quad
\tau_{ij} = n f^A{}_{i} f^B{}_{j} \tau_{AB} \,, \\
\quad \tau_i{}^A &=
f^B{}_{i} \tau_{BC} H^{CA} \,, \quad \tau_A{}^i = n \tau_{AB} f^B{}_j \delta^{ji}
\,.
\end{align*}
We have
\begin{align*}
\frac{\partial \epsilon}{\partial (f^A{}_i)} &= n^{-1} \tau_A{}^i
, \\
\frac{\partial \epsilon}{\partial (\phi^i{}_A)} &= - \tau_i{}^A
.
\end{align*}

The elasticity tensor $L_i{}^A{}_k{}^B$ is defined by
\begin{equation}\label{eq:elast-L}
L_i{}^A{}_k{}^B = \frac{\partial \tau_i{}^A}{\partial (\phi^k{}_B)}.
\end{equation}
It follows from this definition and the assumptions above that
the elasticity tensor
has the symmetries
\leqn{Lprop1}{
L^{iAjB}=L^{jBiA}= L^{iABj} = L^{AijB}
}
where
\eqn{Lprop2}{
L^{iAjB} = \delta^{il}\delta^{jk} L_{l}^A{}_{k}{}^{B}.
}
Clearly, we also have
$$
\partial_A \tau_i{}^A = L_i{}^A{}_k{}^B \partial_A \partial_B \phi^k.
$$

\subsection{Variational formulation} \label{sec:variational}
We derive the field equations for a self-gravitating elastic body from the
elastic action, supplemented by a term giving
Newton's force law and Newton's law of
gravitation. The Lagrange density is of
the form $\Lambda \epsilon_{0123}$ where $\epsilon_{0123}$ is the
4-volume element on spacetime $\Re \times \Re^3_{\Sp}$ in Cartesian
coordinates, and
$$
\Lambda = \Lambda^{grav} + \Lambda^{pot} + \Lambda^{kin} + \Lambda^{elast}
$$
where
\begin{align*}
\Lambda^{grav} &= \frac{|\nabla U|^2}{8\pi G} ,\\
\Lambda^{pot} &= \rho U \chi_{f^{-1}(\Bo)}  ,\\
\Lambda^{kin} &= \half \rho v^2  \chi_{f^{-1}(\Bo)}  , \\
\Lambda^{elast} &= - n \epsilon \chi_{f^{-1}(\Bo)}  \,,
\end{align*}
with $\chi_{f^{-1}(\Bo)}$ given by \eqref{eq:chidef},
$$
|\nabla U|^2 = \partial_i U \partial_j U \delta^{ij} ,
$$
and
$$
v^2 = v^i v^j \delta_{ij} .
$$
\subsubsection{Eulerian picture}
The action in the Eulerian picture then takes the form
$$
\LL = \int \Lambda \epsilon_{0123} dx^0 dx^1 dx^2 dx^3
$$
with $\Lambda = \Lambda(U, \partial_i U , f^A, f^A{}_0, f^A{}_i)$.
The Euler-Lagrange equations
are of the form $\EE_A = 0$, $\EE_U = 0$ with
\begin{align*}
- \EE_A &= \partial_\mu \frac{\partial \Lambda}{\partial (\partial f^A{}_\mu)}
- \frac{\partial \Lambda}{\partial (f^A)} \,, \\
- \EE_U &= \partial_i \frac{\partial \Lambda}{\partial (\partial_i U)}
- \frac{\partial \Lambda}{\partial U} \,.
\end{align*}
We have
$$
- \EE_U = \frac{\Delta U }{4\pi G} - \rho \chi_{f^{-1} (\Bo)} .
$$

Next, we consider the Euler-Lagrange terms generated by variations
with respect to the configuration $f^A$. A calculation using \eqref{eq:nphi}
shows that the factor $\rho \chi_{f^{-1}(\Bo)}$ gives no contribution to the
Euler-Lagrange equations, and hence the kinetic term in the action gives
\begin{align*}
- \EE_A^{kin} &=  \partial_\mu \frac{\partial \Lambda_{kin}}{\partial
  (f^A{}_\mu)}
- \frac{\partial \Lambda^{kin}}{\partial (f^A)} \\
&= [
\partial_k ( \half \rho \phi^k{}_A v^2 )
  - \partial_t (\rho \delta_{mn} v^m \phi^n{}_A )
  - \partial_k (\rho \delta_{mn} v^m v^k \phi^n{}_B)
] \chi_{f^{-1}(\Bo)}
,
\end{align*}
which after some calculations, using (\ref{eq:continuity}) and
(\ref{eq:nphi}), gives
\begin{align*}
-\EE_A^{kin} &= - \rho v^\mu \partial_\mu v^m  \delta_{mn}  \phi^n{}_A
\chi_{f^{-1}(\Bo)}
\\
\intertext{use (\ref{eq:HvmunufA})}
&= \rho H_{AB} v^\mu v^\nu f^A
\chi_{f^{-1}(\Bo)}
.
\end{align*}
The elastic term gives, using (\ref{eq:nphi})
and \eqref{eq:ddchi},
\begin{align*}
- \EE_A^{elast} &= \partial_i \frac{\Lambda^{elast}}{\partial (f^A{}_i)}
 - \frac{\partial \Lambda^{elast}}{\partial(f^A)}
\\
&=
 - \partial_i ( \tau_A{}^i \chi_{f^{-1}(\Bo)} )
.
\end{align*}
In view of \cite[lemma 2.2]{ABS}, we have that the divergence
$\partial_i (\tau_A{}^i
\chi_{f^{-1}(\Bo)} ) $ is integrable only if the zero traction boundary
condition
$$
\tau_A{}^j n_j \big{|}_{\partial f^{-1}(\Bo)} =0
$$
holds, in which case the identity
$$
\partial_i ( \tau_A{}^i \chi_{f^{-1}(\Bo)} ) = \partial_i  \tau_A{}^i  \chi_{f^{-1}(\Bo)}
$$
is valid.

Finally, the potential term gives, using \eqref{eq:nphi} and \eqref{eq:ddchi},
\begin{align*}
- \EE_A^{pot} &= \partial_i \frac{\Lambda^{pot}}{\partial (f^A{}_i)} -
\frac{\partial \Lambda^{pot}}{\partial f^A} \\
&= \partial_i (\rho \phi^i{}_A U \chi_{f^{-1}(\Bo)})
- \phi^i{}_A \partial_i \chi_{f^{-1}(\Bo)} \\
&= \rho \phi^i{}_A \partial_i U \chi_{f^{-1}(\Bo)}
.
\end{align*}
Adding the terms, we have
\begin{equation} \label{eq:EEA}
- \EE_A = \rho H_{AB} v^\mu v^\nu \partial_\mu \partial_\nu f^A - \partial_i \tau_A{}^i + \rho
  \phi^i{}_A \partial_i U \quad \text{ in } f^{-1}(\Bo) ,
\end{equation}
subject to the boundary condition
$$
\tau_A{}^j n_j \big{|}_{f^{-1}(\Bo)} = 0 .
$$
Hence,
we find that
the Euler-Lagrange equations $\EE_A = 0, \EE_U = 0$ are equivalent to the
system
\begin{subequations} \label{eq:euler}
\begin{align}
- \rho H_{AB} v^\mu v^\nu \partial_\mu \partial_\nu f^B + \partial_i
\tau_A{}^i &= \rho \phi^i{}_A \partial_i U ,
\quad \text{ in }
f^{-1} (\Bo) \label{eq:dyn-eul} \\
\Delta U &= 4\pi G \rho \chi_{f^{-1}(\Bo)}, \\
\tau_A{}^i n_i \big{|}_{\partial f^{-1}(\Bo)} &= 0\label{eq:bound-eul}
\end{align}
\end{subequations}
where $n_i$ is the unit outward pointing normal to $\partial f^{-1}(\Bo)$.
These are the field equations for a dynamical elastic body in Newtonian
gravity displayed in the Eulerian frame. We note that, by using (\ref{eq:HvmunufA}) and
multiplying by $f^A{}_i$, equations (\ref{eq:dyn-eul}) and
(\ref{eq:bound-eul}) take the form
\begin{equation}\label{eq:dyn-eul-vv}
\rho  v^\mu \partial_\mu v_i  + \partial_j
\tau_i{}^j = \rho \partial_i U
\quad \text{ in }
f^{-1} (\Bo) \AND \tau_i{}^j n_j \big{|}_{\partial f^{-1} (\Bo)} = 0 .
\end{equation}
\subsubsection{Material frame}
Similarly, the action in the material frame is given by
\begin{align*}
\LL^{mat} &= \int \phi^* (\Lambda \epsilon_{0123}) dX^0 dX^1 dX^2 dX^3
\\
&= \int J \Lambda^{mat} dX^0 dV_{ABC} dX^A dX^B dX^C
\end{align*}
where
$$
J = (\phi^*n)^{-1} = \det(\del_A\phi^i)
$$
is the Jacobian of $\phi$, and
$\Lambda^{mat} = \Lambda^{mat}(U, \phi^i, \phi^i_0, \phi^i{}_A)$
is given by the relation
\begin{equation}\label{eq:JLammat}
J \Lambda^{mat} = J \frac{|\nabla \bar U|^2_H}{8\pi G} +
(m \bar U + \half m
v^2 - \bar \epsilon) \chi_{\Bo}
\end{equation}
where
$\chi_{\Bo}$ is the indicator function of the support of the body and
$\bar U, \bar \epsilon$ are defined along the lines of
\cite{ABS}. In particular, $\bar U = U \circ \phi$, and
$$
|\nabla \bar U|^2_{H} = H^{AB} \partial_A \bar U \partial_B \bar U
$$
is the pullback to $\Re^3_{\Bo}$ of $|\nabla U|^2$ where we recall that
\eqn{Hfdef}{H^{AB} = f^A{}_i f^B{}_j \delta^{ij} \AND f_i{}^A = (\partial_A \phi^i )^{-1}.
}
Similarly, $\bar \epsilon = \epsilon \circ \phi$ so that $\bar \epsilon =
\bar \epsilon (f^A{}_i,H^{AB})$. We note also that
$$
\det (H^{AB}) = \det (f^A{}_i)^2 = J^2 .
$$

The Euler-Lagrange equations in material frame are
$\EE_i = 0, \EE_{\bar U} = 0$ where
\begin{align*}
- \EE_i &= \partial_A \frac{\partial \LL^{mat}}{\partial (\phi^i{}_A)} -
  \frac{\partial \LL^{mat}}{\partial (\phi^i)}, \\
- \EE_{\bar U} &= \partial_A \frac{ \partial \LL^{mat}}{\partial (\partial_A
  \bar U)} - \frac{\partial \LL^{mat}}{\partial (\bar U)}
.
\end{align*}
This gives the system of equations
\begin{subequations}
\lalign{eq:material}{
-m \partial_t^2 \phi^i + \partial_A (\bar{\tau}^i{}^A) &= m \delta^{ij}f^A_j\del_A \Ub \quad \text{\rm in } \Bo,   \label{eq:material.1} \\
\Delta_{H} \Ub &= 4 \pi G J^{-1} m \chi_{\Bo}\quad \text{\rm in }\Re^3_{\Bo} , \label{eq:material.2} \\
\nu_A\bar{\tau}^i{}^A |_{\partial\Bo} & =0  \label{eq:material.3}
}
\end{subequations}
where $\nu_A$ is the unit outward pointing normal to $\del\Bc$,
$$
\bar{\tau}^i{}^A = \delta^{ik}J (f^A{}_{j} \tau_k{}^j )\circ \phi
$$
is the Piola transform of $\tau_i{}^j$, and
\eqn{ebdef}{
H = H_{AB} dX^A dX^B = \phi^*(\delta_{ij}dx^i dx^j)
}
is the pull back of the Euclidean metric $\delta_{ij}dx^i dx^j$ under the map $\phi$. Observe that in
\eqref{eq:material.2},
we could also have used the
notation
$\bar \rho = J^{-1} m$ since $J^{-1} = n \circ \phi$.

Rescaling time and the Newtonian potential, we can write the evolution equations \eqref{eq:material.1}-\eqref{eq:material.3} in the
form
\begin{subequations}
\lalign{evolve}{
-\del_t^2 \phi^i + \del_A \taub^{Ai} - \ep^2 \delta^{ij}f^A_j\del_A\Ub & = 0 \quad \text{\rm in } \Bc, \label{evolve:1} \\
\Delta_{H} \Ub & = J^{-1} m \chi_{\Bc} \quad \text{\rm in }\Re^3_{\Bc}  , \label{evolve:2} \\
\nu_A \taub^{Ai}|_{\del \Bc} & = 0 \label{evolve:3}
}
\end{subequations}
where
\eqn{epdef}{
\ep^2 = 4\pi m G.
}
We note that the Poisson equation \eqref{evolve:2} can be written out more explicitly as
\leqn{PoissonC}{
\del_A\bigl(J H^{AB} \del_B \Ub \bigr)  = m \chi_{\Bc}.
}

It follows from the symmetry properties \eqref{Lprop1} of the elasticity tensor that
\leqn{Lbdef}{
\Lb^{iA}{}_{j}{}^{B} = \frac{\del \taub^{iA}}{\del \del_B\phi^j}
}
satisfies
\leqn{Lbprop1}{
\Lb^{iAjB}=\Lb^{jDiB}= \Lb^{iABj} = \Lb^{AijB}
}
where
\eqn{LBprop2}{
\Lb^{iAjB} = \delta^{jk} \Lb^{iA}{}_{j}{}^{B}.
}

Since we will be working in the material representation for the remainder of the article, we
will drop the $\Bc$ from the body space $\Rbb_{\Bc}$ and denote it simply by $\Rbb^3$.
For latter use, we define the following nonlinear
functionals:
\leqn{funcs}{
E^i(\phi) = \del_A\bigl(\taub^{iA}(\del\phi) \bigr), \AND
\EdB^i(\phi) =    \tr_{\del\Bc}\nu_A \taub^{iA}(\del\phi)
}
where $\del\phi = (\del_A \phi^i)$.

\subsect{assume}{Constitutive conditions}
We make the following assumptions on the elastic material:
\begin{enumerate}
\item $\taub^{iA}$ is a smooth function of its arguments $(\del\phi)$ in the neighborhood
of the identity map
\eqn{psi0def}{
\psi_0^i \: : \: \Rbb^3 \longrightarrow \Rbb^3 \: : \: (X^i) \longmapsto (\psi_0^i(X) = X^i),
}
\item \label{point:2}
the identity map
is an equilibrium solution to \eqref{evolve:1}-\eqref{evolve:3} for $\ep=0$, i.e.
\eqn{equilib}{
\bigl(E^i(\psi_0),\EdB^i(\psi_0)\bigr) = (0,0),
}
and
\item
\eqn{adef}{
a^{iA}{}_{j}{}^{B} = \Lb^{iA}{}_{j}{}^{B}\Bigl|_{\del\phi=\del\psi_0}
}
satisfies the following properties:
\begin{enumerate}
\item there exists an $\omega > 0$ such that
\eqn{aprop3}{
a^{iAjB}(X)\xi_i\xi_A\eta_j\eta_B \geq \omega |\xi|^2 |\eta|^2
}
for all $\xi,\eta \in \Rbb^3$, and
\item there exists a $\gamma >0$ such that
\eqn{aprop4}{
\Half a^{iAjB}\sigma_{iA}\sigma_{jB} \geq \gamma |\sigma|^2
}
for all $\sigma=(\sigma_{iB})\in \Mbb_{3\times 3}$
with $\sigma_{iB} = \sigma_{Bi}$.
\end{enumerate}
\end{enumerate}
We remark that condition (\ref{point:2}) above
is satisfied for a stored energy function
which has
a minimum at some reference configuration. See
\cite[equations (3.20)-(3.22)]{ABS}.

\section{Construction of initial data}\label{compatsec}

In order to prove the existence of dynamical solutions to the evolution equations \eqref{evolve:1}-\eqref{evolve:3}, we first
need to construct initial data that satisfy the \emph{compatibility conditions} to a sufficiently high order. The compatability
conditions are defined as follows.
\begin{Def} \label{compat}
Fixing $s>3/2+1$, we say that the initial data
\eqn{compata}{(\phi^i|_{t=0},\del_t|_{t=0}\phi^i)=(\phi^i_0,\phi^i_1)\in W^{s+1,2}(\Bc,\Rbb^3)\times
W^{s,2}(\Bc,\Rbb^3)}
satisfies the \emph{compatibility conditions to order $r$} $(0\leq r \leq s)$ if there exists maps
\eqn{compatb}{
\phi^i_\ell \in W^{s+1-\ell,2}(\Bc,\Rbb^3) \quad \ell=2,3,\ldots,r
}
that satisfy
\alin{evolvecompat}{
\del_t^{\ell-2}\bigl(-\del_t^2 \phi^i + \del_A \taub^{Ai} - \ep^2 \delta^{ij}f^A_j\del_A\Ub\bigr)\bigl|_{t=0} & = 0 \quad \text{\rm in } \Bc,  \\
\del_t^\ell\bigl(\Delta_{H} \Ub - J^{-1} m \chi_{\Bc}\bigr)\bigl|_{t=0} &= 0\quad \text{\rm in }\Re^3  , \\
\del_t^\ell \bigl((\nu_A \taub^{Ai})|_{\del \Bc}\bigr)\bigl|_{t=0} & = 0,
}
for $\ell=0,1,2,\ldots r$ where, after formally differentiating, we set $\del_t^\ell|_{t=0} \phi^i = \phi^i_\ell$.
\end{Def}

\subsection{Potential theory} \label{sec:potential}
Before we can solve the problem of existence of
initial data satisfying the compatibility conditions and the time evolution of this data, we first need to
develop some potential theory. To begin, we set
$\Bc_{+} = \Bc$, $\Bc_{-} = \Rbb^3\setminus \overline{\Bc}$, and let
$$
E(X,Y) = - \frac{1}{4\pi |X-Y|}
$$
denote the Newton potential so that $\Delta E = \delta$.
We also let $\SS$ and $\DD$ denote
the single and double layer potentials
\begin{align*}
\SS [f] (X) &= \int_{\del\Bc} E(X,Y) f(Y)\, d\sigma(Y) \quad X \notin \del \Bc
\intertext{and}
\DD [f] (X) &= \int_{\del \Bc} \ddnuy E(X,Y) f(Y)\, d\sigma(Y) \quad X \notin \del \Bc,
\end{align*}
respectively, where $d\sigma$ is the induced surface measure on $\del \Bc$, and $\normal$ is the outward pointing normal.
Restricting to $\del \Bc$, we have for $X \in \del \Bc$ that
\begin{equation*}
\SS [f] \big{|}_{\del \Bc} (X)  = \int_{\del\Bc} E(X,Y) f(Y)\, d\sigma(Y) \AND \DD [f] \big{|}_{\del \Bc_{\pm}} (X) = \bigl(\pm \Half I + K\bigr)[f](X)
\end{equation*}
where
$$
K [f](X) = P.V. \int_{\del \Bc} \ddnuy E(X,Y) f(Y)\, d\sigma(Y).
$$
Further,
\begin{equation*}
\frac{\del\;}{\del\nu} \SS[f]\big{|}_{\del \Bc_{\pm}}  = \bigl(\pm I + K^*\bigr)[f]
\end{equation*}
where $K^*$ is the adjoint of $K$.
We recall the following well known relations between the boundary value problems for
$\Delta$ in $\Bc_{\pm}$ and these potentials:
\begin{itemize}
\item[(i)] The solution to the Dirichlet problem
\begin{equation*}\label{eq:dirich-q}
\Delta u = 0, \quad \tr_{\del \Bc} u = \psi
\end{equation*}
on $\Bc_{\pm}$ is given by
\begin{equation*}
u = \DD[f]
\end{equation*}
where f solves
\begin{equation*}
\bigl(\pm \Half I + K\bigr) [f] = \psi.
\end{equation*}
\item[(ii)] The solution to the Neumann problem
\begin{equation*}
\Delta u = 0, \quad \tr_{\del \Bc}\, \ddnu u = \psi
\end{equation*}
on $\Bc_{\pm}$ is given by
\begin{equation*}
u = \SS[f]+ C
\end{equation*}
where $f$ solves
\begin{equation*}
\bigl(\mp\Half + K^*)[f] = \psi,
\end{equation*}
and $C$ is an arbitrary constant. The solution exists if and only if $\int_{\del \Bc} \psi \, d\sigma = 0$.
\end{itemize}

For the moment, we consider the interior Dirichlet and Neumann problems on $\Bc_+ = \Bc$. The solution for the Dirichlet
problem has the property that
\begin{equation}\label{eq:Direst}
u \in W^{s,p}(\Bc) \quad \text{ if }  \tr_{\del \Bc} u \in
B^{s-1/p,p}(\del \Bc),
\end{equation}
while for the Neumann problem, we have
$$
u \in W^{s,p}(\Bc) \quad \text{ if } \tr_{\del\Bc} \ddnu u \in B^{s-1-1/p,p}(\del \Bc).
$$
These results are classical, see \cite{ADN:I}.
Defining the volume potential of a density $f$ in $\Bc$ to be
$$
\VV[f](X) = \Delta^{-1} (f \chi_{\Bc})(X) = \int_{\Bc} E(X,Y) f(Y)\, d^3 Y,
$$
we abuse notation and say that $\VV[f] \in W^{k,p}(\Bc)$ if
the restriction to $\Bc$ has this property, and similarly for the other
potentials.
Using the fact that
$\partial_{X^A} E(X,Y) = - \partial_{Y^A} E(X,Y)$, we have
$$
\partial_{A}\VV[f](X) = -\int_{\Bc} \partial_{Y^A} E(X,Y) f(Y)\, d^3 Y.
$$
This gives, after a partial integration,
\begin{equation} \label{eq:dVV}
\partial_{A}\VV[f] =  \VV[\partial_{A} f] - \SS[ \tr_{\del \Bc} f
  \normal^A ].
\end{equation}

Let $u$ be the solution of the Dirichlet problem
$$
\Delta u = 0, \quad \tr_{\del \Bc} u = \psi.
$$
By Green's theorem, we have
\begin{align*}
\int_{\Bc} (\Delta_Y u(Y) E(X,Y) - & u(Y) \Delta_Y E(X,Y) )\, d^3 Y = \\
&\quad \int_{\del\Bc} \left (\ddnuy u(Y) E(X,Y) - u(Y) \ddnuy E(X,Y)\right ) d\sigma(Y).
\end{align*}
Since $u$ solves the Dirichlet problem with boundary data $\psi$,
and
\leqn{diffE}{
\Delta_Y E(X,Y) = \delta(X-Y),
} this gives
\begin{equation}\label{eq:DD}
\DD[u]|_{\Bc} = \SS[\tr_{\del\Bc} \ddnu u] + u.
\end{equation}
Upon taking the limit from the interior at $\del\Bc$, we have
$$
(\Half I + K)[u]|_{\del\Bc} = S[\tr_{\dO} \ddnu u] + u.
$$

Consider a metric $g_{AB}$ on $\Re^3$ with covariant derivative $\nabla_A$.
Let $\{e_a\}_{a=1,2}$ be a tangential frame on $\del\Bc$, and let
$h_{ab}$ denote the induced metric on $\del\Bc$ with
covariant derivative $D$.
Let a vector field $\xi$ be given.
Decompose $\xi$ into tangential and normal component at
$\del\Bc$,
$$
\xi = P^a e_a + \la \xi , \normal\ra \normal.
$$
Introduce a Gaussian foliation near $\del\Bc$. Then $g$ takes the form
$$
g_{AB} dX^A dX^B = dr^2 + h_{ab}(y,r)dy^a dy^b
$$
where $y^a$ are coordinates on $\del\Bc$.
Extending $\normal$ in a neighborhood of $\del\Bc$ using a Gaussian foliation,
we have
$$
\nabla_A \xi^A = h^{ab} \nabla_a \xi_b + (\nabla \xi)(\normal,\normal).
$$
The last term vanishes due to $\nabla_{\normal} \normal = 0$, which is valid in a
Gauss foliation. Then we have
$$
\nabla_A \xi^A = h^{ab} \nabla_a \xi_b = h^{ab} ( D_a P_b - \lambda_{ab} \la \xi , \normal \ra )
$$
where $\lambda_{ab} = \la \nabla_a \normal, e_b \ra = \Half \Lie_{\normal}
h_{ab}$ is the second fundamental form. Let $H = h^{ab} \lambda_{ab}$ denote
the mean curvature of $\del\Bc$. Then we have
$$
D_a P^a = H \la \xi , \normal \ra + \nabla_A \xi^A.
$$
Now specialize to the Euclidean case; let $g_{AB}= \delta_{AB}$
be the Euclidean metric
and let $\xi = \partial_{A}$ for some fixed $A$. Then $\nabla_A \xi^A = 0$, and
we have
\leqn{mean}{
D_a P^a = H \normal^A.
}
For $X \in \Bc$, we calculate using \eqref{diffE} and the divergence theorem that
\begin{align*}
\partial_{A} \SS[f](X) &= -\int_{\del\Bc} f(Y) P^a D_a E(X,Y)\, d\sigma(Y)
 \\
&\quad - \int_{\del\Bc} f(Y) \la \normal , \partial_{Y^A} \ra \ddnuy E(X,Y)\, d\sigma(Y) \\
&= \int_{\del\Bc} (D_a P^a  f(Y) + P^a D_a f ) E(X,Y)\, d\sigma(Y) \\
&\quad
- \int_{\del\Bc} f(Y) \la \normal , \partial_{y^A} \ra \ddnuy E(x,y)\, d\sigma(Y).
\end{align*}
Thus we have by the above result and \eqref{mean} that
\begin{equation}\label{eq:dSS}
\partial_{A} \SS[f] = \SS[ f H \normal^A + \partial_{A}^{\parallel} f]
- \DD[f \normal^A].
\end{equation}
\begin{prop} \label{prop:SD}
The operators $S, \Half I + K$ have the
mapping properties
\begin{subequations}\label{eq:KSmap}
\begin{align}
\Half I + K: & B^{k-1/p,p} (\dO) \to B^{k-1/p,p}(\dO) ,\\
S : & B^{k-1-1/p,p}(\dO) \to
B^{k-1/p,p} (\dO) ,
\end{align}
\end{subequations}
for $s = k-1/p$, $k \geq 1$, $k$ an integer.
The corresponding statements for the single and double layer potentials are
\begin{subequations} \label{eq:DDSS}
\begin{align}
\DD &:\  B^{k-1/p,p}(\dO) \to W^{k,p}(\Om) \\
\SS &:\  B^{k-1-1/p,p}(\dO) \to W^{k,p}(\Om)
\end{align}
\end{subequations}
for $k \geq 1$, $k$ integer.
\end{prop}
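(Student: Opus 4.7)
The plan is to prove the four mapping properties in one interlocked induction on $k\geq 1$, with the interior statements \eqref{eq:DDSS} for $\SS$ and $\DD$ as primary and the boundary operator statements \eqref{eq:KSmap} recovered by trace. Indeed, the jump relations collected just before the proposition give $\tr_{\dO}\SS[f] = S[f]$ and $\tr_{\dO_+}\DD[f] = (\Half I+K)[f]$, and since the trace map $W^{k,p}(\Om) \to B^{k-1/p,p}(\dO)$ is continuous, \eqref{eq:KSmap} is an immediate consequence of \eqref{eq:DDSS}.

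For the base case $k=1$, I would appeal to the classical $L^p$ theory of layer potentials on a $C^\infty$ boundary: the mappings $\SS: B^{-1/p,p}(\dO) \to W^{1,p}(\Om)$ and $\DD: B^{1-1/p,p}(\dO) \to W^{1,p}(\Om)$ follow from Calder\'on-Zygmund estimates for the principal-value integrals defining $\nabla\SS$ and $\DD$, together with the elliptic boundary estimates of \cite{ADN:I}.

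The inductive step splits into two substeps. Assuming \eqref{eq:DDSS} at level $k$, I would apply the differentiation identity \eqref{eq:dSS},
\begin{equation*}
\del_A \SS[f] = \SS[fH\nu^A + \del_A^{\parallel}f] - \DD[f\nu^A],
\end{equation*}
to a density $f \in B^{k-1/p,p}(\dO)$. Because $\dO$ is $C^\infty$, multiplication by the smooth functions $H$ and $\nu^A$ preserves every Besov class, so $fH\nu^A$ and $f\nu^A$ lie in $B^{k-1/p,p}(\dO)$ and $\del_A^{\parallel}f$ lies in $B^{k-1-1/p,p}(\dO)$; the inductive hypothesis places each of these three terms in $W^{k,p}(\Om)$, so $\del_A \SS[f]\in W^{k,p}(\Om)$ and hence $\SS: B^{k-1/p,p}(\dO) \to W^{k+1,p}(\Om)$, i.e.\ \eqref{eq:DDSS} for $\SS$ at level $k+1$. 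For the $\DD$ statement at level $k+1$, given $f \in B^{k+1-1/p,p}(\dO)$, I let $u$ be the harmonic extension of $f$ into $\Om$; the Dirichlet estimate \eqref{eq:Direst} gives $u \in W^{k+1,p}(\Om)$ and consequently $\tr_{\dO}\ddnu u \in B^{k-1/p,p}(\dO)$. The identity \eqref{eq:DD} then reads
\begin{equation*}
\DD[f] = u + \SS[\tr_{\dO}\ddnu u],
\end{equation*}
and the just-established $\SS$ mapping at level $k+1$ places the right hand side in $W^{k+1,p}(\Om)$.

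The main obstacle is the base case: the $L^p$ boundedness of the principal value operators $K, K^*$ and the correct identification of the non-tangential boundary values of $\nabla\SS$ and $\DD$ require the full force of singular integral theory on smooth surfaces. Once the base case is secured, the inductive loop between $\SS$ (via the tangential differentiation formula \eqref{eq:dSS}) and $\DD$ (via harmonic extension combined with \eqref{eq:DD}) is essentially mechanical, and the curvature factor $H\nu^A$ in \eqref{eq:dSS} is harmless thanks to the smoothness of the boundary.
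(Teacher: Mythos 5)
Your proof is correct and follows essentially the same route as the paper: an induction on $k$ anchored at the $k=1$ case from singular integral theory, with the inductive step driven by the identities \eqref{eq:DD} and \eqref{eq:dSS} and the boundary statements recovered by trace. The only (harmless) difference is the order within the inductive step — you establish $\SS$ first via \eqref{eq:dSS} and then get $\DD$ from the undifferentiated identity \eqref{eq:DD}, whereas the paper handles $\DD$ first by differentiating \eqref{eq:DD}; both orderings close the induction.
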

\begin{proof}
We use induction to reduce the statement to the case $k=1$. This case follows
from \cite{FMM}, see also \cite{mitrea:taylor:besov}.
Suppose then that we have proved the statement for $k-1$.
To do the induction, assume $f \in
B^{k-1/p,p}(\dO)$. Let $u$ be the solution
to the Dirichlet problem with boundary data $f$. Then $u \in W^{k,p}(\Om)$,
and setting
 $\psi = \tr_{\dO} \ddnu
u$, we have $\psi \in B^{k-1-1/p,p}$. Equations
(\ref{eq:DD}) and (\ref{eq:dSS}) give
\begin{align*}
\partial_{x^i} \DD[f] &=
\partial_{x^i} \SS[\psi] + \partial_{x^i} u \\
&= \SS[\psi H \normal^i + \partial_{x^i}^{\parallel} \psi] - \DD[\psi \normal^i]
+ \partial_{x^i} u
\end{align*}
which by the induction assumption is in $W^{k-1,p}$. It follows that $\DD[f]
\in W^{k,p}(\Om)$ and hence $(\Half I + K) f \in B^{k-1/p,p}(\dO)$. This
proves the statement for for $\DD$ and $(\Half I + K)$ at regularity $k$.

Next, for $\SS$, we use the equation (\ref{eq:dSS}) for $f \in
B^{k-1-1/p,p}(\dO)$. Using the induction assumption and the statement for
$\DD$ and $(\Half I + K)$ just proved, we have
$$
\partial_{x^i} \SS[f] \in W^{k,p},
$$
which gives $\SS[f] \in W^{k,p}(\dO)$, and $S[f] \in B^{k-1/p,p}(\dO)$. This
completes the induction and the result follows.
\end{proof}

\begin{ex} \label{ex:1,p}
Let $f \in W^{1,p}(\Om)$. Then $\tr_{\dO} f \in
B^{1-1/p,p}$ and $\VV[\partial_{x^i} f] \in W^{2,p}(\Om)$. Further,
$\SS[\tr_{\dO} f \normal^i]$ solves a Dirichlet problem with boundary data
$S[\tr_{\dO} f \normal^i] \in B^{2-1/p,p}(\dO)$, and hence $\SS[\tr_{\dO} f
  \normal^i] \in W^{2,p}$. It follows, in view of (\ref{eq:dVV}),
that $(\VV f)\chi_{\Om} \in
W^{3,p}(\Om)$.
\end{ex}

\begin{ex} Let $f \in W^{2,p}(\Bc)$. We have
$$
\partial_{x^i} \VV[f] = \VV[\partial_{x^i} f] + \SS[\tr_{\dO} f \normal^i].
$$
Since $\partial_{x^i} f \in W^{1,p}(\Bc)$, we have from
Example \ref{ex:1,p} that $\VV[\partial_{x^i} f] \in W^{3,p}(\Bc)$. Further,
$\tr_{\dO} f \in B^{2-1/p,p}(\dO)$ and hence $\SS[\tr_{\dO} f \normal^i] \in
W^{3,p}(\Bc)$. Therefore $\VV[f] \in W^{4,p}(\Bc)$.
\end{ex}
\begin{prop} \label{invLapmap1} Let $k \geq 1$, and assume
$f \in W^{k,p}(\Bc)$. Then $\VV[f] \in W^{k+2,p}(\Bc)$.
\end{prop}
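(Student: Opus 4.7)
The plan is to prove Proposition \ref{invLapmap1} by induction on $k$, leveraging the commutator identity \eqref{eq:dVV}
$$
\partial_{A}\VV[f] = \VV[\partial_{A} f] - \SS\bigl[\tr_{\del\Bc} f \, \normal^A\bigr]
$$
together with the mapping properties of $\SS$ established in Proposition \ref{prop:SD}. The two examples immediately preceding the proposition already verify the cases $k = 1, 2$, so they serve as the base cases of the induction.

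For the inductive step, assume the statement holds at level $k-1$ for some $k \geq 2$, and suppose $f \in W^{k,p}(\Bc)$. I would differentiate $\VV[f]$ using \eqref{eq:dVV} and analyze each summand separately. First, $\partial_A f \in W^{k-1,p}(\Bc)$, so the induction hypothesis applied to $\partial_A f$ gives $\VV[\partial_A f] \in W^{k+1,p}(\Bc)$. Second, the trace theorem yields $\tr_{\del\Bc} f \in B^{k-1/p,p}(\del\Bc)$, hence $\tr_{\del\Bc} f \, \normal^A \in B^{k-1/p,p}(\del\Bc)$ since $\del\Bc$ is smooth. Applying the single layer mapping property \eqref{eq:DDSS} with regularity index $k+1$ (so that the boundary regularity index $(k+1)-1-1/p = k-1/p$ matches), we obtain $\SS[\tr_{\del\Bc} f \, \normal^A] \in W^{k+1,p}(\Bc)$. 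Combining these two facts through \eqref{eq:dVV} shows $\partial_A \VV[f] \in W^{k+1,p}(\Bc)$ for each $A$, which then yields $\VV[f] \in W^{k+2,p}(\Bc)$, closing the induction.

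The only non-routine input is the base case $k=1$, already handled in Example \ref{ex:1,p}; it ultimately rests on the classical $L^p$ Calderón--Zygmund estimate $\|\VV[g]\|_{W^{2,p}} \lesssim \|g\|_{L^p}$ for the Newton potential together with the $k=1$ case of Proposition \ref{prop:SD}, which itself is cited from \cite{FMM,mitrea:taylor:besov}. Given these inputs, the inductive step is essentially bookkeeping: matching the regularity indices of $\VV$ and $\SS$ so that the identity \eqref{eq:dVV} exchanges one derivative of the volume potential for one derivative of its source plus a single layer potential of the trace.

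The main obstacle — if there is one — is bookkeeping the Besov index shifts carefully: one must check that the single layer term lands in $W^{k+1,p}$, not merely in $W^{k,p}$, which is why \eqref{eq:DDSS} needs to be invoked at the index $k+1$ rather than at $k$. This is where the half-integer loss $1/p$ in the Besov scale exactly cancels the half-integer gain from the trace, so the identity is sharp and the induction closes without any regularity deficit.
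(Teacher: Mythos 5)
Your proposal is correct and follows essentially the same route as the paper: induction on $k$ via the identity \eqref{eq:dVV}, with Example \ref{ex:1,p} as the base case, the induction hypothesis handling $\VV[\partial_A f]$, and Proposition \ref{prop:SD} handling the single layer term. The only cosmetic difference is that you invoke the interior mapping property \eqref{eq:DDSS} of $\SS$ directly at index $k+1$, whereas the paper reaches the same conclusion by combining the boundary mapping property \eqref{eq:KSmap} of $S$ with the Dirichlet regularity statement \eqref{eq:Direst}.
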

\begin{proof}
The proof proceeds by induction, with base case $k=1$. For this case, the
statement follows by the argument in Example \ref{ex:1,p}. Suppose we have
proved the statement for $k-1$. We will make use of
the identity (\ref{eq:dVV}). By induction, $\VV[\partial_{x^i} f] \in
W^{k+1,p}(\Omega)$. Further, $\tr_{\dO} f \in B^{k-1/p}$ and hence by
(\ref{eq:KSmap}), $S[\tr_{\dO} f \normal^i] \in B^{k+1-1/p}(\dO)$. It follows by
(\ref{eq:Direst}) that $\SS[\tr_{\dO} f \normal^i] \in W^{k+1,p}(\Bc)$. This
shows that $\partial_{x^i} \VV[f] \in W^{k+1,p}(\Bc)$ and hence $\VV[f] \in
W^{k+2,p}(\Bc)$.
\end{proof}

Similar arguments combined with the mapping properties \eqref{Deltaiso}-\eqref{invLap} of the
Laplacian on the weighted Sobolev spaces can be used to establish the following
proposition for the volume potential on $\Bc_{-}$.
\begin{prop} \label{invLapmap2} Let $k \geq 1$, $-1<\delta<0$, and assume
$f \in W^{k,p}_{\delta-2}(\Bc_-)$. Then $\VV[f] \in W^{k+2,p}_\delta(\Bc_-)$.
\end{prop}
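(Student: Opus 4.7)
The plan is to mirror the proof of Proposition \ref{invLapmap1} by induction on $k$, with the weighted Laplacian isomorphism \eqref{Deltaiso}--\eqref{invLap} playing the role that the classical interior $\Delta^{-1}$ estimate played in the interior case. Proposition \ref{prop:SD} carries over verbatim because $\del\Bc$ is compact and those statements are purely local near the boundary.

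The key identity is the exterior analog of \eqref{eq:dVV}: integrating by parts on $\Bc_-$, whose outward normal on $\del\Bc$ is $-\normal$, one finds
\begin{equation*}
\partial_A \VV[f] \;=\; \VV[\partial_A f] \;+\; \SS\bigl[\tr_{\del\Bc} f\, \normal^A\bigr],
\end{equation*}
with the boundary-term sign flipped relative to the interior. Combined with the $L^p$-level mapping property $\VV : L^p_{\delta-2}(\Bc_-) \to W^{2,p}_\delta(\Bc_-)$ provided directly by \eqref{Deltaiso}--\eqref{invLap}, this identity is the engine of the induction.

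For the base case $k=1$, I would apply the $L^p$-level estimate to $\partial_A f \in L^p_{\delta-2}$ to obtain $\VV[\partial_A f] \in W^{2,p}_\delta$; then use the trace theorem to get $\tr_{\del\Bc} f \in B^{1-1/p,p}(\del\Bc)$ so that Proposition \ref{prop:SD} yields $\SS[\tr_{\del\Bc} f\, \normal^A] \in W^{2,p}$ near $\del\Bc$, which upgrades to $W^{2,p}_\delta(\Bc_-)$ using the decay at infinity discussed below. Together these give $\partial_A \VV[f] \in W^{2,p}_\delta$, and hence $\VV[f] \in W^{3,p}_\delta$. The inductive step is the identical decomposition: at order $k$, the inductive hypothesis applied to $\partial_A f \in W^{k-1,p}_{\delta-2}$ gives $\VV[\partial_A f] \in W^{k+1,p}_\delta$, and the trace-plus-layer-potential argument places $\SS[\tr_{\del\Bc} f\, \normal^A] \in W^{k+1,p}_\delta(\Bc_-)$, so that $\partial_A \VV[f] \in W^{k+1,p}_\delta$ for each $A$ and therefore $\VV[f] \in W^{k+2,p}_\delta(\Bc_-)$.

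The only point where the argument genuinely departs from Proposition \ref{invLapmap1}, and the place I expect the most care is needed, is verifying that $\SS[g]$ sits in the \emph{weighted} spaces rather than only in the local Sobolev spaces produced by Proposition \ref{prop:SD}. This is handled by observing that $\SS[g]$ is harmonic outside the compact set $\del\Bc$ and admits a multipole expansion with $\partial^\alpha \SS[g](X) = O(|X|^{-1-|\alpha|})$; a direct computation shows that such a function lies in $W^{k,p}_\delta(\Bc_-)$ exactly for the weight range $\delta \in (-1,0)$, which is precisely the hypothesis of the proposition and explains why that restriction on $\delta$ is essential.
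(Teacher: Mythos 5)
You follow the route the paper gestures at (the induction of Proposition \ref{invLapmap1} combined with the weighted isomorphism \eqref{Deltaiso}--\eqref{invLap}), and your exterior identity $\partial_A\VV[f]=\VV[\partial_A f]+\SS[\tr_{\del\Bc}f\,\normal^A]$ with the flipped boundary sign is correct. However, the induction does not close as you have set it up, because of the weight bookkeeping. In the convention used here (cf. \eqref{wSobdiff}), $u\in W^{k+2,p}_\delta(\Bc_-)$ requires $\partial^\alpha u\in L^p_{\delta-|\alpha|}$, so to conclude $\VV[f]\in W^{k+2,p}_\delta$ you need $\partial_A\VV[f]\in W^{k+1,p}_{\delta-1}$, one weight \emph{lower} than the $W^{k+1,p}_{\delta}$ your argument delivers. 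The step ``$\partial_A\VV[f]\in W^{k+1,p}_\delta$ for each $A$, therefore $\VV[f]\in W^{k+2,p}_\delta$'' is valid in the unweighted interior setting of Proposition \ref{invLapmap1} but fails in $W^{k,p}_\delta(\Bc_-)$: it only yields $\partial^\alpha\VV[f]\in L^p_{\delta-|\alpha|+1}$, which is strictly weaker decay than required. The deficit is already visible in the base case: $\partial_A f\in L^p_{\delta-3}$, and using the lossy inclusion $L^p_{\delta-3}\subset L^p_{\delta-2}$ before applying \eqref{Deltaiso} places the third derivatives of $\VV[f]$ only in $L^p_{\delta-2}$, not $L^p_{\delta-3}$.

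The obstruction cannot be removed term by term: both $\VV[\partial_A f]$ and $\SS[\tr_{\del\Bc}f\,\normal^A]$ generically carry a monopole $\sim c\,|X|^{-1}$ at infinity, and $|X|^{-1}\in L^p_{\delta-1}(\Bc_-)$ only for $\delta>0$, which is excluded here. What saves the proposition is that the two monopoles cancel, since $\int_{\Bc_-}\partial_A f+\int_{\del\Bc}f\,\normal^A=0$ by the divergence theorem (the flux at infinity vanishes for this range of weights); the sum then decays like $|X|^{-2}$ with each further derivative gaining a power, which is exactly what membership in $W^{k+1,p}_{\delta-1}$ demands. So your proof must either invoke this cancellation explicitly, or be restructured --- for instance by splitting $f=f_1+f_2$ with $f_1$ compactly supported near $\del\Bc$ (handled by the interior argument near $\del\Bc$ together with the observation that $\VV[f_1]$ is harmonic near infinity with $\partial^\alpha\VV[f_1]=O(|X|^{-1-|\alpha|})$, hence in $W^{m,p}_\delta$ for every $m$ once $\delta>-1$) and $f_2$ vanishing near $\del\Bc$ (so that $f_2\chi_{\Bc_-}\in W^{k,p}_{\delta-2}(\Rbb^3)$ and \eqref{Deltaiso} applies directly). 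A minor related point: your claim that a function with $\partial^\alpha u=O(|X|^{-1-|\alpha|})$ lies in $W^{k,p}_\delta(\Bc_-)$ ``exactly for $\delta\in(-1,0)$'' overstates the role of the upper bound; only $\delta>-1$ is needed for that membership, while $\delta<0$ enters through the invertibility of $\Delta$ in \eqref{Deltaiso}.
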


\subsect{newt}{The Poisson equation in the material frame} The next step in solving the problems of the existence of
initial data satisfying the compatibility conditions and the time evolution of this data is to
establish a number of smoothness properties for solutions to the Poisson equations in the
material frame. We begin by defining the spaces
\eqn{Wcdef}{
\Wc^{k,s,p}_\delta(\Rbb^3) = \bigl\{ \, u \in W^{k,p}_\delta(\Rbb^3,V)\, \bigl| \; u|_{\Bc} \in W^{k+s,p}(\Bc)
\AND  u|_{\Bc_-} \in W^{k+s,p}_\delta(\Bc_{-})\, \bigr\}
}
for
$1<p< \infty$, $k\in \Zbb$ and $\delta \in \Rbb$.
It is not difficult to verify that these spaces are complete with respect to the norm
\leqn{norm}{
\norm{u}_{\Wc^{k,s,p}_\delta(\Rbb^3)} = \norm{u|_{\Bc}}_{W^{k+s,p}(\Bc)}+ \norm{u|_{\Bc_-}}_{W^{k+s,p}_\delta(\Bc_-)} +
\norm{u}_{W^{k,p}_\delta(\Rbb^3)},
}
and hence Banach spaces.

\begin{thm} \label{newtA}
Suppose $1<p<\infty$, $s\in \Zbb_{\geq 0}$, $s+1>3/p$, and $-1 < \delta < 0$. Then there exist an open neighborhood $\Oct^{s+2,p}\subset W^{s+2,p}(\Bc,\Rbb^3)$
of $\psi_0$, and an analytic
map
\eqn{newtA1}{
\Ub \: :\: \Oct^{s+2,p} \longrightarrow \Wc^{2,s,p}_\delta(\Rbb^3) \: : \: \phi \longmapsto \Ub(\phi)
}
such that
\begin{enumerate}
\renewcommand{\theenumi}{(\roman{enumi})}
\renewcommand{\labelenumi}{(\roman{enumi})}
\item \label{point:i}
$\Ub(\phi)$ satisfies the Poisson equation \eqref{evolve:2} on $\Rbb^3$,
\item \label{point:ii}
for each $\phi\in \Oct^{s+2,p}$, the map $\phit = \psi_0 + \Ebb_\Bc\bigl(\phi-\psi_0|\Bc\bigr)$ is a $C^1$ diffeomorphism on $\Rbb^3$
that satisfies
$\phit-\psi_0 \in W^{2+s,p}_{-10}(\Rbb^3,\Rbb^3) \subset
C^1_{-10}(\Rbb^3,\Rbb^3)$ and
$\phit^{-1}-\psi_0 \in W^{2+s,p}_{-10}(\Rbb^3,\Rbb^3)$,
\item \label{point:iii}
$U = \Ub(\phi)\circ \phit^{-1} \in W^{2,p}_{\delta}(\Rbb^3)$ satisfies the Poisson equation $\Delta U = m\rho\chi_{\phi(\Bc)}$ on $\Rbb^3$
and $U(x)=\text{\rm o}(|x|^\delta)$ as $|x|\rightarrow \infty$, and
\item \label{point:iv}
for $k\in \Zbb_{\geq 1}$ and $k \leq s+1$, the derivative of
  $\Ub$ can be extended to
act on $W^{k,p}(\Bc)$, and moreover, the map
\eqn{Gmap9a}{
\Oct^{s+2,p} \ni \phi \longmapsto  D\Ub(\phi) \in L(W^{k,p}(\Bc))
}
is well defined and analytic\footnote{For a Banach space $X$, $L(X)$ denotes the set of continuous linear operators on $X$}.
\end{enumerate}
\end{thm}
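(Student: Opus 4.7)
The plan is to build $\Ub(\phi)$ as the pullback of the standard Newton potential of the physical body $\phi(\Bc)$ under a globally-extended diffeomorphism $\phit$, and to read off the stated regularity and analyticity from the potential theory of Section~\ref{sec:potential} combined with standard composition estimates. First, fix a bounded linear Stein-type extension $W^{s+2,p}(\Bc,\Rbb^3)\to W^{s+2,p}(\Rbb^3,\Rbb^3)$ composed with a compactly-supported cutoff, call it $\Ebb_\Bc$, and set $\phit=\psi_0+\Ebb_\Bc(\phi-\psi_0)$ so that $\phit-\psi_0\in W^{s+2,p}_{-10}(\Rbb^3,\Rbb^3)$. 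Because $s+1>3/p$, the Sobolev embedding $W^{s+2,p}\hookrightarrow C^1$ makes $\phit$ a small $C^1$-perturbation of the identity, hence a global $C^1$-diffeomorphism, and standard Moser/Triebel composition estimates place $\phit^{-1}-\psi_0$ in the same Sobolev space; this proves~\ref{point:ii}.

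\textbf{The formula, the PDE, and the regularity.} Define
$$
\Ub(\phi)(X)\;=\;-\frac{m}{4\pi}\int_\Bc\frac{dY}{|\phit(X)-\phi(Y)|}.
$$
Changing variables $y=\phi(Y)$ in the integral gives $U(x):=\Ub(\phi)\circ\phit^{-1}(x)=\VV_{\phi(\Bc)}[m\bar\rho](x)$ with $\bar\rho=J^{-1}\circ\phi^{-1}$, so $U$ is the Newton potential of a compactly-supported density and satisfies $\Delta U=m\bar\rho\,\chi_{\phi(\Bc)}$ on $\Rbb^3_\Sp$ with decay $U(x)=O(|x|^{-1})=o(|x|^\delta)$ for $-1<\delta<0$. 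This is~\ref{point:iii}; pulling back by the isometry $\phit:(\Rbb^3,\phit^*\delta)\to(\Rbb^3_\Sp,\delta)$ converts this to $\Delta_H\Ub=mJ^{-1}\chi_\Bc$ on $\Rbb^3$, which is~\ref{point:i}. For the regularity, composition with the $C^1$-close-to-identity $\phit$ preserves $\Wc^{2,s,p}_\delta$, so it suffices to place $U$ there: Proposition~\ref{invLapmap1} (with constants uniform in $\phi$ near $\psi_0$) gives $U|_{\phi(\Bc)}\in W^{s+2,p}$ from $m\bar\rho\in W^{s,p}$, while the compact support of the source together with Proposition~\ref{invLapmap2} yields $U|_{\Bc_-}\in W^{s+2,p}_\delta$ and the global $W^{2,p}_\delta$-estimate.

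\textbf{Analyticity, part~\ref{point:iv}, and the main obstacle.} Analyticity of $\phi\mapsto\Ub(\phi)$ follows by splitting the integral into near- and far-diagonal pieces: away from the diagonal the integrand is jointly analytic in $\phi$ with uniform bounds, while near the diagonal one makes the change of variables $Y\mapsto Z=\phit(X)-\phi(Y)$, whose Jacobian and inverse are analytic functions of $\phi$ precisely because $s+1>3/p$ places $\phi$ in $C^1$. For~\ref{point:iv}, formal differentiation yields
$$
D\Ub(\phi)[\delta\phi](X)\;=\;-\frac{m}{4\pi}\int_\Bc\frac{\bigl(\phit(X)-\phi(Y)\bigr)\cdot\bigl(\delta\phit(X)-\delta\phi(Y)\bigr)}{|\phit(X)-\phi(Y)|^3}\,dY,
$$
which I then rewrite, using integration by parts and the identities~\eqref{eq:dVV} and~\eqref{eq:dSS}, as a combination of volume, single-layer, and double-layer potentials acting on $\delta\phi$; the mapping properties in Propositions~\ref{prop:SD}, \ref{invLapmap1}, and~\ref{invLapmap2} then produce the continuous extensions claimed. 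The main obstacle is this last extension: the derivative must be continuous on $W^{k,p}(\Bc)$ for every $1\le k\le s+1$, strictly below the $W^{s+2,p}$-regularity on which $\Ub$ itself is defined, so derivatives must be moved off $\delta\phi$ onto the singular Newton kernel via the boundary representation formulas, while simultaneously one tracks uniformity of the potential-theoretic estimates as the body is deformed by $\phi$.
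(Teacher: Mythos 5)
Your route is genuinely different from the paper's. You work in the spatial frame: you write $\Ub(\phi)$ as the explicit domain-perturbed Newton potential $-\tfrac{m}{4\pi}\int_\Bc|\phit(X)-\phi(Y)|^{-1}dY$ and try to read off regularity and analyticity from the potential theory of Section~\ref{sec:potential}. The paper never leaves the material frame for the construction: it pulls the Laplacian back to the fixed body, where the operator $\del_A(JH^{AB}\del_B\,\cdot\,)$ depends on $\phi$ only through $\del\phi$ via matrix inversion and determinants, so analyticity of the coefficient map follows from the multiplication inequalities \eqref{mlem}, \eqref{wmlem}, and then the analytic implicit function theorem applied to $F(\psi,\Ub)=\Delta^{-1}\bigl(\del_A(JH^{AB}\del_B\Ub)-m\chi_\Bc\bigr)$ produces $\Ub$ together with its analyticity into $\Wc^{2,s,p}_\delta$ in one stroke. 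Part \ref{point:iv} is then obtained by a second implicit-function-theorem argument for the linearized equation (the map $G$, with the perturbation $\theta=\delta\phi$ allowed to live in $W^{1,p}(\Bc)$ only), followed by interpolation between $k=1$ and $k=s+2$. Your formula is correct, and parts \ref{point:ii}, \ref{point:iii} are fine, but the spatial-frame route puts all the difficulty into exactly the place where your argument is thinnest.

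The genuine gap is the analyticity of $\phi\mapsto\Ub(\phi)$ with values in the \emph{high-regularity} space $\Wc^{2,s,p}_\delta(\Rbb^3)$, and consequently part \ref{point:iv}. The near/far-diagonal splitting with the change of variables $Z=\phit(X)-\phi(Y)$ yields, with only $C^1$ control of $\phi$, analyticity into a low-regularity target (essentially $W^{2,p}$); it does not give $W^{s+2,p}(\Bc)$-valued analyticity, because to reach $s+2$ derivatives in $X$ you must repeatedly trade $\del_X$ for $\del_Y$ plus boundary terms, i.e.\ you need the analogues of \eqref{eq:dVV} and \eqref{eq:dSS} on the \emph{deformed} domain $\phi(\Bc)$, and then the analytic dependence of the resulting single- and double-layer potentials on the moving boundary $\phi(\del\Bc)$. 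Propositions \ref{prop:SD}, \ref{invLapmap1}, and \ref{invLapmap2}, which you invoke, are all statements for the fixed domain $\Bc$ and say nothing about shape dependence; "constants uniform in $\phi$ near $\psi_0$" is not enough, since analyticity in $\phi$ is what is claimed. Relatedly, the remark that "composition with the $C^1$-close-to-identity $\phit$ preserves $\Wc^{2,s,p}_\delta$" conceals a loss of derivatives: the composition operator $(\phi,u)\mapsto u\circ\phit^{-1}$ is bounded at fixed regularity but is not analytic (not even $C^1$) as a map between Sobolev spaces of the same order, since each $\phi$-derivative costs a derivative of $u$; this is precisely why the paper avoids all compositions and formulates everything on the fixed body. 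For part \ref{point:iv} you correctly identify the obstacle but do not resolve it: your integral formula for $D\Ub(\phi)[\delta\phi]$ handles $k=1$ (the kernel is $O(|X-Y|^{-2})$), but pushing to $k\le s+1$ again requires the boundary representation formulas on $\phi(\del\Bc)$ with analytic $\phi$-dependence, or else the paper's device of characterizing $D\Ub(\phi)\cdot\delta\phi$ as the solution of a linear elliptic problem on the fixed body with coefficients depending polynomially on $\del\phi$ and $\del\,\delta\phi$, plus interpolation. Without one of these, the claimed extension in \eqref{Gmap9a} is not established.
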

\begin{proof}
\textbf{\ref{point:i}}
Fix $1<p<\infty$, $s+1>3/p$, and $-1 < \delta < 0$. Given $\psi \in W^{2+s,p}(\Bc)$, we define
\eqn{phitdef}{
\phit = \psi_0 + \Ebb_{\Bc}(\psi), \quad
\phi^i_{A} = \del_A \phit^i, \AND
(f^A_i)  = (\phi^i_A)^{-1}.
}
Since matrix inversion and the determinant both define analytic maps in a neighborhood of the identity, it follows
from \eqref{wmlem}, proposition 3.6 of \cite{Heil}, the continuity of extension and differentiation,
the analyticity of continuous linear maps, and the property that the composition of analytic maps are again analytic that there exists a $R>0$ such that the maps
\leqn{detJmap}{
B_R(W^{2+s,p}(\Bc,\Rbb^3))\ni \psi \longmapsto \det(\phi^i_A)-1 \in W^{1+s,p}_{-11}(\Rbb^3)
}
and
\leqn{invJmap}{
B_R(W^{2+s,p}(\Bc,\Rbb^3))\ni \psi \longmapsto (f^A_i - \delta^A_i) \in W^{1+s,p}_{-11}(\Rbb^3,\Mbb_{3\times 3} )
}
are well defined and analytic.
Recalling that
$H^{AB} = f^{A}_i \delta^{ij} f^{B}_j$,
we see from the same arguments that the map
\leqn{mpb}{
B_R(W^{2+s,p}(\Bc,\Rbb^3))\ni \psi \longmapsto J H^{AB} -\delta^{AB} \in W^{1+s,p}_{-11}(\Rbb^3,\Mbb_{3\times 3} )
}
is analytic where $J=\det(\phi^i_A)$. Using the multiplication inequalities \eqref{mlem} and \eqref{wmlem}, we find that
\lalign{PoissonmapABC}{
\norm{\del_A\bigl(J H^{AB} \del_B \Ub \bigr)\bigl|_{\Bc}}_{W^{s,p}(\Bc)} & \lesssim \norm{\psi}_{W^{2+s,p}
(\Bc,\Rbb^3)}\norm{\Ub|_{\Bc}}_{W^{2+s,p}(\Bc)},
\label{PoissonmapA}\\
\norm{\del_A\bigl(J H^{AB} \del_B \Ub \bigr)\bigl|_{\Bc_-}}_{W^{s,p}_\delta(\Bc_-)} &
\lesssim \norm{\psi}_{W^{2+s,p}(\Bc,\Rbb^3)}\norm{\Ub|_{\Bc_-}}_{W^{2+s,p}_\delta(\Bc_-)}
\label{PoissonmapB}
\intertext{and}
\norm{\del_A\bigl(J H^{AB} \del_B \Ub \bigr)}_{W^{0,p}_{\delta-2}(\Rbb^3)} &\lesssim \norm{\psi}_{W^{2+s,p}(\Bc,\Rbb^3)}\norm{\Ub}_{W^{2,p}_\delta(\Rbb^3)}
\label{PoissonmapC}.
}
From the analyticity of the map \eqref{mpb}, the bilinear estimates \eqref{PoissonmapA}-\eqref{PoissonmapC}, the analyticity of continuous bilinear maps, and the property that the composition of analytic maps are again analytic, it follows
that the
\leqn{Poissonmap}{
B_R(W^{2+s,p}(\Bc,\Rbb^3))\times \Wc^{2,s,p}_{\delta}(\Rbb^3) \ni (\psi,\Ub) \longmapsto
\del_A\bigl(J H^{AB} \del_B \Ub \bigr) \in \Wc^{0,s,p}_{\delta-2}(\Rbb^3)
}
is well defined and analytic. Together, \eqref{Deltaiso}, and Propositions \ref{invLapmap1} and \ref{invLapmap2} imply that
\leqn{NsmoothA}{
\Delta^{-1}(\chi_\Bc) \in \Wc^{2,s,p}_{\delta}(\Rbb^3),
}
and the Laplacian
\leqn{NsmoothB}{
\Delta \: :\: \Wc^{2,s,p}_{\delta}(\Rbb^3) \longrightarrow \Wc^{0,s,p}_{\delta-2}(\Rbb^3)
}
is an isomorphism with inverse given by \eqref{invLap}.

From \eqref{Poissonmap}, \eqref{NsmoothA}, and \eqref{NsmoothB}, we see that
\begin{align*}
F  &: B_R(W^{2+s,p}(\Bc,\Rbb^3))\times \Wc^{2,s,p}_{\delta}(\Rbb^3)
\longrightarrow W^{2,s,p}_{\delta}(\Rbb^3) ,
 \\
&\quad (\psi,\Ub) \longmapsto \Delta^{-1}\Bigl(\del_A\bigl(J H^{AB} \del_B \Ub \bigr)-m\chi_\Bc\Bigr)
\end{align*}
is well defined and analytic. Evaluating $F$ at $\psi=0$ gives
\eqn{Fmap2}{
F(0,\Ub) = \Delta^{-1}\Bigl(\Delta \Ub - m\chi_\Bc\Bigl),
}
which shows that
\leqn{Fmap3}{
\Ub_0 = m \Delta^{-1}(\chi_{\Bc}) \in W^{2,s,p}_{\delta}(\Rbb^3)
}
satisfies
\leqn{Fmap4}{
F(0,\Ub_0) = 0.
}
Also, by the linearity of $F$ in its second argument and the invertibility
of the Laplacian, it is clear that
\leqn{Fmap5}{
D_2F(0,\Ub)\cdot \delta\Ub = \Delta^{-1}\Bigl(\Delta \delta\Ub\Bigr) = \delta \Ub.
}
Results \eqref{Fmap4} and \eqref{Fmap5} allow us to apply an analytic version of the implicit function
theorem (see \cite{Deim}, theorem 15.3) to conclude the existence of a unique analytic map, shrinking $R$ if necessary,
\leqn{Ubmap1}{
\Ub\: : \: B_R(W^{2+s,p}(\Bc,\Rbb^3))\longrightarrow W^{2,s,p}_{\delta}(\Rbb^3)
}
that satisfies
\eqn{Ubmap2}{
\Ub(0) = \Ub_0,
}
and
\eqn{Ubmap3a}{
F(\psi,\Ub(\psi)) = 0 \quad \forall \; \psi \in B_R(W^{2+s,p}(\Bc,\Rbb^3)).
}
From the definition of $F$ and the invertibility of $\Delta^{-1}$, it then follows that $\Ub(\psi)$ satisfies
\leqn{Ubmap3}{
\del_A\bigl(J H^{AB} \del_B \Ub(\psi) \bigr)= m\chi_\Bc.
}

\bigskip

\noindent
\textbf{\ref{point:ii} \& \ref{point:iii}}
Following Cantor \cite{Can75}, we consider the following group of diffeomorphisms on $\Rbb^3$
\eqn{Ddef}{
\Dc^{s,q}_\delta(\Rbb^3) := \bigl\{\, \phi : \Rbb^3 \rightarrow \Rbb^3\,|\, \text
{$\phi-\psi_0 \in W^{s,q}_\eta(\Rbb^3,\Rbb^3)$,  and $\phi^{-1}-\psi_0\in W^{s,q}_\eta(\Rbb^3,\Rbb^3)$} \,\bigr\}
}
where  $s>3/q+1$ and $\eta\leq 0$. Fixing $\psi \in B_R(W^{2+s,p}(\Bc,\Rbb^3))$, we get from
\eqref{detJmap} and \eqref{invJmap} that
\eqn{phitA}{
\phit=\phi(\psi) \in \Dc^{2+s,p}_{-10}(\Rbb^3).
}

Defining,
\leqn{UsidefA}{
U := U(\psi)\circ \phit^{-1},
}
we can apply Corollary 1.6 of \cite{Can75} to get\footnote{In \cite{Can75}, Cantor required that $\delta \leq -3/2$
because that was what he needed to prove the weighted multiplication inequality \eqref{wmlem}. It is clear that
his proofs are valid whenever the multiplication inequality holds and $W^{k,p}_\delta \subset C^1_b$. Consequently,
the only restriction on $\delta$ is that $\delta\leq 0$.
}
\leqn{UpsidefB}{
U \in W^{2,p}_\delta(\Rbb^3).
}
A straightforward calculation using the chain rule and \eqref{Ubmap3}, \eqref{UsidefA}, and \eqref{UpsidefB} then shows that
\eqn{UpsidefC}{
\Delta U = m \det(D(\phit^{-1})) \chi_{\phit(\Bc)},
}
while the fall off condition $U(x)=\text{\rm o}(|x|^\delta)$ as $|x|\rightarrow \infty$ follows from the weighted Sobolev inequality \eqref{wSob}.

\bigskip

\noindent

\textbf{\ref{point:iv}}
To begin, we assume that $k=1$ and observe that for $\theta^i \in W^{1,p}(\Bc)$ and $\Ub \in \Wc^{2,s,p}(\Rbb^3)$
\lalign{Gmap0}{
\norm{\Ebb_\Bc(\del_C\theta^i)\del_B\Ub}_{L^p_{\delta-1}(\Rbb^3)}
&\leq \norm{\del_C\theta^i\del_B\Ub}_{L^p(\Bc)}+\norm{\Ebb_\Bc(\del_C\theta^i)\del_B\Ub}_{L^p_{\delta-1}(\Bc_-)}
\notag \\
& \lesssim \norm{\theta}_{W^{1,p}(\Bc)}\norm{\Ub|_{\Bc}}_{W^{s+2,p}_\delta(\Bc)} +
\norm{\theta}_{W^{1,p}(\Bc)}\norm{\Ub|_{\Bc_-}}_{W^{s+2,p}_\delta(\Bc_-)} \notag \\
& \lesssim \norm{\theta}_{W^{1,p}(\Bc)}\norm{\Ub}_{\Wc^{2,s,p}(\Rbb^3)} \label{Gmap0.1}
}
where in deriving the result we have used property \eqref{ext2} of the extension operator $\Ebb_\Bc$,
the multiplication inequalities \eqref{mlem} and \eqref{wmlem}, and the assumption $1+s>3/p$.

Letting
\eqn{Htdef}{
H^{ABC}_i = \frac{\del JH^{AB}}{\del\phi^i_C}, 
}
we get, using the estimate \eqref{Gmap0.1} and the same arguments as above, that for $R$ small enough the map
\alin{Gmap1}{
G_A \: : \:   & B_R(W^{2+s,p}(\Bc,\Rbb^3))\times W^{1,p} (\Bc,\Rbb^3)\times \Wc^{2,s,p}_\delta(\Rbb^3) \times
W^{1,p}_\delta(\Rbb^3) \longrightarrow W^{-1,p}_{\delta-1}(\Rbb^3) \\
& \: :\: (\psi^i,\theta^i,\Ub,\Vb)
 \longmapsto J H^{AB} \del_B \Vb
+ H^{ABC}_i \Ebb_\Bc(\del_C\theta^i) \del_B(\Ub)
}
is analytic. From the continuity of differentiation and the trace map, we then
have that the map
\eqn{Gmap2}{
G \: : \:  B_R(W^{2+s,p}(\Bc,\Rbb^3))\times W^{1,p}(\Bc,\Rbb^3)\times \Wc^{2,s,p}_\delta (\Rbb^3) \times
W^{1,p}_\delta (\Rbb^3) \longrightarrow W^{-1,p}_{\delta-2}(\Rbb^3)
}
defined by
\eqn{Gmap2a}{
G(\psi^i,\theta^i,\Ub,\Vb) = \del_A G^A(\psi^i,\theta^i_B,\Ub,\Vb)
}
is analytic. Taking $\Ub_0$ as defined by \eqref{Fmap3}, a straightforward calculation and the invertibility of the Laplacian show that
\eqn{Gmap3}{
G(0,0,\Ub_0,0) = 0 \AND D_4 G(0,0,\Ub_0,0)\cdot \delta\Vb = \delta \Vb.
}
Therefore, we can again apply the analytic version of the implicit function
theorem (see \cite{Deim}, theorem 15.3) to conclude the existence of a unique analytic map, shrinking $R$ if necessary,
\leqn{Gmap4}{
\Vb\: : \: B_R(W^{2+s,p}(\Bc,\Rbb^3)) \times W^{1,p}(\Bc,\Rbb^3) \times
 \bigl(\Ub_0+B_{R}(\Wc^{2,s,p}_\delta) \bigr)\longrightarrow W^{1,p}_{\delta}(\Rbb^3)
}
that satisfies
\eqn{Gmap5}{
\Vb(0,0,\Ub_0,0) = 0,
}
and
\eqn{Gmap6}{
G(\psi^i,\theta^i_B,\Ub,\Vb(\psi^i,\theta^i_B,\Ub)) = 0
}
for all $(\psi^i,\theta^i,\Ub)$ $\in$
$B_R(W^{2+s,p}(\Bc,\Rbb^3))$ $\times$ $W^{1,p}(\Bc,\Rbb^3)$ $\times$
 $\bigl(\Ub_0+B_{R}(\Wc^{2,s,p}_\delta(\Rbb^3)) \bigr)$.

From the construction of $G$, and the uniqueness of the maps \eqref{Ubmap1} and \eqref{Gmap4}, it is not difficult to
verify that
\eqn{Gmap7}{
\Vb(\psi^i,\del_A\delta\psi^i,\Ub(\psi)) = D\Ub(\psi)\cdot \delta\psi \qquad \forall \: (\psi,\delta\psi)\in  B_R(W^{2+s,p}(\Bc,\Rbb^3))
\times W^{2+s,p}(\Bc,\Rbb^3).
}
From this and the density of $W^{s+1,p}(\Bc)$ in $W^{1,p}(\Bc)$, it follows that the derivative of $\Ub$ can be extended
act on $W^{1,p}(\Bc)$, and moreover, that the map
\leqn{Gmap8}{
B_R(W^{2+s,p}(\Bc,\Rbb^3))\ni \psi \longmapsto  D\Ub(\psi) \in L(W^{1,p}(\Bc))
}
is well defined and analytic. By \eqref{Ubmap1} above, we also have that the map
\leqn{Ubmap1aa}{
B_R(W^{2+s,p}(\Bc,\Rbb^3)) \ni \psi \longmapsto  D\Ub(\psi) \in L(W^{2+s,p}(\Bc))
}
is well defined and analytic. Together, the maps \eqref{Gmap8}-\eqref{Ubmap1aa} and interpolation imply that
the map
\eqn{Ubmap1bb}{
B_R(W^{2+s,p}(\Bc,\Rbb^3)) \ni \psi \longmapsto  D\Ub(\psi) \in L(W^{k,p}(\Bc))
}
is well defined and analytic for $k\in \Zbb$ and $1<k<s+2$.
\end{proof}

\begin{cor} \label{newtB}
The map
\eqn{newB1}{
\Lambda \: :\: \Oct^{s+2,p} \longrightarrow W^{s+1,p}(\Bc,\Rbb^3)
}
defined by
\eqn{newB2}{
 \Lambda^i(\phi) = -\delta^{ij}f^A_i\del_A \Ub(\phi)  \qquad
 \bigl((f^A_i)=(\del_A\phi^i)^{-1}\bigr)
}
is analytic. Moreover, for $k\in \Zbb_{\geq 1}$ and $k \leq s+1$, the derivative of  $\Lambda$ can be extended
act on $W^{k,p}(\Bc)$, and the map
\eqn{Gmap9}{
\Oct^{s+2,p} \ni \phi \longmapsto  D\Lambda(\phi) \in L(W^{k,p}(\Bc),W^{k-1,p}(\Bc))
}
is well defined and analytic.
\end{cor}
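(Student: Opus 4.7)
The proof decomposes into two parts: (a) analyticity of $\Lambda$ as a map into $W^{s+1,p}(\Bc,\Rbb^3)$, and (b) extension of $D\Lambda(\phi)$ to a bounded operator $W^{k,p}(\Bc)\to W^{k-1,p}(\Bc)$ that depends analytically on $\phi$. The key structural facts I will use are that $W^{s+1,p}(\Bc)$ is a Banach algebra (since $s+1>3/p$) that acts by pointwise multiplication on $W^{j,p}(\Bc)$ for $0\leq j\leq s+1$, and that the necessary analytic dependence of $\Ub(\phi)$ and of its derivative has already been supplied by Theorem \ref{newtA}.

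\textbf{Step (a): analyticity of $\Lambda$.} By part \ref{point:i} of Theorem \ref{newtA}, the map $\phi\mapsto \Ub(\phi)$ is analytic from $\Oct^{s+2,p}$ into $\Wc^{2,s,p}_\delta(\Rbb^3)$. Restricting to $\Bc$ and applying $\del_A$, the assignment $\phi\mapsto \del_A\Ub(\phi)|_\Bc \in W^{s+1,p}(\Bc)$ is analytic. Since matrix inversion is analytic on a neighborhood of the identity in the Banach algebra $W^{s+1,p}(\Bc,\Mbb_{3\times 3})$, after possibly shrinking $\Oct^{s+2,p}$ the map $\phi\mapsto f^A_j(\phi)=(\del_A\phi^j)^{-1}$ is also analytic into $W^{s+1,p}(\Bc)$. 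Composing this analytic pair with the continuous bilinear multiplication $W^{s+1,p}\times W^{s+1,p}\to W^{s+1,p}$ yields the analyticity of $\Lambda$ into $W^{s+1,p}(\Bc,\Rbb^3)$.

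\textbf{Step (b): derivative extension.} The product rule together with \eqref{dphidf} yields
\begin{equation*}
D\Lambda^i(\phi)\cdot\delta\phi \;=\; \delta^{ij}\, f^A_k f^B_j\,\del_B(\delta\phi^k)\,\del_A\Ub(\phi)\;-\;\delta^{ij}\, f^A_j\,\del_A\bigl(D\Ub(\phi)\cdot\delta\phi\bigr).
\end{equation*}
For the first term, when $\delta\phi\in W^{k,p}(\Bc)$ the factor $\del_B(\delta\phi^k)$ lies in $W^{k-1,p}(\Bc)$, while the coefficient $f^A_k f^B_j\,\del_A\Ub(\phi)$ lies in the Banach algebra $W^{s+1,p}(\Bc)$ and hence acts as a multiplier on $W^{k-1,p}(\Bc)$ (using $0\leq k-1\leq s+1$ together with $s+1>3/p$). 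For the second term, part \ref{point:iv} of Theorem \ref{newtA} supplies that $D\Ub(\phi)$ extends to an element of $L(W^{k,p}(\Bc))$ depending analytically on $\phi$, so $D\Ub(\phi)\cdot\delta\phi\in W^{k,p}(\Bc)$, its gradient lies in $W^{k-1,p}(\Bc)$, and multiplication by $f^A_j\in W^{s+1,p}(\Bc)$ keeps us in $W^{k-1,p}(\Bc)$. Combining the two terms shows $D\Lambda(\phi)\in L(W^{k,p}(\Bc),W^{k-1,p}(\Bc))$. Analyticity of $\phi\mapsto D\Lambda(\phi)$ into this space then follows because each constituent—$f^A_j$, $\del_A\Ub(\phi)$, and $D\Ub(\phi)$—depends analytically on $\phi$, and they are assembled using operations (Banach algebra multiplication and the continuous evaluation pairing $L(W^{k,p})\times W^{k,p}\to W^{k,p}$) that are continuous multilinear, so compositions preserve analyticity.

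\textbf{Main obstacle.} No genuinely new analysis is required beyond what Theorem \ref{newtA} already provides; in particular, the delicate low-regularity extension of $D\Ub(\phi)$ is the content of part \ref{point:iv} there. What remains is pure bookkeeping---verifying at each step that the $W^{s+1,p}$-algebra/multiplier estimates (granted by the Sobolev embedding $W^{s+1,p}\hookrightarrow L^\infty$ since $s+1>3/p$) cover exactly the products that appear, and that analyticity is preserved under composition of analytic maps with the continuous multilinear operations used to combine them.
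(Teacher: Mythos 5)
Your proposal is correct and follows essentially the same route as the paper: the paper's proof is a one-line appeal to Theorem \ref{newtA} (in particular part (iv) for the low-regularity extension of $D\Ub$), the multiplication inequalities, the analyticity of matrix inversion, and the stability of analyticity under composition, and your two steps simply make that bookkeeping explicit. The multiplier conditions you verify ($W^{s+1,p}\times W^{k-1,p}\to W^{k-1,p}$ for $0\leq k-1\leq s+1$ with $s+1>3/p$) are exactly the instances of \eqref{mlem} the paper relies on.
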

\begin{proof}
This follows directly from theorem \ref{newtA}, the multiplication inequality \eqref{wmlem}, proposition 3.6 of \cite{Heil}, and the fact that
compositions of analytic maps are again analytic.
\end{proof}

\subsect{leop}{The linearized elasticity operator}

We define the operator linearized elasticity and boundary operators by
\leqn{Adef}{
A(\phi)^i := \del_B\bigl(a^{iB}{}_j{}^D\del_D\phi^j\bigr)
}
and
\leqn{Abdef}{
\AdB(\phi)^i := \nu_Ba^{iB}{}_j{}^D\del_D\phi^j,
}
respectively. We also define
\eqn{YdefA}{
Y^{s,p} = W^{s,p}(\Bc,\Rbb^3)\times B^{s+1-1/p,p}(\del \Bc,\Rbb^3),
}
and for each $\phi \in W^{s+2,p}(\Bc,\Rbb^3)$,
\eqn{YdefB}{
Y_\phi^{s,p} = \{\, (b,t) \in Y^{s,p} \, | \, \Cc_1(b,t) = 0, \; \Cc_2(\phi,b,t) = 0 \, \}
}
where
\lalign{Ccdef1}{
\Cc_1(b,t) & = \int_\Bc b + \int_{\del\Bc} t, \label{Ccdef1.1}
\intertext{and}
\Cc_2(\phi,b,t) & = \int_\Bc b\times \phi + \int_{\del\Bc} t \times \phi. \label{Ccdef1.2}
}
Here, we are using the notation
\lalign{Ccdef2}{
(b\times \phi)^i & = \epsilon^i{}_{jk} b^j \phi^k, \label{Ccdef2.1}\\
\int_\Bc b & = \int_{\Bc} b\, d^3 X, \label{Ccdef2.2}
\intertext{and}
\int_{\del \Bc} t & = \int_{\del \Bc} t\, d\sigma.
}

For later use, we recall the following theorem from \cite{MH} concerning the surjectivity of the linearized elasticity operator.
\begin{thm}{\rm [\cite{MH}, theorem 1.11, Section 6.1]} \label{surjthm}
Suppose $1<p<\infty$, $s\geq 0$, and $\Pbb : Y^{s,p} \longrightarrow Y^{s,p}_{\psi_0}$ is any projection map.
Then the map
\eqn{surjthm1}{
\Abb \: :\: W^{s+2,p}(\Bc,\Rbb^3) \longrightarrow Y^{s,p}_{\psi_0} \: :\: \phi \longmapsto \Pbb(A(\phi),\AdB(\phi))
}
is surjective and
\eqn{surjthm2}{
\ker \Abb = \{ a+b\times X \, | \, a=(a^i),\; b=(b^i)\in \Rbb^3 \}
}
where $(b\times X)^i = \epsilon^i{}_{jA} b^j X^A$.
\end{thm}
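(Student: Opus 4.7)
The plan is to recognize that $(A,\AdB)$ is the formally self-adjoint elliptic boundary value problem for linearized elasticity with coefficients frozen at the identity configuration, subject to traction (Neumann-type) data, and then to combine variational arguments in $W^{1,2}$ with Agmon--Douglis--Nirenberg regularity theory to obtain the full $W^{s+2,p}$ statement.

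First I would record the weak formulation. For smooth test fields $\xi$, integration by parts yields
\begin{equation*}
\int_{\Bc} A^i(\phi)\,\xi_i\,d^3X - \int_{\del\Bc}\AdB^i(\phi)\,\xi_i\,d\sigma = -B(\phi,\xi), \qquad B(\phi,\xi) := \int_\Bc a^{iBjD}\,\del_D\phi_j\,\del_B\xi_i\,d^3X.
\end{equation*}
The symmetries \eqref{Lprop1} inherited by $a$ (in particular $a^{iBjD}=a^{BijD}$) imply $B(\phi,\xi)$ depends only on the symmetric parts of $\del\phi$ and $\del\xi$. Consequently, if $\xi = a + b\times X$ is an infinitesimal rigid motion, then $B(\phi,\xi)=0$ for every $\phi$. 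Choosing $\phi=\xi$ itself shows that rigid motions lie in the kernel of $(A,\AdB)$; choosing $\xi$ rigid and $\phi$ arbitrary shows that $(A(\phi),\AdB(\phi))$ always satisfies $\Cc_1 = \Cc_2 = 0$ (evaluated at $\psi_0$), so the range is contained in $Y^{s,p}_{\psi_0}$.

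Next I would show that the kernel is exactly the six-dimensional space of rigid motions and obtain solvability in $W^{1,2}$. The pointwise bound \eqref{aprop4}, together with Korn's second inequality on $\Bc$, gives
\begin{equation*}
B(\phi,\phi) \geq \gamma\int_\Bc |\tfrac{1}{2}(\del\phi + (\del\phi)^\mathsf{T})|^2 \geq c\,\norm{\phi}_{W^{1,2}(\Bc,\Rbb^3)}^2 - C\norm{\phi}_{L^2}^2,
\end{equation*}
so $B$ is coercive on the quotient of $W^{1,2}$ by rigid motions. Lax--Milgram then produces, for each $(b,t)\in Y^{0,2}_{\psi_0}$, a weak solution $\phi\in W^{1,2}$, unique up to rigid motions, of $B(\phi,\xi)=-\int_\Bc b\cdot\xi + \int_{\del\Bc} t\cdot\xi$ for all $\xi$, which is precisely the weak form of $(A(\phi),\AdB(\phi))=(b,t)$.

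The remaining task is to upgrade this weak $W^{1,2}$ solution to $W^{s+2,p}$, and to extend the conclusion to general $1<p<\infty$ and $s\geq 0$. For this I would invoke the Agmon--Douglis--Nirenberg theory: the Legendre--Hadamard condition \eqref{aprop3} makes the system strongly elliptic, and the traction-type boundary operator $\AdB$ is the classical complementing (Shapiro--Lopatinskii) boundary condition for linear elasticity, so the operator $\Abb$ is Fredholm between $W^{s+2,p}(\Bc,\Rbb^3)$ and $Y^{s,p}$ with kernel equal to the rigid motions and cokernel parametrized by the compatibility functionals $(\Cc_1,\Cc_2)|_{\psi_0}$. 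Combining this with the characterization of the range already obtained gives surjectivity onto $Y^{s,p}_{\psi_0}$, and composing with any projection $\Pbb$ preserves surjectivity. The main obstacle, and the genuinely technical step, is the verification of the complementing condition for the traction problem and the associated a priori $W^{s+2,p}$ estimate; this is classical for linear elasticity but requires the full ADN machinery rather than pure Hilbert-space arguments.
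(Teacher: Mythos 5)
The paper offers no proof of this statement --- it is quoted from Marsden--Hughes --- and your argument is essentially the standard one given there: the minor symmetries of $a^{iAjB}$ reduce the bilinear form to symmetrized gradients, the pointwise coercivity on symmetric matrices together with Korn's second inequality yields a G\aa rding inequality and hence, via Lax--Milgram on the quotient by infinitesimal rigid motions, a weak $W^{1,2}$ solution with kernel exactly $\{a+b\times X\}$, while Agmon--Douglis--Nirenberg regularity for the traction problem (strong ellipticity plus the complementing condition) upgrades this to the $W^{s+2,p}$ statement with range equal to the annihilator of the rigid motions, i.e.\ $Y^{s,p}_{\psi_0}$, on which any projection $\Pbb$ acts as the identity. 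Your proposal is therefore correct and follows the same route as the cited proof.
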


\begin{rem} \label{transrem}
Letting
\eqn{transrem1}{
\tilde{X}^j = \frac{1}{\text{Vol}(\Bc)} \int_{\Bc} X^j \, d^3 X
}
denote the center of $\Bc$, a short calculation using the change of
coordinates
$\bar{X}^j = X^j-\tilde{X}^j$ and $\bar{\Bc} = \Bc - \tilde{X}$
shows that
\eqn{transrem3}{
\int_{\bar{\Bc}} \bar{X}^j \, d^3 \bar{X} =  \int_{\Bc} X^j - \tilde{X}^j  \, d^3 X.
}
Therefore, we can always arrange that
\leqn{transrem4}{
\int_{\Bc} X^j \, d^3 X = 0
}
by translating the domain $\Bc$. For the remainder of this article, we will always assume
that the condition \eqref{transrem4} holds.
\end{rem}

Next, for $p>3/2$, we define the spaces
\eqn{Xdef}{
X^{s,p} = \Bigl\{ \phi \in W^{s,p}(\Bc,\Rbb^3) \, \Bigl| \, \int_{\Bc} \phi = 0 \, \Bigr\},
}
and observe, using  Sobolev's and H\"{o}lder's inequalities,  that
\alin{BmapA}{
\Bigl|\int_\Bc \phi\times \psi \Bigr|  \leq \norm{\phi\times \psi}_{L^1(\Bc,\Rbb^3)}
& \lesssim \norm{\phi}_{L^\infty(\Bc,\Rbb^3)} \norm{\psi}_{L^1(\Bc,\Rbb^3)} \\
& \lesssim \norm{\phi}_{W^{2,p}(\Bc,\Rbb^3)} \norm{\psi}_{L^{p}(\Bc,\Rbb^3)} \\
& \lesssim \norm{\phi}_{W^{2+s,p}(\Bc,\Rbb^3)} \norm{\psi}_{W^{s,p}(\Bc,\Rbb^3)}
}
from which the continuity of the bilinear map
\eqn{BmapB}{
B \: : \: X^{s+2,p}\times X^s \longrightarrow \Rbb^3 \: :\: (\phi,\psi) \longmapsto \int_\Bc \phi\times \psi
}
follows. Setting
\eqn{UcdefA}{
B_\psi(\phi) := B(\phi,\psi),
}
we define for $p>3/2$ the following spaces
\eqn{Ucdef}{
\Uc^{s,p} = \{\, \psi \in X^{s,p} \, | \, \text{ $B_\psi \: : \: \ker A\cap X^{s+2,p} \rightarrow \Rbb^3$ is an isomorphism} \, \}.
}

\begin{lem} \label{Uclem}
$\psi_0 \in \Uc^{s,p}$ for all $s\geq 0$ and $3/2<p<\infty$.
\end{lem}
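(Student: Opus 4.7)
My plan is to first unpack the definition of $\Uc^{s,p}$ at $\psi=\psi_0$. Since $\Bc$ is bounded and smooth, the map $\psi_0^i(X)=X^i$ lies in $W^{s,p}(\Bc,\Rbb^3)$ for every $s,p$, and by remark \ref{transrem} we have $\int_\Bc X\,d^3X=0$, so $\psi_0\in X^{s,p}$. It remains to show $B_{\psi_0}\colon\ker A\cap X^{s+2,p}\to\Rbb^3$ is an isomorphism.

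Next I would identify the intersection $\ker A\cap X^{s+2,p}$ explicitly. By theorem \ref{surjthm}, every element of $\ker A$ has the form $\phi=a+b\times X$ with $a,b\in\Rbb^3$. Imposing the zero-mean condition $\int_\Bc\phi=0$ gives
\begin{equation*}
a\,\mathrm{Vol}(\Bc)+b\times\int_\Bc X\,d^3X=0,
\end{equation*}
and since the second integral vanishes by the normalization of remark \ref{transrem}, one obtains $a=0$. Thus $\ker A\cap X^{s+2,p}=\{b\times X:b\in\Rbb^3\}$, a three-dimensional subspace naturally isomorphic to $\Rbb^3$ via $b\mapsto b\times X$ (this map is injective since $\Bc$ has nonempty interior). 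So $B_{\psi_0}$ is really a linear map between two copies of $\Rbb^3$, and it suffices to show it has trivial kernel.

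Then I would compute $B_{\psi_0}(b\times X)$ directly using the standard vector identity $(b\times X)\times X=(b\cdot X)X-|X|^2 b$. This yields
\begin{equation*}
B_{\psi_0}(b\times X)=\int_\Bc\bigl[(b\cdot X)X-|X|^2 b\bigr]\,d^3X=-\mathcal J\,b,
\end{equation*}
where $\mathcal J^{lj}=\int_\Bc(|X|^2\delta^{lj}-X^l X^j)\,d^3X$ is (up to a factor of the constant density) the classical moment-of-inertia tensor of $\Bc$.

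The final step, which is the only real calculation, is to show $\mathcal J$ is positive definite: for any $b\in\Rbb^3$,
\begin{equation*}
b^T\mathcal J b=\int_\Bc\bigl(|X|^2|b|^2-(b\cdot X)^2\bigr)\,d^3X=\int_\Bc|b\times X|^2\,d^3X\geq 0,
\end{equation*}
with equality only if $b\times X=0$ almost everywhere on $\Bc$; since $\Bc$ is open and nonempty, it contains points $X$ with every possible direction, forcing $b=0$. Hence $\mathcal J$ is invertible, $B_{\psi_0}$ is an isomorphism, and $\psi_0\in\Uc^{s,p}$. I expect no serious obstacles: the argument is essentially a bookkeeping exercise reducing everything to positivity of the inertia tensor, and the only place where care is needed is in using the normalization $\int_\Bc X\,d^3X=0$ to kill the translational piece of $\ker A$.
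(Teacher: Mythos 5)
Your proof is correct and follows essentially the same route as the paper: both reduce to the identity $(b\times X)\times X=(b\cdot X)X-|X|^2b$ and the Cauchy--Schwarz-type positivity $\int_\Bc\bigl(|X|^2|b|^2-(b\cdot X)^2\bigr)\,d^3X>0$ for $b\neq 0$. Your phrasing via positive definiteness of the inertia tensor is a slightly cleaner way to conclude invertibility than the paper's surjectivity-by-contradiction argument, but the substance is identical.
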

\begin{proof}
By \eqref{transrem4}, we have that
\eqn{Uclem1}{
\int_{Bc}\psi^i_0\, d^3 X = \int_{\Bc} X^i \, d^3 X = 0,
}
and
\eqn{Uclem2}{
\int_{\Bc} a + b\times X = \int_{\Bc} a + b\times \int_{\Bc} X = \text{Vol}(\Bc)
}
for all $a,b\in \Rbb^3$. Consequently,
\leqn{Uclem3}{
\psi_0 \in X^{s,p},
}
and
\leqn{Uclem4}{
\ker \Abb \cap X^{s+2,p} = \{\, b\times X \, | \, b\in \Rbb^3 \, \}
}
 by theorem \ref{surjthm}. Next,
\alin{Ulem7}{
B_{\psi_0}(b\times X) = \int_{\Bc} (b\times X)\times \psi_0
& = \int_{\Bc} (b\times X)\times X \\
& = \int_{\Bc} (X\cdot b)X-b|X|^2,
}
which, after taking the innerproduct with $a\in \Rbb^3$, yields
\leqn{Ulem8}{
b\cdot B_{\psi_0}(b\times X) = \int_{\Bc}(X\cdot b)(X\cdot a) - a\cdot b |X|^2.
}
The Cauchy-Schwartz inequality shows that
\leqn{Ulem9}{
(X\cdot b)^2 - |b|^2 |X|^2 \leq 0
}
and
\leqn{Ulem10}{
(X\cdot b)^2 - |b|^2 |X|^2 = 0 \quad \forall \; X \in \Bc \Longleftrightarrow b=0.
}
Combining \eqref{Ulem8}-\eqref{Ulem10}, we arrive at
\leqn{Ulem11}{
b \cdot B_{\psi_0} (b\times X) = 0 \Longleftrightarrow b=0.
}
By way of contradiction, suppose that the map
\leqn{Ulem12}{
\ker \Abb \cap X^{s+2,p} \longrightarrow \Rbb^3
}
is not surjective. Then the image $B_{\psi_0}(\ker \Abb \cap X^{s+2,p})$ is contained in a
two dimensional subspace, and therefore, there exists a non zero $a\in \Rbb^3$ such that
\eqn{Ulem13}{
a\cdot B_{\psi_0}(b\times X) = 0 \qquad \forall \: b \in \Rbb^3 .
}
But this is impossible by \eqref{Ulem11}, and hence the map \eqref{Ulem12} is surjective. Since $\dim \ker \Abb \cap X^{s+2,p} = 3$,
 the map \eqref{Ulem12} must, in fact, be an isomorphism.
\end{proof}

\subsect{compatexist}{Existence of initial data satisfying the compatibility conditions}

\begin{lem} \label{techA}
Suppose that $\Yc_0,\Yc_1,\ldots, \Yc_r$, and $\Zc$ are Banach spaces with continuous (linear) embeddings
\eqn{techA1}{
\iota_{i,j} : \Yc_{i} \longrightarrow \Yc_{j} \qquad  i,j\in \{\, 0,1,\ldots,r \, \},\; i<j,
}
$\Uc \subset \Yc_r$ is open, and $F\in C^{r+1}(\Uc,\Zc)$. Then the map defined by
\eqn{techA2}{
F_r(y_0,y_1,\ldots,y_r) := \frac{d^r}{dt^r}\Bigl|_{t=0} F(c(t)) \where c(t)= \sum_{j=0}^r t^j \iota_{j,r}(y_j)
}
is in $C^1(\iota_{0,r}^{-1}(\Uc)\times \prod_{j=1}^r \Yc_j,\Zc)$.
\end{lem}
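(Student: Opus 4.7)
The plan is to make the definition of $F_r$ explicit by Taylor-expanding $F\circ c$ about $t=0$, and then to observe that each of the finitely many terms obtained is a composition of $C^1$-maps.

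First, I would note that $c(0)=\iota_{0,r}(y_0)\in\Uc$ and $c:\Rbb\to\Yc_r$ is a polynomial, so $t\mapsto F(c(t))$ is defined and of class $C^{r+1}$ on a neighborhood of $t=0$, with $c^{(j)}(0)=j!\,\iota_{j,r}(y_j)$ for $1\le j\le r$ and $c^{(j)}(0)=0$ for $j>r$. Substituting $c(t)-c(0)=\sum_{j=1}^{r}t^j\,\iota_{j,r}(y_j)$ into the Taylor expansion of $F$ about $c(0)$ and extracting the coefficient of $t^r$ (equivalently, applying Fa\`a di Bruno's formula in Banach spaces) yields the finite sum
\begin{align*}
F_r(y_0,y_1,\ldots,y_r)=\sum_{k=1}^{r}\frac{r!}{k!}\sum_{\substack{j_1,\ldots,j_k\ge 1\\ j_1+\cdots+j_k=r}} D^kF\bigl(\iota_{0,r}(y_0)\bigr)\bigl[\iota_{j_1,r}(y_{j_1}),\ldots,\iota_{j_k,r}(y_{j_k})\bigr].
\end{align*}
Only tuples with $j_1+\cdots+j_k=r$ and all $j_i\le r$ contribute, so every argument $\iota_{j_i,r}(y_{j_i})$ makes sense.

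Second, I would verify that each summand is $C^1$ on $\iota_{0,r}^{-1}(\Uc)\times\prod_{j=1}^{r}\Yc_j$. Fixing $k$ and $(j_1,\ldots,j_k)$, the corresponding summand factors through: (i) the continuous linear embedding $\iota_{0,r}\colon\Yc_0\to\Yc_r$, which is $C^\infty$; (ii) the derivative $D^kF\colon\Uc\to L^k(\Yc_r;\Zc)$, which is of class $C^{r+1-k}$ and hence at least $C^1$ because $1\le k\le r$ and $F\in C^{r+1}$; (iii) the continuous linear embeddings $\iota_{j_i,r}\colon\Yc_{j_i}\to\Yc_r$, which are $C^\infty$; and (iv) the continuous $(k+1)$-linear evaluation map $L^k(\Yc_r;\Zc)\times\Yc_r^k\to\Zc$, which is $C^\infty$ by the standard fact that continuous multilinear maps between Banach spaces are smooth. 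The composition of $C^1$-maps with $C^\infty$-maps is $C^1$, so each summand is $C^1$, and hence so is the finite sum $F_r$.

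The main ``obstacle'' is essentially only bookkeeping: one must carefully extract the $t^r$ coefficient and keep track of which Banach space each argument lives in, using the embeddings $\iota_{j,r}$ to transport every input into $\Yc_r$ where the derivatives of $F$ act. The role of the hypothesis $F\in C^{r+1}$ rather than merely $C^r$ is sharp: it is used precisely for the top-order term $k=r$ to ensure that $D^rF$ itself is $C^1$; all lower-order terms have more than enough regularity.
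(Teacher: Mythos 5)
Your proof is correct and follows essentially the same route as the paper, which simply notes that $c(t)$ lies in $\Uc$ for small $t$, is smooth in $t$, and then asserts that the $r$-fold $t$-derivative of $F\circ c$ at $t=0$ depends $C^1$ on $(y_0,\ldots,y_r)$. You supply the Fa\`a di Bruno expansion and the term-by-term regularity check (in particular the observation that $D^kF$ is $C^{r+1-k}\subset C^1$ for $k\le r$, which is exactly where $F\in C^{r+1}$ is needed) that the paper's one-line "clearly" leaves implicit, so your write-up is if anything more complete than the original.
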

\begin{proof}
Since $\Uc \in Y_r$ is open, it follow from the continuity of the map $\iota_{0,r}$ that
$\iota_{0,r}^{-1}(\Uc) \subset Y_0$ is open.
Next, fix $y_0 \in \iota_{0,r}^{-1}(\Uc)$ and $y_j \in Y_j$ for $j=1,2,\dots,r$. Then the continuity of
the maps $\iota_{j,r} : Y_j \rightarrow Y_r$ and $\iota_{0,r}(y_0)\in \Uc$ guarantees the existence of
a $\delta >0$ such that
\eqn{techA4}{
c(t) = \sum_{j=0}^r t^j \iota_{j,r}(y_j) \in \Uc \quad \forall \; t \in (-\delta,\delta).
}
Clearly, this implies that $c \in C^\infty((-\delta,\delta),\Uc)$, and hence, that the map
\eqn{techA6}{
\iota_{0,r}^{-1}(\Uc)\times \prod_{j=1}^r \Yc_j \ni (y_0,\ldots,y_r) \longmapsto \frac{d^r}{dt^r}\Bigl|_{t=0} F(c(t)) \in \Zc
}
is well defined and continuously differentiable.
\end{proof}

\begin{prop} \label{compatA}
Suppose $3<p<\infty$, $s\in \Zbb_{\geq 0}$, and $\Oct^{s+2,p}\subset W^{s+2,p}(\Bc,\Rbb^3)$ is the open neighborhood
of $\psi_0$ from theorem \ref{newtA}. Then the maps (see \eqref{funcs})
\alin{compatA1}{
&E \: :\: \Oct^{s+2,p} \longrightarrow W^{s,p}(\Bc,\Rbb^3), \\
&\EdB \: : \:  \Oct^{s+2,p} \longrightarrow B^{s+1-1/p,p}(\del \Bc)
}
are $C^\infty$.
\end{prop}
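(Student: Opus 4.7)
The plan is to factor each of the two maps through a single nonlinear pointwise step sandwiched between bounded linear operations, and then exploit the smoothness of the constitutive law $\taub^{iA}$ together with the Banach-algebra properties of $W^{s+1,p}(\Bc)$. Explicitly, $E(\phi)$ is the composition
\[
\phi \longmapsto \del\phi \longmapsto \taub^{iA}(\del\phi) \longmapsto \del_A\bigl(\taub^{iA}(\del\phi)\bigr),
\]
and $\EdB(\phi)$ is obtained by replacing the last arrow with the trace onto $\del\Bc$ followed by contraction with the smooth outward unit normal $\nu_A$.

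Because $s\geq 0$ and $p>3$, we have $s+1 > 1 > 3/p$, so $W^{s+1,p}(\Bc)$ is a Banach algebra that embeds continuously into $C^0(\overline{\Bc})$. This embedding is precisely what is needed to invoke the standard superposition (Nemytskii) theorem: if $g : V \subset \Rbb^N \to \Rbb$ is $C^\infty$ on an open set $V$, then $u \mapsto g\circ u$ is $C^\infty$ from $\{\, u \in W^{s+1,p}(\Bc,\Rbb^N) \mid u(\overline{\Bc}) \subset V \,\}$ into $W^{s+1,p}(\Bc)$. This is the same tool already used through Proposition~3.6 of \cite{Heil} in the proof of Theorem \ref{newtA}.

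With this in hand the argument assembles in three steps. First, $\phi \mapsto \del\phi$ is bounded linear from $W^{s+2,p}(\Bc,\Rbb^3)$ into $W^{s+1,p}(\Bc,\Mbb_{3\times 3})$, hence analytic; by the $C^0$ embedding, shrinking $\Oct^{s+2,p}$ if necessary ensures its image lies in matrix fields taking values in the open neighborhood of $\del\psi_0 = I$ on which $\taub^{iA}$ is $C^\infty$ by constitutive condition~(1). Second, the superposition theorem then yields smoothness of $\phi \mapsto \taub^{iA}(\del\phi)$ as a map into $W^{s+1,p}(\Bc)$. Third, $\del_A : W^{s+1,p}(\Bc) \to W^{s,p}(\Bc)$ is bounded linear, producing $E \in C^\infty(\Oct^{s+2,p},W^{s,p}(\Bc,\Rbb^3))$; and the trace $\tr_{\del\Bc}: W^{s+1,p}(\Bc) \to B^{s+1-1/p,p}(\del\Bc)$ composed with multiplication by the $C^\infty$ coefficient $\nu_A$ is bounded linear, producing $\EdB \in C^\infty(\Oct^{s+2,p},B^{s+1-1/p,p}(\del\Bc,\Rbb^3))$.

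There is no substantive obstruction in this plan: the only ingredient requiring care is the algebra/embedding threshold $s+1 > 3/p$, which simultaneously ensures that the Nemytskii operator is smooth and that the pointwise constraint $\del\phi(\overline{\Bc}) \subset V$ defining its domain is preserved under small perturbations in $W^{s+2,p}$. Both are built into the hypotheses $s\geq 0$ and $p>3$, so the proof is a bookkeeping exercise once the superposition theorem is cited.
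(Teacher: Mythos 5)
Your proposal is correct and follows essentially the same route as the paper: the paper likewise factors the maps as differentiation, followed by a smooth superposition (Nemytskii) operator on $W^{s+1,p}(\Bc)$ justified by $s+1>3/p$ (citing theorem 1, Section 5.5.2 of \cite{RS}), followed by the continuous linear operations $\del_A$ and $\tr_{\del\Bc}$. The only cosmetic difference is your explicit remark about shrinking $\Oct^{s+2,p}$ so that $\del\phi$ stays in the domain of $\taub^{iA}$, which the paper leaves implicit.
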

\begin{proof}
First, we recall that, by assumption $\taub^{iA}$, is a smooth function of its arguments $\del_A \phi^i$ in the neighborhood
of the identity map $\psi_0^i$. Since $p>3$ and $s\geq 0$, we have  that $s+1 > 3/p$, and it follows from theorem 1, Section 5.5.2, of \cite{RS}, and the continuity of differentiation (cf. \eqref{Sobdiff}) that the map $\Oct^{s+2,p}\ni \phi \longmapsto \tau^{iA} \in W^{s+1,p}(\Bc,\Rbb^6)$
is $C^\infty$. The proof then follows directly from the continuity of the trace map \eqref{trace}.
\end{proof}

Defining
\eqn{OcapX}{
\Oc^{s+2,p} = \Oct^{s+2,p} \cap X^{s+2,p},
}
we have that
\eqn{OcapX2}{
\psi_0 \in \Oc^{s+2,p}
}
by \eqref{Uclem3}. We also define\footnote{ For two Banach spaces $X$ and $Y$, $L(X,Y)$ denotes the set of continous linear maps from $X$ to $Y$.}
\eqn{Fdef1}{
F \: : \: \Oc^{s+2,p} \times  X^{s,p}\times  \Rbb \longrightarrow Y^{s,p}\times \Rbb^3 \: :\: (\phi_0,\phi_2,\ep) \longmapsto
\bigl(\Fc(\phi_0,\phi_2,\ep), B(\phi_0,\psi_0) \bigr)
}
where
\eqn{Fcdef}{
\Fc(\phi_0,\phi_2,\ep) =  \bigl(E(\phi_0)+\ep^2 \Lambda(\phi_0) - \phi_2, \EdB(\phi_0)\bigr),
}
and we let
\leqn{Pbbdef1}{
\Pbb : \Oc^{s+2,p} \rightarrow L(Y^{s,p},Y^{s,p})
}
denote any $C^\infty$ map for which $\Pbb_{\psi_0}$ coincides with the projection operator from theorem \ref{surjthm}. Furthermore,
we assume that for each $\phi_0 \in \Oc^{s+2,p}$, the linear operator
\leqn{Pbbdef2}{
\Pbb(\phi_0)|_{Y^{s,p}_{\phi_0}} \: : \: Y^{s,p}_{\phi_0} \longrightarrow Y^{s,p}_{\psi_0}
}
is an isomorphism and
\leqn{Pbbdef3}{
Y^{s,p} = Y^{s,p}_{\phi} \oplus \ker \Pbb(\phi_0).
}
The existence of a map \eqref{Pbbdef1} satisfying \eqref{Pbbdef2} and \eqref{Pbbdef3} can be found in \cite{LeDret}.

For $r\in \Zbb_{\geq 0}$, we define
\alin{phivec}{
&\phiv_r  = \bigl(\phi_0,\phi_1,\ldots,\phi_r\bigr)^T, \\
&\phi_r(t)  = \phi_0 + \sum_{j=1}^r \ep \frac{t^j}{j!}\phi_j, \\
&\phi_{r,2}(t)  = \sum_{j=0}^r \ep \frac{t^j}{j!} \phi_{j+2},
}
and
\alin{Fdefvec}{
&F_0(\phi_0,\phi_2,\ep)  = F(\phi_0,\phi_2,\ep), \\
&F_1(\phiv_1,\phi_3,\ep)  = \frac{1}{\ep}\left(\Fc_1(\phiv_1,\phi_{3},\ep),\phi_{1,2}(t),\ep),\ep B(\psi_0,\phi_1) \right), \\
&F_{r+2}(\phiv_{r},\phi_{r+2},\ep)  = \frac{1}{\ep} \left( \Fc_{r+2}(\phiv_{r+2},\phi_{r+4},\ep), \dt{r} B(\phi_0 (t),\phi_{0,2} (t)) \right) \quad (r\geq 0) \\
}
where
\eqn{Fcrdef}{
\Fc_r(\phiv_r,\phi_{r+2},\ep) =  \dt{r} \Fc(\phi_{r}(t),\phi_{r,2}(t),\ep).
}
\begin{rem} \label{compatrem}
Under the identification
\eqn{compatrem1}{
\phi_r = \bigl(\del_t^r\phi\bigr)\bigl|_{t=0},
}
it follows from the definition of $\phi_r(t)$ and $\phi_{r,2}(t)$ above that
\eqn{compatrem2}{
\dt{r}\phi_r(t) = \ep\bigl(\del_t^r\phi\bigr)\bigl|_{t=0} \AND \dt{r} \phi_{r,2}(t) = \ep\bigl(\del_t^{r+2}\phi\bigr)\bigl|_{t=0}.
}
Moreover, under this identification, we have that
\eqn{compatrem3}{
\Fc(\phi_0,\phi_2,\ep) = 0
}
if and only if
\eqn{compatrem4}{
\bigl(\del_A \taub^{iA}(\del\phi) - \ep^2 \delta^{ij}f_j^A\del_A\Ub(\phi) -\del_t^2 \phi^i\bigr)|_{t=0}   = 0 \AND
\bigl(\nu_A \taub^{iA}(\del\phi)|_{\del \Bc}\bigr)|_{t=0}  = 0.
}
Also, by repeatedly differentiating the equations of motion \eqref{evolve:1}-\eqref{evolve:3}, it is not
difficult to see that
\leqn{compatrem5}{
\Fc_r(\phiv_r,\phi_{r+2},\ep) = 0
}
if and only if
\eqn{compatrem6}{
\del_t^r \bigl(\del_A \tau^{iA}(\del\phi) - \ep^2 \delta^{ij}f_j^A\del_A\Ub(\phi) -\del_t^2 \phi^i\bigr)|_{t=0}   = 0 \AND
\del_t^r \bigl(\nu_A \taub^{iA}(\del\phi)|_{\del \Bc}\bigr)|_{t=0}  = 0.
}
This shows that solving \eqref{compatrem5} for $r=0,1,\ldots,\ell$ will produce initial data that satisfies the compatibility
conditions to order $\ell$.
\end{rem}

In order to use the implicit function theorem to solve the equations \eqref{compatrem5}, we need to introduce
the following maps which are a closely related to the $\Fc_r$ and $F_r$ maps introduced above:
\lalign{Gdefvec}{
& \Gc(\phi_0,\phi_2,\ep) = \Pbb(\phi_0)\Fc(\phi_0,\phi_2,\ep), \notag\\
& G_0(\phi_0,\phi_2,\ep) = (\Gc(\phi_0,\phi_2,\ep), B(\psi_0,\phi_0) ), \notag \\
&G_1(\phiv_1,\phi_3,\ep)  = \frac{1}{\ep}\left(\Gc_1(\phiv_1,\phi_{3},\ep),\ep B(\psi_0,\phi_1) \right), \notag \\
&G_{r+2}(\phiv_{r+2},\phi_{r+4},\ep) \nonumber \\
&\quad
 = \frac{1}{\ep} \left( \dt{r+2} \Gc_{r+2}(\phiv_{r+2}(t),\phi_{r+4},\ep), \dt{r} B(\phi_0 (t),\phi_{0,2} (t)) \right), \quad (r\geq 0) \notag
\intertext{and}
&\Gv_r(\phiv_r,\phi_{r+1},\phi_{r+2},\ep)  =  \begin{pmatrix}
G_0(\phiv_0,\phi_2,\ep) \\
G_{1}(\phiv_1,\phi_3,\ep) \\
\vdots \\
G_{r-1}(\phiv_{r-1},\phi_{r+1},\ep)\\
G_{r}(\phiv_r,\phi_{r+2},\ep)
\end{pmatrix}  \notag  
}
where
\eqn{Gcrdef1}{
\Gc_r(\phiv_r,\phi_{r+2},\ep) = \dt{r}\Gc(\phi_r(t),\phi_{r,2}(t),\ep).
}
\begin{prop} \label{propGF}
Suppose $s \in \Zbb_{\geq r}$ and $3<p<\infty$. Then the maps
\eqn{propGF1}{
F_r \: : \: \Oct^{s+2,p} \times \Bigl( \prod_{j=1}^{r+2} W^{s+2-j,p}(\Bc,\Rbb^3) \Bigr) \times \Rbb^3 \longmapsto Y^{s-r,p}\times \Rbb^3
}
and
\eqn{PropGF2}{
G_r \: : \: \Oct^{s+2,p} \times \Bigl( \prod_{j=1}^{r+2} W^{s+2-j,p}(\Bc,\Rbb^3) \Bigr) \times \Rbb^3 \longmapsto Y^{s-r,p}\times \Rbb^3
}
are $C^1$.
\end{prop}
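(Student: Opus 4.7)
The plan is to identify both $F_r$ and $G_r$ as iterated time derivatives at $t=0$ of compositions of $C^\infty$ base maps with polynomial curves, and then to apply Lemma \ref{techA}. The key smoothness input comes from Proposition \ref{compatA}, which gives that $E$ and $\EdB$ are $C^\infty$ on $\Oct^{s+2,p}$, and Corollary \ref{newtB}, which gives analyticity of $\Lambda$. Combined with the trivial dependence on $\phi_2$ and $\ep$ and with the $C^\infty$ projection $\Pbb$ from \cite{LeDret}, the maps
$$\Fc,\ \Gc : \Oct^{s+2,p} \times W^{s,p}(\Bc,\Rbb^3) \times \Rbb \longrightarrow Y^{s,p}$$
are $C^\infty$. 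Since $p > 3$ guarantees $s'+1 > 3/p$ for any $s' \geq 0$, I would apply these two results at each lower regularity $s' \in \{0,1,\ldots,s\}$ to see that $\Fc$ and $\Gc$ are also $C^\infty$ as maps into $Y^{s',p}$.

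Next, I would introduce the Banach space scale
$$\Yc_j = W^{s+2-j,p}(\Bc,\Rbb^3) \times W^{s-j,p}(\Bc,\Rbb^3) \times \Rbb,\qquad j=0,1,\ldots,r,$$
with the canonical Sobolev embeddings $\iota_{i,j}$. Unpacking the definitions of $\phi_r(t)$ and $\phi_{r,2}(t)$, the curve $c(t):=(\phi_r(t),\phi_{r,2}(t),\ep)$ is a polynomial of degree $r$ in $t$ and admits an expression $c(t) = \sum_{j=0}^r t^j \iota_{j,r}(y_j)$ with $y_0$ packaging $(\phi_0,\ep\phi_2,\ep)$ and $y_j$ (for $j \geq 1$) packaging $(\ep\phi_j/j!,\ep\phi_{j+2}/j!,0)$. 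Since $\Fc$ and $\Gc$ are $C^\infty$ (hence $C^{r+1}$) on an open subset of $\Yc_r$, Lemma \ref{techA} applies and returns that $\Fc_r(\phi_0,\ldots,\phi_{r+2},\ep) = \dt{r}\Fc(c(t))$, and the analogous $\Gc_r$, are $C^1$ in all their arguments with values in $Y^{s-r,p}$.

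The $1/\ep$ prefactor appearing in $F_r$ and $G_r$ for $r \geq 1$ will then be handled as follows. Every non-constant coefficient of $c(t)$ carries an explicit factor of $\ep$, so each summand in $\dt{r}\Fc(c(t))|_{t=0}$ contains at least one overall $\ep$, and division by $\ep$ leaves a $C^1$ (indeed $C^\infty$) function extending continuously across $\ep=0$. The $\Rbb^3$-valued components are handled in the same style: $B$ is continuous bilinear and hence $C^\infty$, so a further application of Lemma \ref{techA} yields $C^1$ regularity of the $B$-time-derivative terms, while $B(\psi_0,\phi_0)$ and $\ep B(\psi_0,\phi_1)$ are trivially $C^\infty$. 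Assembling all components delivers the asserted $C^1$ regularity of $F_r$ and $G_r$ into $Y^{s-r,p}\times \Rbb^3$.

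The main obstacle I foresee is bookkeeping: matching the many-argument definitions of $F_r$ and $G_r$ --- with $\phi_j$'s distributed across several slots of $\Fc$, $\Gc$, and $B$, coefficients $\ep t^j/j!$, and overall $1/\ep$ prefactors --- to the single-curve form demanded by Lemma \ref{techA}, while consistently tracking the derivative loss through the scale $W^{s+2,p} \supset W^{s+1,p} \supset \cdots \supset W^{s-r,p}$ so as to land in the correct target $Y^{s-r,p}$.
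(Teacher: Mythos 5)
Your proposal is correct and takes essentially the same route as the paper: one first shows that $\Fc$ and $\Gc$ (equivalently $F_0$ and $G_0$) are $C^\infty$ using Proposition \ref{compatA}, Corollary \ref{newtB} and the smoothness of the projection \eqref{Pbbdef1}, and then applies Lemma \ref{techA} to the polynomial curves $(\phi_r(t),\phi_{r,2}(t),\ep)$. Your explicit treatment of the $1/\ep$ prefactor is a detail the paper leaves implicit (it surfaces only in the later decompositions \eqref{Ecdef3} and \eqref{Grdef}), and you handle it correctly.
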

\begin{proof}
First we note that the maps $F_0$ and $G_0$ are $C^\infty$ which follows from Corollary \ref{newtB}, proposition \ref{compatA},
and the smoothness of the map \eqref{Pbbdef1}. The proof then follows immediately from lemma \ref{techA} and the definition
of $F_r$ and $G_r$.
\end{proof}

Introducing
\eqn{Ecdef1}{
\Ec(\phi,\ep) = \bigl( E(\phi)+\ep^2\Lambda(\phi), \EdB(\phi)\bigr),
}
we get that
\eqn{Ecdef2}{
F_0(\phi_0,\phi_2,\ep) = \bigl(\Ec(\phi_0,\ep)-(\phi_2,0), B(\psi_0,\phi_0) \bigr),
}
and it follows easily from the definition of the $F_r$ maps above that
\eqn{Ecdef2a}{
F_1(\phiv_1,\phi_3,\ep) = \bigl(D_\phi\Ec(\phi_0,\ep)\cdot \phi_1 - (\phi_3,0),B(\psi_0,\phi_1) \bigr),
}
and
\eqn{Ecdef2b}{
F_2(\phiv_2,\phi_4,\ep) = \bigl(D_\phi\Ec(\phi_0,\ep)\cdot\phi_2 - (\phi_4,0) + \ep D^2_\phi\Ec(\phi_0,\ep)
\cdot(\phi_1,\phi_1), B(\phi_0,\phi_2) \bigr).
}
Proceeding inductively, we obtain
for $r \geq 1$,
\leqn{Ecdef3}{
F_r(\phiv_r,\phi_{r+2},\ep) = \bigl(D_\phi\Ec(\phi_0,\ep)\cdot \phi_r
-(\phi_{r+2},0), B(\phi_0,\phi_r) \bigr) + \ep \Ft_r(\phiv_{r+1},\ep)
}
where the map $\Ft_r$ is $C^1$.
Setting,
\eqn{Pbbdt}{
\Pbb_r(\phiv_r) = \dt{r} \Pbb(\phi_r(t)),
}
the product rule shows that
\leqn{Gcrdef2}{
\Gc_r(\phiv_r,\phi_{r+2},\ep) = \sum_{j=0}^r \binom{r}{s} \Pbb_{r-j}(\phiv_{r-j})\Fc_r(\phiv_j,\phi_{j+2},\ep).
}
This and \eqref{Ecdef3}, in turn, give
\leqn{Grdef}{
G_r(\phiv_r,\phi_{r+2},\ep) = \bigl(\Pbb(\phi_0)\bigl[D_\phi\Ec(\phi_0,\ep)\cdot \phi_r -(\phi_{r+2},0) \bigr], B(\phi_0,\phi_r) \bigr) + \ep \Gt_r(\phiv_{r+1},\ep) \quad r\geq 1
}
where $\Gt_r$ is $C^1$.

Letting
\eqn{psivdef}{
\psiv_r = (\psi_0,0,\ldots,0),
}
it follows directly from \eqref{Grdef} that the derivative of $\Gv_r$  evaluated at
$(\phiv_r,\phi_{r+1},\phi_{r+2},\ep) = (\psiv_r,0,0,0)$ is
\leqn{A1}{
D_{\phiv_r} \Gv_r(\psi_v,0,0,0) \cdot \delta \phiv_r = A \cdot \delta \phiv_r
}
where
\leqn{A2}{
A = \begin{pmatrix}
\Lc(\psi_0)               &   0          & \Mcal(\psi_0)   &      0        &  \cdots       & 0       \\
    0                     &  \Lc(\psi_0) &     0         & \Mcal(\psi_0)   &         &   \vdots \\
    0                     &       0      &   \Lc(\psi_0) &      0          & \ddots  &   0 \\
    \vdots                 &    \vdots          &       0       &   \Lc(\psi_0)   &         &  \Mcal(\psi_0)    \\
                           &              &               &                 &  \ddots &    0             \\
    0                      &      0       &      0       &    \cdots       &    0     & \Lc(\psi_0) \\
\end{pmatrix}
}
\lalign{A3}{
\Lc(\psi_0) \cdot \delta\psi &=\bigl( \Pbb(\psi_0) \Abb \delta \psi , B(\psi_0,\delta\psi) \bigr), \label{A3.1}
\intertext{and}
\Mcal(\psi_0) \cdot \delta\psi & = \bigl( -\Pbb(\psi_0)(\delta\psi,0), 0 \bigr). \notag
}
We note that in deriving \eqref{A1}-\eqref{A2}, we have used
\eqn{A4}{
\Ec(\psi_0,0) = 0\AND D_\phi \Ec(\psi_0,0)\cdot \delta\psi = \Abb\delta\psi.
}

Although our main objective is to solve the equations \eqref{compatrem5}, we
first solve $\Gv_r = 0$ and later show that this implies that compatibility conditions
are satisfied to order $r$. To solve
$\Gv_r=0$, we use the implicit function theorem. The proof we present is modeled on
the existence proof for static, self-gravitating elastic bodies presented in
\cite{BeSch03}.

\begin{prop} \label{impA}
There exists an $\ep_0>0$, open neighborhoods $\Nc_{r+1} \subset X^{s+1-r,p}$ and
$\Nc_{r+2} \subset X^{s-r,p}$ both containing $0$,  and $C^1$ maps
$$
\Phi_j \: :\: \Nc_{r+1}\times \Nc_{r+2} \times (-\ep_0,\ep_0)\longrightarrow X^{s+2-j,p}
\: :\: (\phi_{r+1},\phi_{r+2},\ep) \longmapsto
\Phi_j(\phi_{r+1},\phi_{r+2},\ep) ,
$$
for $j=0,1,\ldots,r$,
such that
\eqn{impA2}{
\Phi_0(0,0,0) = \psi_0, \quad \Phi_j(0,0,0) = 0, \quad j=1,2,\ldots,r
}
and
\eqn{impA3}{
\Gv_r\bigl(\Phiv_r(\phi_{r+1},\phi_{r+2}),\phi_{r+1},\phi_{r+2},\ep\bigr) = 0 \quad \forall \: (\phi_{r+1},\phi_{r+2},\ep) \in \Nc_{r+1}\times \Nc_{r+2} \times (-\ep_0,\ep_0)
}
where $\Phiv_r = (\Phi_0,\Phi_1,\ldots,\Phi_r)$.
\end{prop}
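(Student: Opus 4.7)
The plan is to apply the $C^1$ implicit function theorem to $\Gv_r$ at the base point $(\phiv_r,\phi_{r+1},\phi_{r+2},\ep)=(\psiv_r,0,0,0)$, solving for $\phiv_r$ in terms of the parameters $(\phi_{r+1},\phi_{r+2},\ep)$. Proposition \ref{propGF} already provides the $C^1$ regularity of $\Gv_r$ on the relevant product spaces, and $\psi_0 \in \Oc^{s+2,p}$ by \eqref{Uclem3}, so it remains to check that $\Gv_r$ vanishes at the base point and that the Fréchet derivative with respect to $\phiv_r$ there is a Banach space isomorphism.

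To verify $\Gv_r(\psiv_r,0,0,0)=0$, one reads off from the definition that $\Fc(\psi_0,0,0) = (E(\psi_0),\EdB(\psi_0)) = (0,0)$ by the equilibrium assumption \eqref{equilib}, and $B(\psi_0,\psi_0)=0$ since $\psi_0\times\psi_0\equiv 0$ pointwise, so $G_0(\psi_0,0,0)=0$. For $1\le j\le r$, the expansion \eqref{Grdef} reduces $G_j(\psiv_j,0,0,0)$ to $(\Pbb(\psi_0)[D_\phi \Ec(\psi_0,0)\cdot 0 - (0,0)], B(\psi_0,0))$ plus the error term $\ep\Gt_j$ evaluated at $\ep=0$, both of which vanish.

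The linearization $D_{\phiv_r}\Gv_r(\psiv_r,0,0,0)$ equals the block operator $A$ in \eqref{A2}, which is block upper triangular with diagonal blocks $\Lc(\psi_0)$ and superdiagonal-type blocks $\Mcal(\psi_0)$ at positions $(j,j+2)$. Since the system $A\,\delta\phiv_r = \eta$ can be solved recursively from the bottom row upward, inverting $\Lc(\psi_0)$ at each step and subtracting the already-computed $\Mcal(\psi_0)\delta\phi_{j+2}$ contribution, invertibility of $A$ reduces to invertibility of each diagonal block
\[
\Lc(\psi_0) : X^{s+2-j,p} \longrightarrow Y^{s-j,p}_{\psi_0}\times\Rbb^3,
\qquad \delta\psi \longmapsto \bigl(\Pbb(\psi_0)\Abb\,\delta\psi,\,B(\psi_0,\delta\psi)\bigr),
\]
for $j=0,1,\ldots,r$.

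The main obstacle is establishing this last isomorphism, which is where Theorem \ref{surjthm} and Lemma \ref{Uclem} combine. By Theorem \ref{surjthm} applied at regularity $s-j$, the map $\Pbb(\psi_0)\Abb$ is surjective from $W^{s+2-j,p}(\Bc,\Rbb^3)$ onto $Y^{s-j,p}_{\psi_0}$ with kernel $\{a+b\times X : a,b\in\Rbb^3\}$. The integral constraint defining $X^{s+2-j,p}$ together with \eqref{transrem4} forces $a=0$, so the restriction of $\Pbb(\psi_0)\Abb$ to $X^{s+2-j,p}$ remains surjective with kernel exactly $\{b\times X : b\in\Rbb^3\}$, and Lemma \ref{Uclem} asserts that $B(\psi_0,\cdot)$ is an isomorphism from this three-dimensional kernel onto $\Rbb^3$. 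A short diagram chase then gives that $\Lc(\psi_0)$ is a continuous bijection, hence an isomorphism by the open mapping theorem. With all hypotheses of the $C^1$ implicit function theorem verified, one obtains $\ep_0>0$, neighborhoods $\Nc_{r+1}\subset X^{s+1-r,p}$ and $\Nc_{r+2}\subset X^{s-r,p}$ of $0$, and the desired $C^1$ maps $\Phi_j$ satisfying $\Phi_0(0,0,0)=\psi_0$, $\Phi_j(0,0,0)=0$ for $j\ge 1$, and $\Gv_r(\Phiv_r(\phi_{r+1},\phi_{r+2},\ep),\phi_{r+1},\phi_{r+2},\ep)=0$ throughout $\Nc_{r+1}\times\Nc_{r+2}\times(-\ep_0,\ep_0)$.
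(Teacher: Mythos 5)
Your proposal is correct and follows essentially the same route as the paper: verify $\Gv_r(\psiv_r,0,0,0)=0$ from the equilibrium condition and the antisymmetry of $B$, reduce invertibility of the linearization $A$ to that of the diagonal blocks $\Lc(\psi_0)$ via the upper triangular structure, establish the latter from Theorem \ref{surjthm} together with Lemma \ref{Uclem}, and conclude with the $C^1$ implicit function theorem using Proposition \ref{propGF}. The only difference is that you spell out the diagram chase behind the isomorphism $\Lc(\psi_0):X^{s+2-j,p}\to Y^{s-j,p}_{\psi_0}\times\Rbb^3$, which the paper leaves implicit in Lemma \ref{Alem}.
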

\begin{proof}
We begin by verifying the the linear operator \eqref{A2} is an isomorphism.
\begin{lem} \label{Alem}
For $s\in \Zbb_{\geq r}$ and $3<p<\infty$, the map
\eqn{Alem1}{
A \: : \: \prod_{j=0}^r X^{s+2-j,p} \longrightarrow \prod_{j=0}^r \bigl( Y^{s-j,p}\times \Rbb^3 \bigr)
}
is a linear isomorphism.
\end{lem}
\begin{proof}
By lemma \ref{Uclem}, $\psi_0 \in \Uc^{s,p}$ and so it follows from the definition of $\Lc(\psi_0)$ (see \eqref{A3.1})
and theorem \ref{surjthm}  that $\Lc(\psi_0) \: : \: X^{s+2-j,p} \longrightarrow Y^{s-j,p}\times \Rbb^3$
is an isomorphism for $j=0,1,\ldots r$. The proof now follows immediately from upper triangular structure
of the map $A$ (see \eqref{A2}).
\end{proof}
Recalling that $\Ec(\psi_0) = 0$, it is clear from the definition of $\Fc$ that
\eqn{impA4}{
\Fc(\psi_0,0,0) = 0,
}
and hence, by the antisymmetry of the map $B$, that
\eqn{impA5}{
G_0(\psi_0,0,0) = 0.
}
Furthermore, it clear from \eqref{Grdef} that
\eqn{impA6}{
G_j(\psiv_j,0,0) = 0 \quad j=1,2,\ldots,r,
}
which, in turn, shows that
\leqn{impA7}{
\Gv_r(\psiv_r,0,0,0) = 0.
}
The proof of the proposition now follows proposition \ref{propGF}, lemma \ref{Alem}, and the implicit function theorem.
\end{proof}

\begin{thm} \label{cthm}
The maps $\Phi_j$ $j=0,1,\ldots,r$ from proposition \ref{impA} satisfy
\gath{cthm1}{
\Fc_j\bigl(\Phiv_j(\phi_{r+1},\phi_{r+2}),\Phi_{j+2}(\phi_{r+1},\phi_{r+2}),\ep\bigr) = 0 \quad j=0,1,\ldots,r-2, \\
\Fc_{r-1}\bigl(\Phiv_{r-1}(\phi_{r+1},\phi_{r+2}),\phi_{r+1},\ep\bigr) = 0 \AND
\Fc_{r}\bigl(\Phiv_{r}(\phi_{r+1},\phi_{r+2}),\phi_{r+2},\ep\bigr) = 0
}
for all $(\phi_{r+1},\phi_{r+2},\ep) \in \Nc_{r+1}\times \Nc_{r+2} \times (-\ep_0,\ep_0)$.
\end{thm}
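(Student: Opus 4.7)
The plan is an induction on $j \in \{0, 1, \ldots, r\}$, showing $\Fc_j = 0$ at the solution constructed in Proposition~\ref{impA}, starting from $\Gv_r = 0$ there. The driving principle is that since $\Gc = \Pbb(\phi_0)\Fc$, the equation $\Gc_j = 0$ forces the relevant quantity into $\ker\Pbb(\Phi_0)$; by the splitting $Y^{s,p} = Y^{s,p}_{\Phi_0} \oplus \ker\Pbb(\Phi_0)$ and the fact that $\Pbb(\Phi_0)|_{Y^{s,p}_{\Phi_0}}$ is an isomorphism onto $Y^{s,p}_{\psi_0}$, it then suffices to verify the two balance laws $\Cc_1(\Fc_j) = 0$ and $\Cc_2(\Phi_0, \Fc_j) = 0$ in order to conclude $\Fc_j = 0$.

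For the base case $j=0$, $G_0 = 0$ gives $\Pbb(\Phi_0)\Fc(\Phi_0,\Phi_2,\epsilon) = 0$ together with $B(\psi_0,\Phi_0) = 0$. The linear-momentum balance $\Cc_1(\Fc) = 0$ follows from the divergence-theorem relation between $E(\Phi_0) = \del_A\taub^{iA}$ and its boundary trace $\EdB(\Phi_0) = \nu_A\taub^{iA}$, the vanishing of the total self-gravitational force encoded in $\Lambda$, and the fact that $\Phi_2 \in X^{s,p}$. The angular-momentum balance $\Cc_2(\Phi_0, \Fc) = 0$ follows likewise from the symmetry of $\taub$ implied by \eqref{Lbprop1}, the self-torque identity for $\Lambda$, and the moment-type constraint on $\Phi_2$ encoded into the $B$-terms of $G_0$. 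The splitting then yields $\Fc(\Phi_0,\Phi_2,\epsilon) = 0$.

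For the inductive step, assume $\Fc_k = 0$ at the solution for $k < j$. By \eqref{Gcrdef2},
\begin{equation*}
\Gc_j = \sum_{k=0}^{j} \binom{j}{k}\Pbb_{j-k}(\phiv_{j-k})\Fc_k ,
\end{equation*}
which collapses to $\Pbb(\Phi_0)\Fc_j$ under the inductive hypothesis. Then $\Gc_j = 0$ (from $G_j = 0$) gives $\Pbb(\Phi_0)\Fc_j = 0$, while the higher-order $B$-moment terms built into $G_j$, combined with the $t$-differentiated versions of the base-case balance laws, yield $\Cc_1(\Fc_j) = \Cc_2(\Phi_0,\Fc_j) = 0$, so that $\Fc_j = 0$ by the splitting argument as before.

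The main obstacle is the careful verification of the two balance laws at higher orders: one must track how time derivatives commute with the nonlinear operators $E$, $\EdB$, $\Lambda$ and confirm that the $B$-moment terms in $G_j$ supply exactly the compensations required to close the identity. Frame-indifference through the symmetry $\Lb^{iAjB} = \Lb^{AijB}$ of \eqref{Lbprop1} is essential for the angular-momentum balance at every order, and the non-local character of $\Lambda$ coming from $\Ub$ must be handled via the Newton's-third-law cancellation built into the self-gravitational integral.
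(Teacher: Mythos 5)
Your proposal is correct and follows essentially the same route as the paper: use $\Gv_r=0$ to get $\Pbb(\Phi_0)\Fc_j=0$, show $\Fc_j\in Y^{s,p}_{\Phi_0}$ via the balance laws $\Cc_1$ and $\Cc_2$ (with the Leibniz sum in the $\Cc_2$ identity collapsing under the inductive hypothesis), and conclude from the splitting $Y^{s,p}=Y^{s,p}_{\phi}\oplus\ker\Pbb(\phi_0)$. The only difference is that the paper simply cites \cite{BeSch03} for the base identities $\Cc_1(\Ec(\phi,\ep))=0$ and $\Cc_2(\phi,\Ec(\phi,\ep))=0$, whereas you sketch their derivation from the divergence theorem, the symmetry of the stress, and the vanishing of the gravitational self-force and self-torque.
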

\begin{proof}
It is shown in \cite{BeSch03} that
\leqn{cthm2}{
\Cc_1\bigl(\Ec(\phi,\ep)\bigr) = 0 \AND
\Cc_2\bigl(\phi,\Ec(\phi,\ep)\bigr) = 0
}
are automatically satisfied
for all $\phi \in \Oc^{s+2,p}$ and $-\ep_0 < \ep < \ep_0$. Setting
\eqn{cthm3}{
\Ec_j(\phiv_j,\ep) = \dt{j} \Ec(\phi_r(t),\ep) \quad j=0,1,\ldots,r,
}
the formulas
\eqn{cthm4}{
\Cc_1\bigl(\Ec_j(\phiv_j,\ep) \bigr) = 0 \AND \sum_{k=0}^j \binom{j}{k} \Cc_2\bigl( \phi_k, \Ec_{j-k}(\phiv_{j-k},\ep)\bigr) = 0 \quad j=0,1,\ldots,r
}
follow from differentiating \eqref{cthm2}. In particular, this implies that
the maps  $\Phi_j$
from theorem \ref{impA} satisfy
\lgath{cthm5}{
\Cc_1\bigl(\Ec_j(\Phiv_j\bigl(\phi_{r+1},\phi_{r+2},\ep)),\ep\bigr) \bigr) =
0
\quad \text{ for } j=0,1,\ldots,r
\label{cthm5.1}
\intertext{and for $j=0,1,\ldots, r-2$,}
\sum_{k=0}^j \binom{j}{k} \Cc_2\bigl(
\Phi_k\bigl(\phi_{r+1},\phi_{r+2},\ep\bigr),
\Ec_{j-k}\bigl(\Phiv_{j-k}\bigr(\phi_{r+1},\phi_{r+2},\ep\bigl),\ep\bigr)\bigr)
= 0 ,
\label{cthm5.2}
}
for  all $(\phi_{r+1},\phi_{r+2},\ep) \in \Nc_{r+1}\times\Nc_{r+2}\times (-\ep_0,\ep_0)$

Next, we note that
\leqn{cthm6}{
\Cc_1((\phi_j,0)) = 0, \quad \forall \; \phi_r \in  X^{s+2-j,p},
}
and
\lalign{cthm7}{
\dt{j} B\bigl(\phi_{j}(t),\phi_{2,j}(t)\bigr) = 0 & \Longleftrightarrow \dt{j} \Cc_2\bigl(\phi_j(t),(\phi_{2,r}(t),0)\bigr)=0 \notag \\
& \Longleftrightarrow  \sum_{k=0}^j \binom{j}{k}\Cc_2\bigl(\phi_k,(\phi_{j+2-k},0)\bigr) = 0. \label{cthm7.1}
}
If we define
\eqn{cthm7a}{
\Phi_{r+1}(\phi_{r+1},\phi_{r+2},\ep) = \phi_{r+1} \AND  \Phi_{r+2}(\phi_{r+1},\phi_{r+2},\ep)  = \phi_{r+2},
}
then it follows from \eqref{cthm6}, \eqref{cthm7.1}, and the definition of $\Gv_r$ that the maps $\Phi_j$ satisfy
\lalign{cthm8}{
&\Cc_1\bigl(\Phi_j(\phi_{r+1},\phi_{r+2},\ep),0 \bigr) = 0 \quad j=0,1,\ldots,r+2, \label{cthm8.1}
\intertext{and}
&\sum_{k=0}^j \binom{j}{k} \Cc_2\bigl( \Phi_k(\phi_{r+1},\phi_{r+2},\ep), \bigl( \Phi_{j+2-k}(\phi_{r+1},\phi_{r+2},\ep), 0\bigr) \bigr) = 0
\quad j=0,1,\ldots,r \label{cthm8.2}
}
for all $(\phi_{r+1},\phi_{r+2},\ep) \in \Nc_{r+1}\times\Nc_{r+2}\times (-\ep_0,\ep_0)$.

Fixing $(\phi_{r+1},\phi_{r+2},\ep) \in \Nc_{r+1}\times\Nc_{r+2}\times (-\ep_0,\ep_0)$, the identities
 \eqref{cthm5.1}, \eqref{cthm5.2}, \eqref{cthm8.1}, \eqref{cthm8.2} and the definition of
the maps $\Fc_j$ show that
\lgath{cthm8a}{
\Cc_1\bigl(\Fc_j(\Phiv_j(\phi_{r+1},\phi_{r+2},\ep), \Phi_{j+2}(\phi_{r+1},\phi_{r+2},\ep),\ep \bigr) \bigr) = 0,
\label{cthm8a.1}
\intertext{and}
\sum_{k=0}^j \binom{j}{k} \Cc_2\bigl( \Phi_k(\phi_{r+1},\phi_{r+2},\ep),
\Fc_{j-k}(\Phiv_{j-k}(\phi_{r+1},\phi_{r+2},\ep), \Phi_{j+2-k}(\phi_{r+1},\phi_{r+2},\ep),\ep \bigr)\bigr) = 0
\label{cthm8a.2}
}
for $0\leq j \leq r$, while
\eqn{cthm9a}{
\Gc_0\bigl( \Phi_0(\phi_{r+1},\phi_{r+2},\ep) ,  \Phi_2(\phi_{r+1},\phi_{r+2},\ep), \ep \bigr) = 0
}
follows from theorem \ref{impA}, or equivalently, by the definition of $\Gc_0$, and
\leqn{cthm9}{
\Pbb\bigl(\Phi_0(\phi_{k+1},\phi_{r+2},\ep)\bigr) \Fc_0\bigl( \Phi_0(\phi_{r+1},\phi_{r+2},\ep) ,  \Phi_2(\phi_{r+1},\phi_{r+2},\ep), \ep \bigr) = 0.
}
But, we also have that
\eqn{cthm10z}{
\Fc_0\bigl( \Phi_0(\phi_{r+1},\phi_{r+2},\ep) ,  \Phi_2\bigl(\phi_{r+1},\phi_{r+2},\ep\bigr), \ep \bigr) \in Y^{s,p}_{\Phi_0(\phi_{k+1},\phi_{r+2},\ep)}
}
by \eqref{cthm8a.1}-\eqref{cthm8a.2}, and thus,
\eqn{cthm10a}{
\Fc_0\bigl( \Phi_0(\phi_{r+1},\phi_{r+2},\ep) ,  \Phi_2(\phi_{r+1},\phi_{r+2},\ep), \ep \bigr) = 0
}
by \eqref{Pbbdef3} and \eqref{cthm9}.

To finish the proof, we proceed by induction. So, we assume that
\leqn{cthm11}{
\Fc_0\bigl(\Phiv_j(\phi_{r+1},\phi_{r+2},\ep), \Phi_{j+2}(\phi_{r+1},\phi_{r+2},\ep), \ep \bigr) = 0  \quad
}
for $0\leq k \leq j < r$. Then
\leqn{cthm12}{
\Gc_k\bigl(\Phi_k(\phi_{r+1},\phi_{r+2},\ep),\Phi_{k+2}(\phi_{r+1},\phi_{r+2},\ep)\bigr)=0
}
for $0\leq k\leq j$ by theorem
\ref{impA}. Clearly,
\eqref{Gcrdef2}, \eqref{cthm11}, and \eqref{cthm12} imply that
\leqn{cthm13}{
\Pbb\bigl(\Phi_0(\phi_{k+1},\phi_{r+2},\ep)\bigr) \Fc_{j+1}\bigl( \Phiv_{j+1}(\phi_{r+1},\phi_{r+2},\ep) ,  \Phi_{j+2}(\phi_{r+1},\phi_{r+2},\ep), \ep \bigr) = 0.
}
But, since
\eqn{cthm10b}{
\Fc_{j+1}\bigl( \Phiv_{j+1}(\phi_{r+1},\phi_{r+2},\ep) ,  \Phi_{j+3}\bigl(\phi_{r+1},\phi_{r+2},\ep\bigr), \ep \bigr) \in Y^{s,p}_{\Phi_0(\phi_{k+1},\phi_{r+2},\ep)}
}
by \eqref{cthm8a.1}-\eqref{cthm8a.2}, we must, in fact, have
\eqn{cthm10}{
\Fc_{j+1}\bigl( \Phiv_{j+1}(\phi_{r+1},\phi_{r+2},\ep) ,  \Phi_{j+3}(\phi_{r+1},\phi_{r+2},\ep), \ep \bigr) = 0
}
by \eqref{Pbbdef3} and \eqref{cthm13}, and the proof is complete.
\end{proof}
In light of remark \ref{compatrem}, the following corollary is a direct consequence of the above theorem.
\begin{cor} \label{ccor}
Suppose $s \in \Zbb_{\geq 0}$, $3<p<\infty$, and the maps $\Phi_j$, the neighborhoods $\Nc_{s+1} \subset X^{1,p}\subset W^{1,p}(\Bc,\Rbb^3)$ and
$\Nc_{s+2} \subset X^{0,p}\subset W^{0,p}(\Bc,\Rbb^3)$, and $\ep_0$ are as in
proposition \ref{impA}. Then for
each $(\phi_{s+1},\phi_{s+2},\ep)\in \Nc_{s+1}\times\Nc_{s+2}\times (-\ep_0,\ep_0)$, the initial data
\eqn{corA}{
(\phi|_{t=0},\del_t|_{t=0}\phi) = \bigl(\Phi_0(\phi_{r+1},\phi_{r+2},\ep),\Phi_1(\phi_{r+1},\phi_{r+2},\ep)\bigr) \in \Oct^{s+2,p}\times \Oct^{s+1,p}
}
satisfy the compatibility conditions to order $s$.
\end{cor}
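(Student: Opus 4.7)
The plan is to reduce the statement to a direct application of Theorem \ref{cthm} by invoking the correspondence given in Remark \ref{compatrem}. First I would set $r = s$ and, for the given $(\phi_{s+1}, \phi_{s+2}, \ep) \in \Nc_{s+1}\times\Nc_{s+2}\times(-\ep_0,\ep_0)$, define
\[
\phi_\ell := \Phi_\ell(\phi_{s+1},\phi_{s+2},\ep), \qquad \ell = 0,1,\ldots,s,
\]
and recall the extended notation $\Phi_{s+1} = \phi_{s+1}$, $\Phi_{s+2} = \phi_{s+2}$ from the proof of Theorem \ref{cthm}. Under the identification $\phi_\ell \leftrightarrow (\del_t^\ell \phi)|_{t=0}$, these are our candidate formal time derivatives at $t=0$ with $\phi_0$ serving as $\phi|_{t=0}$ and $\phi_1$ as $(\del_t\phi)|_{t=0}$.

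Next I would verify the regularity required by Definition \ref{compat}. By Proposition \ref{impA} each $\Phi_\ell$ takes values in $X^{s+2-\ell,p}\subset W^{s+2-\ell,p}(\Bc,\Rbb^3)$, and since $\Bc$ is bounded and $p > 3 > 2$, the embedding $W^{s+2-\ell,p}(\Bc) \hookrightarrow W^{s+1-\ell,2}(\Bc)$ holds, so the $\phi_\ell$ have the Sobolev regularity required by Definition \ref{compat} (with a full degree of extra smoothness to spare).

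Then I would invoke Theorem \ref{cthm}, which gives
\[
\Fc_j\bigl(\Phiv_j(\phi_{s+1},\phi_{s+2},\ep),\Phi_{j+2}(\phi_{s+1},\phi_{s+2},\ep),\ep\bigr) = 0, \qquad j = 0,1,\ldots,s.
\]
By the equivalence stated in Remark \ref{compatrem} (see \eqref{compatrem5}--\eqref{compatrem6}), the vanishing of $\Fc_j$ is precisely the statement that, with the identifications above,
\[
\del_t^j\bigl(-\del_t^2\phi^i + \del_A\taub^{Ai} - \ep^2\delta^{ij}f^A_j \del_A\Ub\bigr)\bigl|_{t=0} = 0 \quad \text{in } \Bc, \qquad
\del_t^j\bigl((\nu_A\taub^{Ai})|_{\del\Bc}\bigr)\bigl|_{t=0} = 0,
\]
for $j = 0,1,\ldots,s$. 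The Poisson equation part of the compatibility system (the $\Delta_H\Ub - J^{-1}m\chi_\Bc = 0$ equation and its time derivatives at $t=0$) is automatic from the construction of $\Ub$ via Theorem \ref{newtA}, since $\Ub(\phi(t))$ solves \eqref{evolve:2} identically in $t$, and one simply differentiates in $t$.

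There is no real obstacle to carrying this out: the hard analytical work has been done in Theorem \ref{cthm}, which combined the implicit function theorem construction of Proposition \ref{impA} with the structural identities \eqref{cthm2} and the projection splitting \eqref{Pbbdef3} to push the vanishing of $\Gv_r$ up to the vanishing of all the $\Fc_j$. The only thing to check here is the bookkeeping that translates ``$\Fc_j = 0$ for $j \le s$'' into Definition \ref{compat}, and the Sobolev embedding that matches the $W^{s+2-\ell,p}$ output of $\Phi_\ell$ to the $W^{s+1-\ell,2}$ input of Definition \ref{compat}. Both are routine, so the corollary follows.
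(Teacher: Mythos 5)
Your proposal is correct and follows exactly the route the paper intends: the paper states that the corollary is ``a direct consequence'' of Theorem \ref{cthm} in light of Remark \ref{compatrem}, which is precisely your reduction (set $r=s$, take $\phi_\ell = \Phi_\ell(\phi_{s+1},\phi_{s+2},\ep)$, use $\Fc_j=0$ for $j\le s$ and the equivalence \eqref{compatrem5}--\eqref{compatrem6}). The extra bookkeeping you supply --- the embedding $W^{s+2-\ell,p}(\Bc)\hookrightarrow W^{s+1-\ell,2}(\Bc)$ for $p>2$ and the observation that the Poisson-equation compatibility is automatic from Theorem \ref{newtA} --- is routine and consistent with what the paper leaves implicit.
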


\sect{loc}{Local well-posedness}

In this section, we prove local well-posedness
for the system \eqref{evolve:1}-\eqref{evolve:3} using the approach of
\cite{koch}. The system under consideration is an initial-boundary value
problem of elliptic-hyperbolic type, due to the presence of the equation
\eqref{evolve:2} in the system, and hence the results of \cite{koch} do not apply
directly. However, the techniques of \cite{koch} are readily adapted to include non-local terms,
and we will present an outline of the proof of this fact below. To conclude this section, we will apply the resulting
existence theorem to establish the existence of dynamical solutions to \eqref{evolve:1}-\eqref{evolve:3}.

\subsection{Setup and notation}
\label{sec:setup}
For ease of reference, we adopt
the index and other notational conventions of
\cite{koch}, with some exceptions, as pointed out below. We are interested
only in the case $n=3$, but it is convenient to treat the case of general $n
\geq 3$. The number of components of the system of equations is in
the case of elasticity equal to $N = 3$, but the treatment below applies
to general $N$.

Let $0 \leq i,k \leq n$, $1 \leq \alpha,\beta \leq n$, $1 \leq j,l \leq
N$. We work in a coordinate system $(x_i)$ and let $t = x_0$. The summation
convention is used. Further, we shall denote the unknown field by $u$ rather
than $\phi$. Let $Du = (\partial_t u , \partial_{x_1} u , \dots,
\partial_{x_n} u)$, and $D_x u = (\partial_{x_1} u , \dots,
\partial_{x_n} u)$. Fix some $s > n/2 + 1$, $s$ an integer, and
consider the system in the domain $\Omega$ with
boundary $\Gamma$ of regularity class $C^{s+2}$, and
denote $\Omega_T = [0,T) \times \Omega$ and $\Gamma_T = [0,T) \times \Gamma$
where $0 < T \leq \infty$.

Following \cite{koch}, we consider equations of the form
\begin{subequations} \label{eq:koch:(1.1)}
\begin{align}
\partial_{x_i} F^i_j (t,x,u, Du) &= w_j [t,x,u,Du] & \text{ in } \Omega_T,\\
v_j F^i_j (t,x,u,Du) &= g_j(t,x,u,Du) & \text{ in } \Gamma_T,\\
u &= u_0, \quad \partial_t u = u_1 & \text{ in } \{0\} \times \Omega
\end{align}
\end{subequations}
where, in contrast to \cite{koch}, $w_j[t,x,u,Du]$ is a functional of $(u,Du)$
that we allow to be non-local. The
properties required of $w$ are specified below in Assumption $w$.

\subsubsection*{Assumption 1}
We assume $u_0 \in W^{s+1}(\Omega)$, $u_1 \in W^s(\Omega)$.
We assume $F, g \in C^{s+1}(U)$ where
$U$ is a neighborhood of the graph of $u_0, u_1, D_x u_0$ as in \cite[\S
  1]{koch}, and
$$
a^{ik}_{jl} = \frac{\partial F^i_j}{\partial ( \partial_{x_k} u^l )},
\quad
h^k_{jl} = \frac{\partial g_j}{\partial ( \partial_{x_k} u^l)}.
$$
We decompose $h^k_{jl} = h^{sk}_{jl} + h^{uk}_{jl}$ where $h^{sk}_{jl} =
h^{sk}_{(jl)}$ is the symmetric part and $h^{uk}_{jl} = h^{uk}_{[jl]}$ is the
  antisymmetric part.
We assume that the symmetric part $h^{sk}_{(jl)}$ is of the form
$h^{sk}_{(jl)} = \theta^k h_{jl}$ where $\theta(t,x,u,Du)$ is a vector field
which is tangential to $\Gamma_T$ and satisfies $\theta^0 = 1$, and
$h_{jl}$ is symmetric.

For the elastic body, we have
\begin{equation}\label{eq:elast-koch-h}
h^k_{jl} = 0
\end{equation}
while $a^{ik}_{jl}$ can be calculated in terms of the elasticity tensor
$L$, cf. \eqref{eq:elast-L}.

\subsubsection*{Assumptions 2-5}
The structural relations of the elastic body imply
the hyperbolicity of the system. In particular, for
the self-gravitating elastic body, the symmetry and coerciveness assumptions,
Assumptions 2 and 3 of \cite{koch}, hold. Further, Assumption 4 of \cite{koch} on the
time components $a^{00}_{jl}$ follow directly from the structure of the
elastic system. For a discussion of the compatibility conditions on the initial data, see Assumption 5
in \cite{koch}.

Following \cite[p. 25]{koch}, we introduce the spaces
$E^s$, $G^s$, and $F^s$ with norms
\begin{align*}
||u||_{E^s}(t) &= \left ( \sum_{i=0}^s ||\partial_t^i
u(t)||_{W^{s+1-i}}^2 \right )^{1/2}, \\
||u||_{G^s_{t_1,t_2}} &= \sup_{t_1 \leq t \leq t_2} || u||_{E^s}(t),
\intertext{and}
||u||_{F^s_{t_1,t_2}} &= \int_{t_1}^{t_2} ||u||_{E^s} (t) \, dt,
\end{align*}
respectively.

\subsubsection*{Assumption $w$}
For non-local $w$, we make the following further assumptions. We assume $w$ to be well defined if the graph
of $(u,Du)$ lies in a suitable subset of $U$, where $U$ as in Assumption 1
above. There, the following conditions are imposed.
\begin{enumerate}
\item
If
$$
u \in \cap_{0 \leq j \leq s} C^j([0,T],W^{s+1-j}(\Omega)),
$$
we have
$$
w[u,Du] \in \cap_{0 \leq j \leq s} C^j([0,T],W^{s-j}(\Omega)).
$$
\item We require the map $u \mapsto w[u,Du]$ to be Lipschitz in the above
topology.

In particular, see section \ref{sec:contract} below,
we shall make use of the estimate
\begin{equation}\label{eq:w-lip}
||\partial^2_t ( w(v_1) - w(v_0)) ||_{L^2} \leq c || (\partial_t ( v_1 - v_0) ||_{E^2}.
\end{equation}

\item
Finally, we require the following uniform estimate
$$
||w||_{E^s} \leq c ( 1 + ||D^2 u||_{L^\infty}) ( 1 + ||u||_{E^{s+1}} )
$$
where $c$ is a constant depending on $||Du||_{L^\infty}$
as well as the
coercivity constants $\kappa, \mu$ for the system.
\end{enumerate}

We have the following result, which is the analog of \cite[theorem 1.1]{koch}.
\begin{thm} \label{thm:koch:1.1}
$\;$

\begin{enumerate}
\item
Existence, regularity: There exists a unique $0 < t_0 \leq T$, and a unique
classical solution $u \in C^2 (\Omega_{t_0} \cup \Gamma_{t_0} )$ of
(\ref{eq:koch:(1.1)}) with $D^\sigma u(t) \in L^2(\Omega)$ if $0 \leq \sigma
\leq s+1$. Here $D^\sigma u$ denotes all derivatives of order $\sigma$.
\item Continuous dependence on initial data.
\item Blow up: $t_0$ is characterized by the two alternatives: either the
  graph of $(u, Du)$ is not precompact in $U$
or
$$
\int_0^t ||D^2 u(\tau)||_{L^\infty(\Omega)} d\tau \to \infty, \quad \text{ as
} t \to t_0.
$$
\end{enumerate}
\end{thm}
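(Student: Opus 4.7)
The strategy is to adapt Koch's iteration scheme from \cite{koch} by treating the non-local functional $w[u,Du]$ as a source term in the linearized problem and exploiting the Lipschitz and regularity assumptions above to close the argument. First I would set up the iteration: choose an initial approximation $u^{(0)}$ compatible with the initial data to the appropriate order, and then define $u^{(n+1)}$ as the solution of the quasilinear hyperbolic initial-boundary value problem obtained from \eqref{eq:koch:(1.1)} by freezing the coefficients $a^{ik}_{jl}$ at $u^{(n)}$ and putting $w[u^{(n)}, Du^{(n)}]$ on the right-hand side. Each step of the iteration is then a purely hyperbolic problem of the form treated in \cite{koch}, so existence, regularity and the basic energy estimate on a short time interval $[0,t_0]$ follow directly from the linear theory there, provided the graph of $(u^{(n)},Du^{(n)})$ stays in the neighborhood $U$.

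Next I would prove boundedness of the iterates in $G^s_{0,t_0}$. The structural assumptions (Assumptions 1--4) give, exactly as in \cite{koch}, an a priori energy estimate of Gronwall type in the $E^s$ norm, and the third part of Assumption $w$ has been tailored precisely so that the contribution $\|w\|_{E^s}$ fits into the same Gronwall estimate and does not destroy the bound. Shrinking $t_0$ if necessary one obtains a uniform bound $\|u^{(n)}\|_{G^s_{0,t_0}} \leq M$, with the graph of $(u^{(n)},Du^{(n)})$ staying in a compact subset of $U$.

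The main obstacle, and the step I would handle with the most care, is contraction in a lower-regularity norm. Here the usual Koch argument bounds the difference $u^{(n+1)} - u^{(n)}$ in an $F^2$-type norm by $C t_0^{1/2} \|u^{(n)} - u^{(n-1)}\|_{F^2}$, using the linear energy estimate applied to the difference equation. For the non-local contribution the estimate \eqref{eq:w-lip} is exactly what is needed: it guarantees that the extra source $w[u^{(n)}] - w[u^{(n-1)}]$ contributes only a term of the same form, controlled by $\|\partial_t(u^{(n)} - u^{(n-1)})\|_{E^2}$, so one still obtains contraction after shrinking $t_0$. The limit $u$ is then a classical solution in $C^2(\Omega_{t_0}\cup \Gamma_{t_0})$ with the stated Sobolev regularity, and uniqueness follows from the same difference estimate applied to two hypothetical solutions.

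Continuous dependence on the initial data is obtained by the same difference estimate, now between solutions issuing from nearby data, using that the data enter linearly in the energy inequality. For the blow-up alternative I would run the standard continuation argument: as long as the graph of $(u,Du)$ stays compact in $U$ and $\int_0^t \|D^2u(\tau)\|_{L^\infty}\, d\tau$ stays bounded, the higher-order energy estimates of Koch (with the non-local modification above) give a uniform bound on $\|u\|_{E^s}$, so the local existence result can be reapplied and the solution extended beyond $t$; hence if $t_0 < T$, one of the two alternatives in (iii) must fail. Finally, the application to \eqref{evolve:1}--\eqref{evolve:3} amounts to verifying Assumption $w$ for $w_j = \ep^2 \delta_{jk} f^A_k \partial_A \bar U(\phi)$, which is where the potential-theoretic results of Section~\ref{sec:potential} and Theorem~\ref{newtA}, together with Corollary~\ref{newtB}, are used to provide the required Lipschitz and regularity bounds for the non-local gravitational term.
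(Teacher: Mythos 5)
Your proposal follows essentially the same route as the paper: both adapt Koch's fixed-point scheme by freezing coefficients in the (time-differentiated) linearized system, proving boundedness in the high norm $E^{s+1}$ via the uniform estimate $\|w\|_{E^s}\leq c(1+\|D^2u\|_{L^\infty})(1+\|u\|_{E^{s+1}})$, obtaining contraction in the low norm $E^2$ via the Lipschitz estimate \eqref{eq:w-lip}, and running the tangential-derivative continuation argument for the blow-up alternative. The only detail you pass over is the technical device of working with $v=\partial_t u$ as the main variable, which forces the intermediate assumption $s>n/2+2$ and a final smoothing-and-limit step to recover $s>n/2+1$; this is inherited from \cite{koch} and does not change the structure of the argument.
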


\subsection{Linear systems} \label{sec:linear}

For a solution to the system (analogous to \cite[(2.10)]{koch})
\begin{subequations}\label{eq:koch:(2.10)}
\begin{align}
\partial_{x_i} ( a^{ik} \partial_{x_k} u)  &=  w + \partial_{x_i}
f^i & \text{ in } \Omega_T, \\
v_\alpha (a^{\alpha k} \partial_{x_k} u - f^\alpha) &= h^{sk} \partial_{x_k} u & \text{ in } \Gamma_T
\end{align}
\end{subequations}
with the coercivity and structure conditions as in \cite[\S 2]{koch}, and in
particular, the regularity assumptions (see \cite[Assumption 2s, p. 26]{koch})
\begin{subequations}\label{eq:koch:Assumption:2s}
\begin{align}
a,h &\in G^s \cap C^1(\Omega_T) \quad \partial_t a, \partial_t h \in F^s, \\
w &\in G^{s-1} \cap W^{s,1} ([0,T], L^2(\Omega)) \cap
W^{s-1,1}([0,T],W^1(\Omega)) , \\
f^i &\in G^s, D f^i \in W^{s,1} ([0,T],L^2(\Omega)) , \\
u_0 &\in W^{s+1}(\Omega) \quad u_1 \in W^s(\Omega),
\end{align}
\end{subequations}
we have the estimate
\begin{multline}\label{eq:koch:(2.11)}
||u||_{E^{s+1}}(t_2) \leq \tilde{c} \bigg{(}
||u(t_1)||_{E^{s+1}} + ||w||_{G^{s-1}}
 \\
\left . + || f ||_{G^s} + \int_{t_1}^{t_2} ||\partial_t^s w(t) ||_{L^2(\Omega)} +
||\partial_t^{s+1} f^i (t) ||_{L^2(\Omega)} dt \right )
\end{multline}
where
$$
\tilde{c} = \tilde{c}(\kappa,\mu,||a||_{G^s\cap C^1(\Omega_T)},||\partial_t
a||_{F^s}, ||h^s||_{G^s\cap C^1(\Omega_T)}, ||\partial_t h^s||_{F^s}).
$$
Note that the system given in (\ref{eq:koch:(2.10)}) is of a
restricted form with $g = 0$, $h^u = 0$.
These terms can be absorbed into the
others, cf. the discussion in \cite[\S 2]{koch}. This is achieved by
introducing the modified coefficients $\bar{a}^{ik}, \bar{f}^i, \bar{w}$ as
in \cite[p. 31]{koch}, see also (\ref{eq:koch:p31}) below.

Applying the above estimate to a system of the form
\begin{subequations}\label{eq:koch:(2.10):mod}
\begin{align}
\partial_{x_i} ( a^{ik} \partial_{x_k} u)  &=  w + \partial_{x_i}
f^i & \text{ in } \Omega_T \\
v_\alpha (a^{\alpha k} \partial_{x_k} u - f^\alpha) &= h^{k} \partial_{x_k} u
+ g & \text{ in } \Gamma_T ,
\end{align}
\end{subequations}
we get, in view of the above discussion,
the inequality
\begin{multline}\label{eq:koch:(2.11):gen}
||u||_{E^{s+1}}(t_2) \leq \tilde{c} \bigg{(}
||u(t_1)||_{E^{s+1}} + ||w||_{G^{s-1}}
 + || f ||_{G^s} + ||g||_{G^s} \\
+
\int_{t_1}^{t_2} ||\partial_t^s w(t) ||_{L^2(\Omega)}
+ ||\partial_t^{s+1} f^i (t) ||_{L^2(\Omega)}
dt   \\
\left . + \int_{t_1}^{t_2} || \partial_t^s g||_{W^1(\Omega)}
+ || \partial_t^{s+1} g ||_{L^2(\Omega)} ds \right )
\end{multline}

\subsection{Proof of theorem \ref{thm:koch:1.1}} \label{sec:kochproof}
In this section, we discuss the main steps in the proof of theorem
\ref{thm:koch:1.1}. In the following calculations, we suppress the indices
$u_j$ on $u$ and the corresponding indices on $a^{ik}_{jl}, w_j$, etc.
First, we apply a time derivative to (\ref{eq:koch:(1.1)})
which gives\footnote{We write $\partial^2_{x_k, t}$ where $\partial_{x_k t}$
  is used in \cite{koch}}
\begin{subequations}\label{eq:koch:linearized}
\begin{align}
\partial_{x_i} ( a^{ik} \partial^2_{x_k,t} u)  &= \partial_t w + \partial_{x_i}
f^i & \text{ in } \Omega_T, \\
v_\alpha (a^{\alpha k} \partial^2_{x_k,t} u - f^\alpha) &= h^k \partial^2_{x_k,
  t} + \bar{g} & \text{ in } \Gamma_T ,
\end{align}
\end{subequations}
where
$$
f^i(t,x,u,Du) = a^{ik} \partial^2_{x_k,t} u - \partial_t F^i,
\quad \bar{g} = \partial_t g - h^k \partial^2_{x_k,t} u.
$$
For the elastic system, the coefficients have no explicit time dependence,
and the boundary condition is homogenous. So we have that
\begin{equation}\label{eq:elast-linearized}
f^i = 0, \quad \bar{g} = 0.
\end{equation}
Next, the system is rewritten in the form
\begin{subequations}\label{eq:koch:(3.1)}
\begin{align}
\partial_{x_i} ( \bar{a}^{ik} \partial^2_{x_k,t} u )&= \bar{w} + \partial_{x_i}
\bar{f}^i & \text{ in } \Omega_T, \\
v_\alpha (\bar{a}^{\alpha k} \partial^2_{x_k,t} u - \bar{f}^\alpha) &= h^{sk}
\partial^2_{x_k,t} u & \text{ in } \Gamma_T ,
\end{align}
\end{subequations}
where
\begin{subequations}\label{eq:koch:p31}
\begin{align}
\bar{f}^i &= f + v^i \bar{g} , \quad
\bar{a}^{ik} = a^{ik} - v^i h^{uk} + v^k h^{ui}
\\
\intertext{and}
\bar{w} &= \partial_t w - ((\partial_v + \div\, v) h^{uk} ) \partial^2_{x_k,t} u
+ (\partial_{x_i} v^\beta h^{ui} ) \partial^2_{x_\beta, t} u - \partial_v
\bar{g} - (\div\, v) \bar{g}. \label{eq:koch:p31:b}
\end{align}
\end{subequations}
\begin{rem}
By introducing the modifications $\bar{a}^{ik}, \bar{f}^i, \bar{w}$
of $a^{ik}, f^i, w$ as in (\ref{eq:koch:p31}), the resulting system
(\ref{eq:koch:(3.1)}) has no term $g$ and also $h^u = 0$. Thus it is
of the form of the system (\ref{eq:koch:(2.10)}) considered in
\cite[theorem 2.4]{koch}.
\end{rem}
For the elastic system, we have $f = g = h = 0$ and hence
\begin{equation}\label{eq:elast-(3.1)}
\bar{a}^{ik} =
a^{ik}, \quad \bar{w} = \partial_t w .
\end{equation}
We have that $\bar{a}, \bar{f}$ depend on $(t,x,u,Du)$ and $\bar{w}$ depends
linearly on $D^2 u$.

For technical reasons, we assume $s > n/2 +
2$ for the differentiability index $s$. This is one more degree of smoothness
than one would normally expect to require for a solution of a non-linear wave equation using energy estimates.
However, this assumption reflects the use of $v=\del_t u$ as the main variable in Koch's approach \cite{koch},
which has one less degree of differentiability compared to $u$.
The stated result for initial data with $s > n/2 + 1$ is recovered by
a smoothing and a limit argument, see the discussion in \cite[p. 33]{koch}.

Next, we let
$Y_{\tau,R}$ be the subset of
$$
H_\tau = \cap_{1 \leq i \leq s+1} W^{i,\infty} ([0,\tau],W^{s+1-i}(\Omega))
$$
of functions $v$ that satisfy $\partial_t^i v(0) = u_i$, where $u_i$ are the formal
time derivatives of $u$ for $0\leq i \leq s$ at $t=0$, and
$||v||_{H_\tau} \leq R$. By choosing $R$ sufficiently large, we can make sure
this set is non-empty.

The construction of solutions for the system (\ref{eq:koch:(1.1)}) makes use
of a standard fixed point argument, where one proves boundedness in a high norm and
contraction in a low norm. The high norm in this case is $W^{s+1}$, which for the linearized
(time-differen\-tiated) system corresponds to $W^s$.
The low norm for the time-differentiated system
is $W^2$.

This type of argument has been carried out for a
quasi-linear elliptic-hyperbolic system with no boundary conditions
in \cite{AML}. The difference between that system and the present situation
is that we have neumann-type boundary conditions, and the system has symbol
depending on $Du$, i.e. it is fully non-linear.

In the rest of this section, we consider the details of the contraction
estimate and the continuation property. The proof of the continuous
dependence given in \cite{koch} can be readily adapted to the present case
with the details given below.

\subsubsection{Contraction estimate} \label{sec:contract}
Define $J \in C(Y_{\tau,R}, H_\tau)$ as the map $v \mapsto u$, where $u$
solves the linear system\footnote{A typo in the boundary condition in \cite[eq. (3.2)]{koch}  is
  corrected here}
\begin{subequations} \label{eq:koch:(3.2)}
\begin{align}
\partial_{x_i} ( \bar{a}^{ik} (v) \partial^2_{x_k,t} u) &= \bar{w} (v) +
\partial_{x_i} \bar{f}^i & \text{ in } \Omega_T, \\
v_\alpha (\bar{a}^{\alpha k} (v) \partial^2_{x_k,t} u - \bar{f}^\alpha (v) )
= h^k(v) \partial^2_{x_k,t} u & \text{ in } \Gamma_T, \\
\partial_t u = u_1, \quad \partial_t^2 u = u_2, & \text{ in } \{0\} \times
\Omega
\end{align}
\end{subequations}
where we have denoted $a(v) = a(t,x,v,Dv)$ etc.

For the case of the elastic system, we again have
$$
\bar{a}^{ik} = a^{ik}, \quad \bar{w} = \partial_t w, \quad \bar{f} = 0, \quad
h = 0,
$$
and in particular,
$$
\bar{w} = D_u w . \partial_t u + D_{D_xu} w . D_x \partial_t u
$$
where we have used $D$ to denote the Frechet derivative. There is no explicit
$(t,x)$ dependence for this case.

Next, we observe that
$$
\partial_t w \in G^{s-1}, \quad \partial_t^{s+1} w \in F^0
$$
by Assumption $w$. From this, we see that for large $R$,
the image of $J$ lies in $Y$ provided $\tau$ is chosen small
enough.

We have
$$
Y \subset C^3([0,\tau],L^2 ) \cap C^2 ([0,\tau],W^1)
\cap C^1([0,\tau],W^2 )
$$
One shows that the set $Y$ is compact in the topology of the space defined by
the right hand side of this expression.

Let $u_0 = J(v_0)$, $u_1 = J(v_1)$. In order to derive the system solved by
$u_1 - u_0$, we let
$Z(u) = \partial_{x_i} \bar{a}^{ik} (t,x,u,Du) \partial^2_{x_k,t}
u$, and note that $Z(u_1) - Z(u_0)= \int_0^1 \frac{d}{d\lambda} Z(u_0 + \lambda (u_1 - u_0))$. Expanding this out gives
\begin{align*}
Z(u_1) - Z(u_0) &= \partial_{x_i} ( [\int_0^1 \bar{a}^{ik}_\lambda d\lambda]
\partial^2_{x_k,t} (u_1 - u_0) \\
&\quad + \partial_{x_i} ( \int_0^1 [\ta^{ik}_{\lambda,u} . (u_1 - u_0)] \partial^2_{x_k,t} (u_0
+ \lambda (u_1 - u_0)) )  d\lambda \\
&\quad + \partial_{x_i} ( \int_0^1 [\ta^{ik}_{\lambda, Du} . (Du_1 - Du_0)] \partial^2_{x_k,t} (u_0
+ \lambda (u_1 - u_0)) )  d\lambda
\end{align*}
where\footnote{A typo in
  \cite[p. 32]{koch} is corrected here.}
\begin{align*}
\bar{a}^{ik}_\lambda &=\bar{a}^{ik} (t,x,u_0 + \lambda(u_1 - u_0), D u_0 +
\lambda (D u_1 - D u_0)), \\
\bar{a}^{ik}_{\lambda,u} &= \frac{\partial \bar{a}}{\partial u}
(t,x,u_0 + \lambda(u_1 - u_0), D u_0 +
\lambda (D u_1 - D u_0)), \\
\bar{a}^{ik}_{\lambda,Du} &= \frac{\partial \bar{a}}{\partial Du}
(t,x,u_0 + \lambda(u_1 - u_0), D u_0 +
\lambda (D u_1 - D u_0)).
\end{align*}
Now, let
\begin{align*}
\ta^{ik} &= \int_0^1 \bar{a}^{ik}_\lambda d\lambda \\
\tw &= \bar{w}(v_1) - \bar{w}(v_0) \\
&\quad - \left [
\partial_{x_i} ( \int_0^1 [\bar{a}^{ik}_{\lambda,u} . (u_1 - u_0)] \partial^2_{x_k,t} (u_0
+ \lambda (u_1 - u_0)) )  d\lambda \right . \\
&\quad \left.  + \partial_{x_i} ( \int_0^1 [\bar{a}^{ik}_{\lambda, Du} . (Du_1 - Du_0)] \partial^2_{x_k,t} (u_0
+ \lambda (u_1 - u_0)) )  d\lambda \right ]
\\
\tf &= \bar{f}(v_1) - \bar{f}(v_0).
\end{align*}
Then $u_1 - u_0$ solves\footnote{This corrects a typo in \cite[p. 32]{koch}.}
\begin{subequations}\label{eq:koch:p32}
\begin{align}
\partial_{x_i} ( \ta^{ik} \partial^2_{x_k,t} (u_1 -u_0) ) &= \tw +
\partial_{x_i} \tf^i, \\
v_\alpha ( \ta^{\alpha k} \partial^2_{x_k,t} (u_1 - u_0) - \tf^{\alpha} ) &= \th^{k}
\partial^2_{x_k,t} (u_1 - u_0) + \tg
\end{align}
\end{subequations}
where $\th, \tg$ can be calculated along the same lines as above.

In particular, for the elastic system, we may calculate $\tw$ using $\bar{w}
= \partial_t w$, $\bar{h} = \bar{f} = \bar{g} = 0$. Note that $\tg$ is
non-vanishing in general due to contributions from $\bar{a}$.

We have the estimate\footnote{This corrects a typo in \cite[p. 32]{koch}., which
  leads to an incorrect estimate for $\partial_t (u_1 - u_0)$.}
\begin{equation}\label{eq:diffest}
|| \partial_t \tw(t) ||_{L^2} + ||\partial_t \tf ||_{E^1}(t) + || \partial_t
\tg ||_{E^1}(t) \leq c || \partial_t ( v_1 - v_0 ) ||_{E^2}
\end{equation}
where we made use of
$$
\tw = \bar{w}(v_1) - \bar{w}(v_0) + \text{terms involving $\bar{a}$}.
$$
As discussed above, $\bar{w}$ can be estimated given estimates for
$\partial_t w$, and hence $\partial_t \tw$ can be estimate in terms of
$\partial^2_t (w(v_1) - w(v_0)$. The terms involving $\bar{a}$ can be
estimated in terms of $u_1 - u_0$ and hence can be absorbed when applying
Gronwall.

From Assumption $w$, (\ref{eq:w-lip}), we have
$$
||\partial^2_t ( w(v_1) - w(v_0)) ||_{L^2} \leq c || \partial_t ( v_1 - v_0) ||_{E^2}.
$$
The required contraction estimate is obtained by applying the estimate for the
linear system given in \cite[theorem
  2.4]{koch}, as discussed in section \ref{sec:linear}, to the system
(\ref{eq:koch:p32}).
One checks that after suitable modifications, cf. section \ref{sec:linear},
the assumptions of \cite[theorem 2.4]{koch} holds for this
system, and this provides the needed contraction estimates.

Applying the inequality (\ref{eq:koch:(2.11):gen}) with $s=1$
to the system (\ref{eq:koch:p32}), we get the inequality
\begin{align*}
||\partial_t (u_1 - u_0) ||_{E^2} &\leq c \bigg{(} ||u_1(0) - u_0(0)||_{E^2}
+ ||\tw||_{G^0} + ||\tf||_{G^1} + ||\tg||_{G^1} \\
&\quad +
\int_0^t || \partial_t \tw (\sigma)
||_{L^2} + || \partial_t \tf^i ||_{E^1} + || \partial_t \tg (\sigma)||_{E^1}
d\sigma \bigg{)}
\\
\intertext{use that $u_1, u_0$ have the same initial data, and that
$\tw, \tf, \tg$ vanish at $t=0$}
&\leq c \left ( \int_0^t || \partial_t \tw (\sigma)
||_{L^2} + || \partial_t \tf^i ||_{E^1} + || \partial_t \tg (\sigma)||_{E^1}
d\sigma \right )  \\
\intertext{use (\ref{eq:diffest})}
&\leq c ||\partial_t (v_1 - v_0)||_{E^2}
\end{align*}
where we made use of the fact that $\tw,\tf,\tg$ all vanish at $t=0$.

\subsubsection{Continuation principle}
We next consider the proof of the continuation principle. Suppose the graph
of $(u, Du)$ lies in a compact set $U$. We need an estimate
of the following form, cf. \cite[eq. (3.3)]{koch}
\begin{equation}\label{eq:koch:(3.3)}
||u(t)||_{E^{s+1}} \leq \tilde{c} \left ( U, \int_0^t ||D^2 u(\tau)||_{L^\infty}
d\tau \right ) ( 1 + ||u_0||_{W^{s+1}} + ||u_1||_{W^s} ).
\end{equation}
This estimate is proved by applying operators $D^s_P$ of order $s$, tangent
to $\Gamma_\tau$, to both sides of the equation (\ref{eq:koch:(1.1)}),
and applying the estimate for
the linear system.
Let $s > n/2 + 1$, and let $u \in G^{s+1}$ be a solution of the system
(\ref{eq:koch:(1.1)}). We then have $u \in \cap_{0 \leq j \leq s+1}
C^j([0,T],W^{s+1-j} )$.
One finds, cf. \cite[p. 34]{koch},
that $D^s_P u$ solves an equation of the form
\begin{subequations}\label{eq:koch:p34}
\begin{align}
\partial_{x_i} (\bar{a}^{ik} \partial_{x^k} D^s_P u) &= \hw + \partial_{x_i}
\hf^i & \text{ in } \Omega_T, \\
v_i (\bar{a}^{ik} \partial_{x_k} D^s_P u - \hf^i) &= h^{sk} \partial_{x_k}
D^s_P u & \text{ in } \Gamma_T, \\
D^s_P u(0) & = u_1^s, \quad \partial_t D^s_P u = u_2^s
\end{align}
\end{subequations}
where $u_1^s, u_2^s$ are obtained by formal calculations from the initial
data $u_0, u_1$.
Here a transformation which absorbs the terms $g$ and $h^{uk}$ has been
applied, along the lines discussed in section \ref{sec:linear}, see
\cite[p.34]{koch} for details.

The basic energy estimate for systems of this type, see \cite[theorem
  2.2]{koch}, gives an estimate of the form\footnote{The norms $||\cdot||_2$
  in \cite[p. 35]{koch} should be $||\cdot ||_{L^2}$.}
\begin{equation}\label{eq:koch:p35}
||u||_{E^{s+1}} \leq c \left ( || u ||_{E^{s+1}}(0) + \int_0^t
||\hw||_{L^2}(\tau)
+ ||\partial_t \hf^i||_{L^2}(\tau)  d\tau \right )
\end{equation}
As shown in \cite{koch}, the $L^2$ norms in the right hand side of
(\ref{eq:koch:p35}) that involve local expressions can be estimated at a fixed
time in terms
of
$$
(1+ ||D^2 u||_{L^\infty} ) ( 1 + ||u(t)||_{E^{s+1}} )
$$
with a constant depending on $||Du||_{L^\infty}$ as well as the coercivity
constants $\kappa,\mu$. The term $\hw$ contains $D^s_P w$, and thus we
need the nonlocal term $w$ to satisfy at a fixed time
an estimate of precisely this form,
namely
$$
||D^s_P w||_{L^2} \leq c ( 1 + ||D^2 u||_{L^\infty}) ( 1 + ||u||_{E^{s+1}})
$$
which we can state as
$$
||w||_{E^s} \leq c ( 1 + ||D^2 u||_{L^\infty}) ( 1 + ||u||_{E^{s+1}} ).
$$
This estimate holds by Assumption $w$.

\subsection{Application to the elastic system}\label{locexist}


We are now ready to apply the local existence theorem \ref{thm:koch:1.1} to our system \eqref{evolve:1}-\eqref{evolve:3}.
\begin{thm} \label{lethm}
Suppose $s\in \Zbb_{\geq 3}$ $1<p<\infty$, and let $(\phi^i|_{t=0},\del_t|_{t=0}\phi^i)=(\phi^i_0,\phi^i_1)\in \Oct^{s+2,p}\times \Oct^{s+1,p} \subset W^{s+2}(\Bc)\times W^{s+1}(\Bc)$ be the
initial data from Corollary \ref{ccor}. Then there exists a $t_0>0$ and a unique classical solution $\phi^i\in C^2(\Bc_{t_0}\cup \del\Bc_{t_0})$
of \eqref{evolve:1}-\eqref{evolve:3} with $D^\sigma\phi^i(t) \in L^2(\Bc)$ for all $0\leq |\sigma| \leq s+1$ and $0\leq t < t_0$.
\end{thm}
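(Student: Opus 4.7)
The strategy is to apply the generalized Koch-type local existence theorem~\ref{thm:koch:1.1} from Section~\ref{sec:kochproof} to our system \eqref{evolve:1}--\eqref{evolve:3}. The dictionary is: take $u=\phi$, $N=n=3$, $\Omega=\mathcal B$, $F^{iA}(\partial\phi)=\bar\tau^{iA}(\partial\phi)$, boundary data $g=0$ (the zero-traction condition \eqref{evolve:3}), $h=0$, and non-local forcing
\[
w^i[\phi] \;=\; -\epsilon^2\,\delta^{ij}f^A_j\,\partial_A\bar U(\phi) \;=\; \epsilon^2\Lambda^i(\phi),
\]
with $\Lambda$ as in Corollary~\ref{newtB}. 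The principal symbol $a^{iAjB}(\partial\phi)=\bar L^{iA}{}_j{}^B(\partial\phi)$ defined by \eqref{Lbdef} then coincides at $\phi=\psi_0$ with the tensor $a^{iAjB}$ of \eqref{adef}.

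I would then verify Assumptions 1--5 of~\cite{koch}. Smoothness of $\bar\tau^{iA}$ in $\partial\phi$ on a neighborhood of $\partial\psi_0$ gives Assumption~1; since $s\ge 3$, the embedding $W^{s+2,p}(\mathcal B)\hookrightarrow C^1(\mathcal B)$ ensures this smoothness persists on $\tilde{\mathcal O}^{s+2,p}$. The symmetries \eqref{Lbprop1} together with the strong ellipticity \eqref{aprop3} and the coerciveness \eqref{aprop4} are precisely Koch's Assumptions~2 and~3 at $\psi_0$; continuity of $\bar L^{iA}{}_j{}^B$ in $\partial\phi$ lets us extend them to a $W^{s+2,p}$-neighborhood of $\psi_0$ by shrinking $\tilde{\mathcal O}^{s+2,p}$ if necessary. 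Assumption~4 (positivity of the time component $a^{00}$) is immediate from the form $-\partial_t^2\phi^i$ of the temporal principal part in \eqref{evolve:1}. Finally, Assumption~5, compatibility of the initial data to order $s$, is exactly the conclusion of Corollary~\ref{ccor}.

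The substantive point is Assumption~$w$ for the non-local gravitational term. Corollary~\ref{newtB} shows that $\Lambda\colon\tilde{\mathcal O}^{s+2,p}\to W^{s+1,p}(\mathcal B,\mathbb R^3)$ is analytic and, crucially, that $D\Lambda(\phi)$ extends to a bounded operator $W^{k,p}(\mathcal B)\to W^{k-1,p}(\mathcal B)$ for every $1\le k\le s+1$. Formal differentiation in $t$ via the chain rule gives
\[
\partial_t^i\Lambda(\phi) \;=\; D\Lambda(\phi)\cdot\partial_t^i\phi \;+\; \mathcal R_i\bigl(\phi,\partial_t\phi,\ldots,\partial_t^{i-1}\phi\bigr),
\]
where $\mathcal R_i$ is a sum of multilinear terms in the lower-order time derivatives with coefficients given by higher Fr\'echet derivatives of $\Lambda$. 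For $u\in E^{s+1}$ one has $\partial_t^i\phi\in W^{s+2-i,p}$, and choosing $k=s+2-i\in[2,s+1]$ (for $1\le i\le s$) places the top term $D\Lambda(\phi)\cdot\partial_t^i\phi$ in $W^{s+1-i,p}$. Combined with the Sobolev embedding $W^{s+1-j,p}\hookrightarrow L^\infty$ for $j\le s-1$ (using $s+1>n/p+1$) and standard Moser-type multiplication estimates on the remainder $\mathcal R_i$, this yields the three required properties: the regularity $\Lambda(\phi)\in\cap_{0\le j\le s}C^j([0,T],W^{s-j,p})$, the Lipschitz bound \eqref{eq:w-lip}, and the Moser-type estimate $\|w\|_{E^s}\le c(1+\|D^2u\|_{L^\infty})(1+\|u\|_{E^{s+1}})$.

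The main obstacle is thus the non-locality of $w$, which forces the above chain-rule argument to remain sharp at every intermediate Sobolev regularity; this is precisely why item~\ref{point:iv} of Theorem~\ref{newtA}, transferred to $\Lambda$ in Corollary~\ref{newtB}, was set up to give boundedness of $D\bar U(\phi)$ (and hence $D\Lambda(\phi)$) on $W^{k,p}(\mathcal B)$ for all $1\le k\le s+1$ rather than only at the top level. Once Assumption~$w$ is in hand, Theorem~\ref{thm:koch:1.1} delivers a unique classical solution $\phi\in C^2(\mathcal B_{t_0}\cup\partial\mathcal B_{t_0})$ with $D^\sigma\phi(t)\in L^2(\mathcal B)$ for $|\sigma|\le s+1$ on some maximal interval $[0,t_0)$, which is the content of Theorem~\ref{lethm}.
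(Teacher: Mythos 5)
Your proposal is correct and follows essentially the same route as the paper: the paper's proof likewise consists of observing that Corollary \ref{newtB} gives Assumption $w$ for the non-local term $\Lambda$, that Corollary \ref{ccor} supplies data satisfying the compatibility conditions to order $s$, and then invoking Theorem \ref{thm:koch:1.1}. Your write-up simply fills in more of the verification (the dictionary with Koch's assumptions and the chain-rule argument behind Assumption $w$) than the paper's very brief proof does.
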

\begin{proof} First, we observe by Corollary \ref{newtB} that non-local function $\Lambda^i(\phi) = -\delta^{ij}f^A_i\del_A \Ub(\phi)$ satisfies \emph{Assumption w}
of section \ref{sec:setup} for $\phi^i$ in the open set $\Oct^{s+1} \subset W^{s+1}(\Bc)$. Since the initial data
\eqn{lethem.1}{
(\phi^i|_{t=0},\del_t|_{t=0}\phi^i)=(\phi^i_0,\phi^i_1)\in \Oct^{s+2,p}\times \Oct^{s+1,p} \subset \Oct^{s+2,2}\times \Oct^{s+1,2} \subset W^{s+2}(\Bc)\times W^{s+1}(\Bc)
}
 from Corollary \ref{ccor} satisfies the compatibility conditions to order $s$, the proof then follows directly from
 theorem \ref{thm:koch:1.1}.
\end{proof}

\subsection*{Acknowledgements}
Part of this work was completed during visits of the authors T.A.O. and B.G.S.
to the Albert Einstein Institute. We are grateful to the Institute
its support and hospitality during these visits.

\appendix

\sect{sec:prelim}{Function Spaces}


\subsect{Sobspace}{$W^{k,p}$ spaces}
Give a finite dimensional vector space $V$, an open subset $\Omega \subset \Rbb^3$
with a $C^\infty$ boundary, $k\in \Zbb$, and $1\leq p\leq \infty$, we let $W^{k,p}(\Omega,V)$ denote the
standard Sobolev space for maps $u : \Omega \rightarrow V$. If $V= \Rbb$, then we will just write $W^{k,p}(\Bc)$, while
if $p=2$ we set $W^k(\Omega,V)=W^{k,2}(\Omega,V)$.

For these spaces, we recall the following  results:
\begin{itemize}
\item[(i)] Differentiation
\leqn{Sobdiff}{
\del_A = \frac{\del\;\;}{\del X^A} \; : \; W^{k,p}(\Omega) \longrightarrow W^{k-1,p}(\Omega)  \qquad A=1,2,3
}
defines a continuous linear map.
\item[(ii)] (Multiplication Inequality) If $1\leq p < \infty$, $k_1+k_2>0$, $k_1, k_2\geq k_3$, $k_3 < k_1+k_2 - 3/p$,
then there exists a $C>0$ such that
\leqn{mlem}{
\norm{uv}_{W^{k_3,p}(\Omega)} \leq C \norm{u}_{W^{k_1,p}(\Omega)}\norm{v}_{W^{k_2,p}(\Omega)}
}
for all $u\in W^{k_1,p}(\Omega)$ and $v \in W^{k_2,p}(\Omega)$.
\item[(iii)] Letting $B^{k,p}(\del \Omega)$ to denote the Besov spaces on the boundary $\del \Omega$, the
trace map
\leqn{trace}{
W^{k,p}(\Omega) \ni u \longmapsto \tr_{\del\Omega} u \in B^{k-1/p,p}(\del \Omega)
}
is well defined and continuous for $1<p<\infty$, and $k-1/p>0$ (cf. theorem 7.70 in \cite{Adams}).
\end{itemize}

\subsect{WSobspace}{$W^{k,p}_\delta$ spaces}
For $k\in \Zbb_{\geq 0}$, $1\leq p \leq \infty$, and $\delta \in \Rbb$, we use $W^{k,p}_\delta(\Rbb^3,V)$ to denote
the weighted Sobolev spaces for maps $u: \Rbb^3 \rightarrow V$ as defined in \cite{Bart86}. For open sets $\Omega \subset \Rbb^3$
for which $\Rbb^3\setminus \Omega$ is bounded, we denote the restriction of the weighted spaces to these
subsets by $W^{k,p}_\delta(\Omega,V)$.
Following \cite{Bart86}, the negative index spaces $W^{-k,p}_\delta(\Rbb^3)$ for $k\in \Zbb_{>0}$, $1<p<\infty$,
and $\delta \in \Rbb$ are defined
by duality.

We recall the following facts about the weighted Sobolev spaces:
\begin{itemize}
\item[(i)] Differentiation
\leqn{wSobdiff}{
\del_A  : W^{k,p}_\delta(\Omega) \longrightarrow W^{k-1,p}_{\delta-1}(\Omega)
}
defines a continuous linear map.
\item[(ii)] (Sobolev Inequality) If $k>3/p$, then there exists a $C>0$ such that
\leqn{wSob}{
\norm{u}_{L^\infty_\delta(\Omega,V)} \leq C\norm{u}_{W^{k,p}_\delta(\Omega,V)}
}
for all $u\in W^{k,p}_\delta(\Rbb^3,V)$. Moreover, $u\in C^0_\delta(\Rbb^3,V)$ and  $u(X)=\text{o}(|X|^\delta)$ as\footnote{Here $|X|=\sqrt{\delta_{AB}X^A X^B}$.} $|X|\rightarrow \infty$.
(Se lemma A.7 in \cite{Oli06}.)
\item[(iii)] (Multiplication Inequality) If $1\leq p < \infty$,  $k_1+k_2>0$, $k_1,k_2\geq k_3$, $k_3 < k_1+k_2 - 3/p$, and $\delta_1+\delta_2 \leq \delta_3$,
then there exists a constant $C>0$ such that \leqn{wmlem}{
\norm{uv}_{W^{k_3,p}_{\delta_3}(\Omega)} \leq C \norm{u}_{W^{k_1,p}_{\delta_1}(\Omega)}\norm{v}_{W^{k_2,p}_{\delta_2}(\Omega)}
}
for all $u\in W^{k_1,p}_{\delta_1}(\Omega)$ and $v \in W^{k_2,p}_{\delta_2}(\Omega)$. (See lemma A.8 in \cite{Oli06}.)

\item[(iv)] For $1<p<\infty$, $-1<\delta<0$, and $k\in \Rbb$, the Laplacian
\leqn{Deltaiso}{
\Delta = \delta^{AB}\del_A \del_B \: : \:  W^{k+2,p}_{\delta}(\Rbb^3) \longrightarrow W^{k,p}_{\delta-2}(\Rbb^3)
}
is a linear isomorphism with inverse given by the formula
\leqn{invLap}{
[\Delta^{-1}(u)](X) = -\frac{1}{4\pi}\int_{\Rbb^3} \frac{u(Y)}{|X-Y|} d^3 Y.
}
(The proof of this statement follows from using the fact that $\Delta \: : \:  W^{k+2,p}_{\delta}(\Rbb^3) \rightarrow W^{k,p}_{\delta-2}(\Rbb^3)$
is an isomorphism for $k\in \Zbb_{\geq 0}$ and $-1<\delta<0$ (cf. \cite[proposition 2.2]{Bart86}) together with duality and interpolation\footnote{As noted by Maxwell \cite{Maxwell}, the weighted spaces $W^{k,p}_\delta(\Rbb^3)$ and their fractional extensions
correspond to the spaces $h^k_{p,p(k-\delta)-3}$ in \cite{TriebelI,TriebelII} (cf. remark 2 and theorem 2 in \cite{TriebelII}). The
following  duality and interpolation results follow from remark 2, and theorems 2 and 3 in \cite{TriebelII}:
\begin{itemize}
\item[(a)] For $1<p<\infty$ and $q=p/(p-1)$, $W^{-k,q}_{-3-\delta}(\Rbb^3)$ is the dual of $W^{k,p}_\delta(\Rbb^3)$.
\item[(b)] If $1<p_1<\infty$, $1<p_2<\infty$ $0<\theta<1$, $k=(1-\theta)k_1 + \theta k_2$, $\delta = (1-\theta)\delta_1 + \theta\delta_2$, and $1/p=(1-\theta)/p_1 + \theta/ P_2$,
then $W^{k,p}_\delta(\Rbb^3)$ is the interpolation space $[W^{k_1,p}_{\delta_1}(\Rbb^3),W^{k_2,p}_{\delta_2}(\Rbb^3)]_\theta$.
\end{itemize}
}.)
\end{itemize}
For  bounded $\Omega\subset \Rbb^3$, we let
$\Ebb_{\Omega}$ denote an extension operator that satisfies
\leqn{ext2}{
\norm{\Ebb_{\Omega}(u)}_{W^{k,p}_{-10}(\Rbb^3,V)} \leq C_{k,p} \norm{u}_{W^{k,p}(\Omega,V)} \quad \forall \; u \in W^{k,p}(\Omega,V),
}
and
\leqn{ext1}{
\bigl[\del_A^\alpha \Ebb_{\Omega}(u)\bigr]|_{\Omega} = \del_A^\alpha u \quad \forall \; u \in  W^{k,p}(\Omega,V),\: |\alpha|\leq k.
}

\bibliographystyle{amsplain}
\bibliography{dyn6}

\end{document}